\documentclass{article}
\usepackage[a4paper, margin=1in]{geometry}
\usepackage{microtype}
\usepackage{graphicx}
\usepackage{subcaption}

\usepackage{hyperref}

\usepackage{amsmath}
\usepackage{amssymb}
\usepackage{mathtools}
\usepackage{amsthm}

\theoremstyle{plain}
\newtheorem{theorem}{Theorem}[section]

\newtheorem{lemma}[theorem]{Lemma}

\theoremstyle{definition}
\newtheorem{definition}[theorem]{Definition}

\theoremstyle{remark}

\newtheorem{example}[theorem]{Example}
\newtheorem{claim}[theorem]{Claim}

\newcommand{\abs}[1]{\left|#1\right|}

\DeclareMathOperator*{\E}{\mathbb{E}}
\newcommand{\cD}{\mathcal{D}}
\newcommand{\cX}{\mathcal{X}}

\newcommand{\cA}{\mathcal{A}}
\newcommand{\cF}{\mathcal{F}}

\newcommand{\cS}{\mathcal{S}}

\newcommand{\one}{\mathbf 1}

\usepackage[textsize=tiny]{todonotes}

\title{Fairness in Limited Resources Settings}

\begin{document}
\author{
Eitan Bachmat\\
Ben-Gurion University of the Negev \thanks{ebachmat@bgu.ac.il}
\and
Inbal Livni Navon\\
Ben-Gurion University of the Negev \thanks{inballn@bgu.ac.il}
}
\date{}
\maketitle

\begin{abstract} 
In recent years many important societal decisions are made by machine-learning algorithms, and many such important decisions have strict capacity limits, allowing resources to be allocated only to the highest utility individuals. For example, allocating physician appointments to the patients most likely to have some medical condition, or choosing which children will attend a special program. When performing such decisions, we consider both the prediction aspect of the decision and the resource allocation aspect.
In this work we focus on the fairness of the decisions in such settings. The fairness aspect here is critical as the resources are limited, and allocating the resources to one individual leaves less resources for others. When the decision involves prediction together with the resource allocation, there is a risk that information gaps between different populations will lead to a very unbalanced allocation of resources.

We address settings by adapting definitions from resource allocation schemes, identifying connections between the algorithmic fairness definitions and resource allocation ones, and examining the trade-offs between fairness and utility.
 We analyze the price of enforcing the different fairness definitions compared to a strictly utility-based optimization of the predictor, and show that it can be unbounded. 
 We introduce an adaptation of proportional fairness and show that it has a bounded price of fairness, indicating greater robustness, and propose a variant of equal opportunity that also has a bounded price of fairness.

\end{abstract}
\section{Introduction}
In recent years machine-learning algorithms have been used in many applications across various settings.  
Many important use cases of machine-learning algorithms arise in limited-resources settings, where our goal is to allocate a resource to the individuals that will benefit from it the most. 
In the medical domain, specialist appointments or hospital beds  are limited and allocated to the highest-risk patients. In education, special programs for excellent students often have strict capacity, and the slots are given to the highest potential students. Institutes that allocate funding, such as grant money or start-up investments, have a fixed budget, and their goal is to allocate the funds to the most promising projects. 

In all of these settings, the machine-learning decision involves two different considerations - risk prediction and resource allocation. The two considerations exist both when the machine-learning algorithm returns the allocation itself and when the algorithm is a risk predictor that we threshold to obtain an allocation. In the latter case, the allocation and risk prediction are clearly separated, but they both co-exist implicitly when the machine-learning algorithm returns the allocation itself. When discussing the quality of the prediction, we aim to discuss both options, where in the case of direct allocation function, the prediction is implicit. 

When the resource is very limited and the risk prediction has large uncertainty, gaps in the quality of the prediction across different populations can cause a large discrepancy in the allocation of resources across these groups. In the absence of fairness considerations, if we can better recognize the highest-risk individuals from a group $S_1$, the algorithm may allocate almost all of the resources to individuals that belong to $S_1$, see Figure \ref{fig:intro-single-threshold}. This can occur, for example, as a result of information gaps between different groups.  Such situations are often unwanted by the decision makers. If a city develops a special program for gifted children, we do not want all of them to come from the same school or neighborhood. 

Information gaps between individuals belonging to different groups appear in multiple important settings. In the medical setting, they can result from differences in the quality of medical history between individuals from different demographic groups \cite{zink2024race}. In education, they may result from the structure of the school system or students' behavior \cite{bird2025algorithms}. 
In sports, we have more information on the potential of older children compared to younger ones \cite{barth2024quantifying}. When allocating grants in research, newly appointed faculty members have larger uncertainty compared to older and more established ones. When decisions are made by a professional authority, they sometimes compensate for the lack of information. If we are to consider algorithmic decisions, we should formalize our goals in compensating information gaps.

\begin{figure}[t]
    \centering
    \includegraphics[width=\columnwidth]{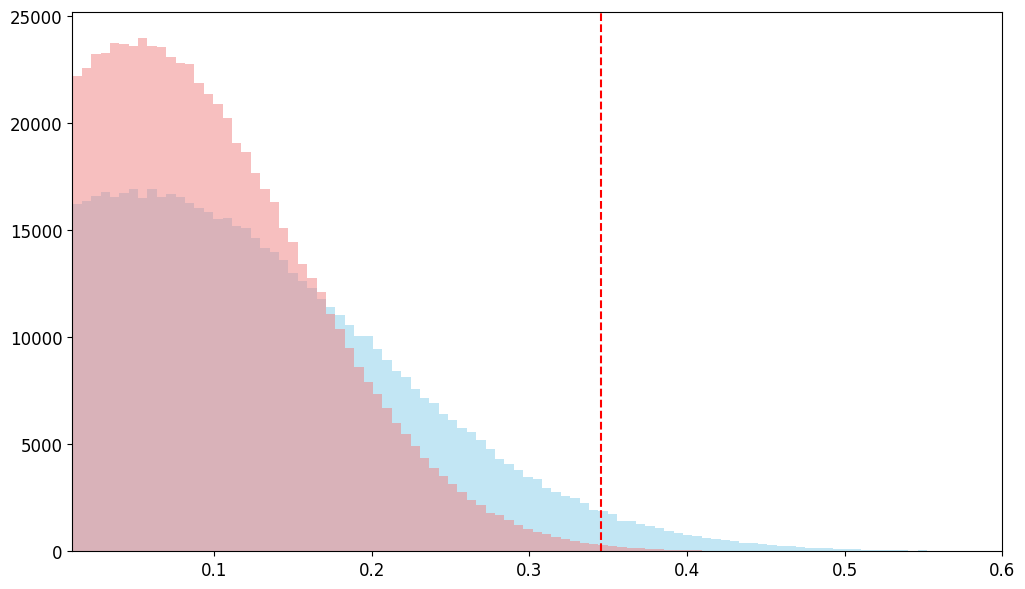}
    \caption{Simulating the output distribution of a predictor $p$ with a larger variance on $S_1$ (in blue) compared to $S_2$. The red line is a capacity limit of $1\%$ of the population, and it contains a large majority of individuals from $S_1$. See Appendix \ref{app:simulations} for more information.
    }
    \label{fig:intro-single-threshold}
\end{figure}

In the algorithmic fairness literature, there is a large body of work discussing bias in algorithms and proposing different methods for mitigating it. One specific approach that is aimed at compensating for information gaps between populations is the equal opportunity fairness requirement \cite{hardt2016equality}. This requirement was originally developed for the general setting, without a strict capacity. In our work, we show that it has limitations when applied to our setting of strict capacity constraints.

\subsection{Our contributions}
Our goal in this work is to develop a theoretical understanding of group fairness in the allocation of limited resources based on individual risk. We adapt resource-allocation fairness definitions to our setting of allocation based on prediction, and find connections between these definitions and algorithmic fairness definitions. We bound the price of fairness of different definitions, and suggest a new definition that admits a bounded price of fairness.

\paragraph{Adapting resource allocation definitions}
In Section \ref{sec:adapt} we adapt resource allocation fairness definitions to the setting of allocating resources to individuals based on their risk. In the resource allocation setting, we have a limited resource that we want to divide between different agents $i\in[m]$, each with their own utility function $u_i$ \cite{bertsimas2011price,bin2018fairness}. We adapt the definitions of max-min fairness and proportional fairness by identifying each protected demographic group with an agent, and defining a utility function for each such group. For each protected group $S$ that is identified with an agent $i$, we define $u_i$, i.e. the utility of an allocation $f$ is $u_i=\E[f(x)y|x\in S]$. For the proportional fairness definition, we also change the normalization to take into account that $S$ is a group of individuals and not a single agent.

\paragraph{Connecting group fairness definitions to resource allocation fairness} In Section \ref{sec:equiv} we show under which conditions equal opportunity and max-min fairness are identical. These conditions include a fully utilized setting, in which we use all of the available resources and each group can benefit from additional resources. In addition, we must have equal base rates, and our class of potential allocation functions has to be rich enough.

In the original definition of proportional fairness, the goal is to maximize the \emph{product} of all the utilities across agents $\Pi_i u_i$. In our adapted prediction setting, we show that requiring proportional fairness is a type of multiplicative regularization. Satisfying proportional fairness is equivalent to maximizing an expression with two terms, one of which is the distance to max-min fairness, expressed by the Kullback–Leibler divergence. The second is the log of the true positive count, and maximizing it maximizes the utility.

\paragraph{Finding the price of fairness} 
In the limited capacity scenario, it is important that the existing resources are distributed ethically, but we also care  about the best usage of the limited resources. In Section \ref{sec:price-of-fairness} we show that for many natural distributions with information gap, equal opportunity (with equal base rates) and max-min fairness allocate most resources to the group for which we have less information. That is, they are stricter than demographic parity, unlike in the standard setting considered in the equal opportunity paper \cite{hardt2016equality}. In some extreme cases, requiring equal opportunity or max-min fairness implies allocating nearly all resources to a group for which we have almost no information, causing a very large price of fairness. 

The utility regularization term in proportional fairness implies that we are less strict than max-min fairness, i.e. in the trade-off between fairness and maximizing the true positive count, leading to a higher true positive count. While there are some settings for which requiring proportional fairness implies most resources are given to the group we have less information on, we show that it cannot lead to diminishing true positive count. That is, the price of fairness is bounded for our adaptation of proportional fairness.

\paragraph{Suggesting new fairness definitions}
In the previous part, we noted that the price of fairness in limited-resources settings with information gaps can be high. In Section \ref{sec:new-def}, we suggest a generalization of equal opportunity that is less strict, such that the price of fairness is bounded.
Equal opportunity requires the same allocation to the group of individuals with $y=1$ in each group: equalizing $\E[f(x)|y=1,x\in S]$. In our new definition, we suggest instead considering only the \emph{achievable} true positives. That is, we consider only individuals with $y=1$ we could have obtained if we allocated all of our existing resources to this group. The idea behind our new definition is that all the other individuals with $y=1$ are out of our reach in any case, so they should not affect our fairness requirement. 
We can also generalize proportional fairness in the same way.

\subsection{Related works}
The problem of fair allocations of resources is a rich research area in many different settings, such as networks \cite{ogryczak2014fair}, facility locations \cite{marsh1994equity}, division of goods \cite{amanatidis2022fair}, operation research \cite{bertsimas2011price} and more. These very different settings still share many of the fairness definitions used, such as max-min fairness.  

In this work, we focused on fairness in resource allocation, where the allocation is based on individual risk prediction. There are works focusing on fairness in allocation combining other types of learning settings, such as predicting demand \cite{donahue2020fairness}, learning gradually revealed utilization \cite{elzayn2019fair}, federated learning \cite{zhangproportional}, networks \cite{wang2018machine} and additional areas.

In fairness in machine-learning, the goal is to ensure that the algorithm is not biased against any group or individual. Hardt et. al. \cite{hardt2016equality} suggest the definition of equal opportunity to compensate for information gaps between groups. For risk prediction, \cite{hebert2018multicalibration}, suggest extending calibration to many overlapping groups.  Gopalan et. al. \cite{gopalan2021omnipredictors} showed that a multicalibrated predictor can be post-processed to minimize different losses in an optimal way, and Hu et. al. \cite{hu2023omnipredictors} showed that it can be done under capacity and fairness constraints. Based on this framework, in parts of this work we model the allocation process as finding a multicalibrated predictor and post-processing it. 

There are different works focusing on the price of enforcing  different fairness definitions. In resource allocation \cite{bertsimas2011price}, it is usually modeled by comparing the utility with and without fairness constraints. In machine-learning \cite{menon2018cost,corbett2017algorithmic}, it can be modeled both by the harm to the accuracy, or by defining a utility function, as we do in this work. 

When we use resource allocation fairness definitions such as max-min and proportional fairness, we model the utility as the true positive count. For a given capacity and base rates, maximizing the true positive count is equal to minimizing the loss, although our definitions of max-min and proportional fairness are not equivalent to loss equality. Modeling the utility as the loss is done in the literature, for example in \cite{donini2018empirical,diana2021minimax,zhangproportional}.

\subsection{Open questions and discussion}
Our work opens many new research directions, both theoretical and applied. In this work, we suggest new fairness definitions, both adapted from resource allocations (proportional fairness) and new (achievable equal opportunity). We theoretically bound the price of fairness of existing and new definitions under minimal assumptions on the predictor distributions. An interesting follow-up work is to analyze the price of fairness for specific types of predictor distributions, which can be done both theoretically and experimentally. In our work, we showed that proportional fairness is in fact a type of multiplicative regularization. While multiplicative regularization terms exists in some settings, they are not often used, and it is a very interesting open question to compare the standard additive regularization to the multiplicative one.

\section{Preliminaries and notations}
We focus on the setting of prediction for resource allocation, with limited resources. We assume that there is a distribution over the population $\cD\subseteq\cX\times \{0,1\}$, where $x\in\cX$ is the features of an individual and $y\in\{0,1\}$ is an unknown binary output. The resource allocation goal is to allocate the resource to the individuals most likely to have $y=1$. We assume that our set of individuals $\cX$ is divided into $m$ disjoint groups $S_1,\ldots,S_m$. We denote by $g:\cX\rightarrow[m]$ the function that returns for every $x$ the group it belongs to.

In this work we focus on a limited-resources setting, where we want to find the optimal set of individuals to allocate the resources to, taking into account group fairness consideration. We allow for randomized allocations. A function $f:\cX\rightarrow[0,1]$ is a randomized allocation, where we treat values $f(x)=r\in(0,1)$ as allocating the resource to $x$ with probability $r$. We formally define the capacity and capacity within a group for such randomized allocations.
\begin{definition}
    A function $f:\cX\rightarrow [0,1]$ has capacity $c$ under distribution $\cD$ if $\E_{(x,y)\sim\cD}[f(x)]=c$. For a group $S_i\subset\cX$ the function $f$ has capacity $c$ on $S_i$ if $\E_{(x,y)\sim\cD}[f(x)|x\in S_i]=c$.
\end{definition}

In order to discuss group fairness in the setting of information gaps, we must first formally define it. We do so by breaking the process of finding the optimal allocation into a two-step process, where in the first we learn a predictor $p:\cX\rightarrow[0,1]$ with the aim of predicting $y$, and in the second we decide on the optimal set of individuals by post-processing $p$. From \cite{hu2023omnipredictors}, a predictor that is multicalibrated on a set of functions $\cF$ can be post-processed to satisfy group fairness constraints with loss as small as the best of $\cF$ (up to the calibration error). Therefore, breaking the decision process into two steps can be done in a way that does not harm the performance.

In this formalization of a two-step decision process, the information gap is modeled as the \emph{tail dominance} of $p$ on $S_1$ comparing to $p$ on $S_2$. The tail dominance mean that we can recognize more high-risk individuals in group $S_1$ comparing to $S_2$.
\begin{definition}\label{def:dominance}
    Let $p:\cX\rightarrow[0,1]$ be a predictor, and let $S_1,S_2\subset\cX$ be two disjoint groups.
    Group $S_1$ has $t_0$-tail dominance over group $S_2$ under distribution $\cD$ and predictor $p$ if for all $r\geq t_0$,
        \begin{align*}
        \Pr_{(x,y)\sim\cD}[p(x)> r|x\in S_1]\geq \Pr_{(x,y)\sim\cD}[p(x)> r|x\in S_2].
    \end{align*} 
\end{definition}

Finding the optimal allocation by post-processing a given predictor is formalized by applying a randomized threshold function to the predictor $p$. 
\begin{definition}
    For a set of $m$ groups, a randomized group threshold function $\tau$ is a function $\tau:[0,1]\times[m]\rightarrow[0,1]$, such that for each $i\in[m]$ there exist a threshold $t_i\in[0,1]$ and a value $\gamma_i\in[0,1]$ such that
    \begin{align*}
        \tau(r,i) = \begin{cases}
            1 \quad &r > t_i\\
            \gamma_i \quad &r=t_i\\
            0 \quad &r<t_i.
        \end{cases}
    \end{align*}
\end{definition}
For a predictor $p:\cX\rightarrow[0,1]$ and threshold function $\tau$, the allocation is defined by  $\forall x, f(x)=\tau(p(x),g(x))$. We note that we can similarly apply other post-processing functions $\tau$ that are not randomized threshold function.

As we are in a limited resources settings, we want to fully use all of our capacity. We show next that it is always possible when using randomized threshold functions.
\begin{claim}\label{claim:existence-threshold}
Let $\cD$ be a distribution over $\cX\times\{0,1\}$ with $\cX$ is divided into $m$ groups $S_1,\ldots,S_m$. Then, for every capacity $c\in[0,1]$ and group $i$ such that $\Pr_{x\sim\cD}[x\in S_i]>0$, there is a randomized group threshold function $\tau$ satisfying
$\E_{x\sim\cD}[\tau(p(x),i)|x\in S_i]=c$.
\end{claim}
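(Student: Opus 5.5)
We work with the pushforward of $\cD$ conditioned on $S_i$ under $p$. Since $\Pr[x\in S_i]>0$, the conditional law is well defined. Define the conditional survival function
\[
G(t)=\Pr_{x\sim\cD}[p(x)>t \mid x\in S_i], \qquad t\in[0,1],
\]
and the atom mass $a(t)=\Pr[p(x)=t\mid x\in S_i]$. For a threshold $t_i$ and randomization $\gamma_i$, the capacity on $S_i$ is then $G(t_i)+\gamma_i a(t_i)$. So the task reduces to finding $t_i\in[0,1]$ and $\gamma_i\in[0,1]$ with
\[
G(t_i)\le c\le G(t_i)+a(t_i).
\]
Once such a pair is found, set $\gamma_i=(c-G(t_i))/a(t_i)$ when $a(t_i)>0$, and take any $\gamma_i$ (say $0$) when $a(t_i)=0$, in which case $G(t_i)=c$. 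The thresholds for the other groups $j\neq i$ can be arbitrary, since only group $i$ enters the expectation.

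\textbf{Choice of threshold.} Set $t_i=\inf\{t\in[0,1]: G(t)\le c\}$. This set is nonempty because $G(1)=0\le c$. The function $G$ is nonincreasing and right-continuous, being a survival function.
\begin{itemize}
\item By right-continuity, $G(t_i)\le c$.
\item For $t<t_i$ we have $G(t)>c$. The left limit $\lim_{t\uparrow t_i}G(t)=\Pr[p(x)\ge t_i\mid x\in S_i]=G(t_i)+a(t_i)$, so $G(t_i)+a(t_i)\ge c$.
\end{itemize}

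\textbf{Boundary case.} The case $t_i=0$ has no left limit to use. Here I use that $p$ takes values in $[0,1]$, so $G(0)+a(0)=\Pr[p(x)\ge 0\mid x\in S_i]=1\ge c$ directly. The case $c=0$ is also covered: it gives $t_i=\inf\{t: G(t)=0\}$ with $\gamma_i=0$, and then the capacity is $G(t_i)+0=0$.

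The main delicate step is this left-limit/continuity argument, in particular the correct handling of the atom at $t_i$. Everything else is bookkeeping.
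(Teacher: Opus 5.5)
Your proof is correct and is essentially the paper's argument: your threshold $\inf\{t: G(t)\le c\}$ is exactly the paper's $t_i=q_i(1-c)$ (since $G=1-F_i$), and your $\gamma_i=(c-G(t_i))/a(t_i)$ coincides with the paper's $(c-1+F_i(t_i))/\Pr_{u\sim Q_i}[u=t_i]$. Your left-limit step, which gives $G(t_i)+a(t_i)\ge c$, supplies the justification for the bound $\Pr_{u\sim Q_i}[u<t_i]\le 1-c$ that the paper uses without proof. Your treatment of the atomless case, where $G(t_i)=c$, also makes explicit what the paper's final capacity computation leaves implicit.
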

The proof appears on Appendix \ref{app:def-proofs}. 
Thresholding a predictor is a specific type of post-processing of the predictor. In general, we can apply \emph{any} function $\tau:[0,1]\times[m]\rightarrow[0,1]$ to $p$. We focus on randomizes threshold functions as they take the highest risk individual from each grop.

Some of the proofs in the paper use quantile functions. A preliminaries section about quantile functions is on Appendix \ref{app:quant}. In appendix \ref{app:simulations} we have information about the graphs and simulations appearing in the paper.

\subsection{Loss minimization}
In most decision-making machine-learning setting, we optimize by minimizing the expected loss of a loss function. When the decision and outcome $y$ are both binary, the loss function is $\ell:\{0,1\}\times\{0,1\}\rightarrow[0,1]$. We define next the expected loss for a randomized assignment $f$. When $f$ is binary this is the same as $\E_{(x,y)\sim\cD}[\ell(f(x),y)]$.
\begin{definition}
    The expected loss of a function $f:\cX\rightarrow[0,1]$ under distribution $\cD$ and a binary loss function $\ell:\{0,1\}\times\{0,1\}\rightarrow[0,1]$ is $\E_{(x,y)\sim\cD}[f(x)\ell(1,y) + (1-f(x))\ell(0,y)]$.
\end{definition}

When we have full utilization, minimizing all standard binary loss functions $\ell:\{0,1\}\times\{0,1\}\rightarrow[0,1]$ are equivalent to maximizing the true positive count, defined next. We remark that this is not the true positive rate, as it is not normalized.
\begin{definition}
    The true positive count of a function $f:\cX\rightarrow[0,1]$ under a distribution $\cD$ is $\E_{(x,y)\sim\cD}[f(x)y]$.
\end{definition}

\begin{claim}\label{claim:loss-equiv}
Let $\cD$ be a distribution with $\Pr_{(x,y)\sim\cD}[y=1]\in(0,1)$. 
Then for every capacity $c\in[0,1]$, any family $\cF$ of functions $h:\cX\rightarrow[0,1]$ and binary loss $\ell$ with $\ell(0,0)=\ell(1,1)=0$, the function $f\in\cF$ minimizing the expected loss of $\ell$ and having capacity $c$ is the same as the function $f\in\cF$ maximizing the true positive count and having capacity $c$.
\end{claim}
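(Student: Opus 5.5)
The plan is to expand the expected loss of a randomized allocation $f$ with capacity $c$ and show that it is an affine function of the true positive count, with a constant that depends only on $\cD$, $c$ and $\ell$, and with a strictly negative slope. Write $a=\ell(1,0)$ and $b=\ell(0,1)$; all other entries of $\ell$ vanish by assumption. Then
\begin{align*}
\E_{(x,y)\sim\cD}\bigl[f(x)\ell(1,y)+(1-f(x))\ell(0,y)\bigr] = a\,\E[f(x)(1-y)] + b\,\E[(1-f(x))y].
\end{align*}

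The key step is to rewrite each term using the capacity constraint $\E[f(x)]=c$ and the base rate $\mu=\Pr[y=1]$. We have $\E[f(x)(1-y)] = c-\E[f(x)y]$ and $\E[(1-f(x))y]=\mu-\E[f(x)y]$. Hence the expected loss equals
\begin{align*}
ac+b\mu-(a+b)\,\E[f(x)y].
\end{align*}
For every $f\in\cF$ of capacity $c$, the quantity $ac+b\mu$ is the same constant. So, provided $a+b>0$, minimizing the loss over $\{f\in\cF:\E[f]=c\}$ is exactly the same as maximizing $\E[f(x)y]$ over that set. The two problems therefore have the same minimizers and maximizers, including ties.

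The only potential obstacle is the degenerate case $a+b=0$, that is, $\ell\equiv 0$. There the loss is constant, so every capacity-$c$ function in $\cF$ is a minimizer, and the statement must be read as allowing the maximizer of the true positive count among them. I would handle this either by noting that the claim implicitly concerns nontrivial loss functions (the ``standard'' losses mentioned before the claim), or by saying that the set of maximizers is contained in the (trivial) set of minimizers. The hypothesis $\Pr[y=1]\in(0,1)$ is not actually needed for this computation. I would remark that it only guarantees the problem is nondegenerate, since otherwise every capacity-$c$ function has the same true positive count.
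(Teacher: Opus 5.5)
Your proof is correct and follows the paper's argument: expand the loss, use the capacity constraint to eliminate the $y=0$ term, and conclude that the loss equals $\ell(1,0)c+\ell(0,1)\Pr[y=1]-(\ell(1,0)+\ell(0,1))\E[f(x)y]$, so only the true positive count depends on $f$. Your unconditional bookkeeping is slightly cleaner than the paper's, which conditions on $y=1$ and $y=0$ and has a typo $\ell(1,0)+\ell(1,0)$ in the slope, and your remark about the degenerate case $\ell\equiv 0$ covers a point the paper leaves implicit.
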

The proof appears in Appendix \ref{app:def-proofs}. In the rest of the paper, we model the utility as maximizing the true positive count. We chose the true positive count since it represent the total number of successful allocation of the resource.

\subsection{Group fairness definitions}
We define below some of the the basic group fairness definitions. We state the definitions below without errors for simplicity, although when enforcing the definitions there might be a small error. Our results also hold for the cases of small errors in enforcing the fairness definition (see Theorem \ref{thm:eo-diff-err} for a precise statement).

\begin{definition}[Demographic parity]
    A function $f:\cX\rightarrow[0,1]$ satisfies demographic parity with respect to a distribution $\cD$ and groups $S_1,\ldots,S_m$ if for all $i,j\in[m]$, \[\E_{(x,y)\sim\cD}[f(x)|x\in S_i]=\E_{(x,y)\sim\cD}[f(x)|x\in S_j].\]
\end{definition}

\begin{definition}[Equal opportunity \cite{hardt2016equality}]
    A function $f:\cX\rightarrow[0,1]$ satisfies equal opportunity with respect to a distribution $\cD$ and groups $S_1,\ldots,S_m$ if for all $i,j\in[m]$, \[
        \E_{(x,y)\sim\cD}[f(x)|x\in S_i,y=1]=\E_{(x,y)\sim\cD}[f(x)|x\in S_j,y=1].\]
\end{definition}

For a predictor $p$, one fairness requirement is calibration within groups. We emphasize that this definition is only applicable to a predictor $p$, where $p(x)$ represent the probability of $y=1$. For an allocating function $f$, $f(x)$ implies that we allocate the resource to $x$ with probability $r$. 
\begin{definition}[Group calibration]
    A predictor $p:\cX\rightarrow \mathbb{R}$ is $\alpha$-calibrated on groups $S_1,\ldots,S_m$ with respect to a distribution $\cD$ if for all $i\in [m]$,
        \begin{align*}
        \int_0^1 \left|\E_{(x,y)\sim\cD}[(y-r)\wedge \one(p(x)=r)|x\in S_i]\right|dr \leq\alpha
    \end{align*}
\end{definition}
It is also possible to require calibration with respect to overlapping groups \cite{hebert2018multicalibration} or to a set of functions.

When post-processing a predictor to satisfy a fairness requirement there are two options. In the first, we assume that the predictor correctly describes the distribution of $y$ and enforce the definition based on this assumption, i.e. on the simulated distribution defined next. If we know the distribution of $p(x)$ on the groups, then enforcing a group fairness requirement under this assumption does not require any new samples. The approach is to use new samples and enforce the definition by approximating the fairness requirement on these samples. If $p$ is not well-calibrated, then the first method can incur a large error. In this work we mostly assume that $p$ is calibrated, and therefore use the first approach using a simulated distribution.
\begin{definition}\label{def:simulated}
    Let $\cD$ be a distribution over pairs $x\in\cX,y\in\{0,1\}$ and $p:\cX\rightarrow[0,1]$ be a predictor. Then the simulated distribution $\cD_p$ over pairs $x,y$ is the distribution of picking $x\sim\cX$ and then $y\sim\textbf{Ber}(p(x))$.
\end{definition}

In this work we quantify the price of enforcing different fairness definitions. We discuss both additive and multiplicative price of fairness.
\begin{definition}
    Let $\cD$ be a distribution over pairs $\cX\times\{0,1\}$ and let $\cF$ be a class of functions. Given any fairness requirement, let $\cF'\subset\cF$ be the functions if $\cF$ satisfying a certain fairness definition. Then the multiplicative price of fairness is given by
    \[\frac{\max_{h\in\cF}\E_{(x,y)\sim\cD}[h(x)y]}{\max_{f\in\cF'}\E_{(x,y)\sim\cD}[f(x)y]}.\]
    
    The additive price of fairness is defined by the expression \[\max_{h\in\cF}\E_{(x,y)\sim\cD}[h(x)y]-\max_{f\in\cF'}\E_{(x,y)\sim\cD}[f(x)y].\]
\end{definition}

\section{Resource allocation fairness definitions in the prediction setting}\label{sec:adapt}
In this section we adapt classical resource allocation fairness definitions to our setting of prediction. Resource allocations fairness definitions assume that there are different agents $i\in[m]$, and each has a \emph{utility function} $u_i$ receiving as an input the resources and outputting the utility. In our case, we adapt the definitions by modeling each protected group $S_i\subset \cX$ as an agent, and defining $u_i$ as the expected true positive count on the group $u_i=\E_{(x,y)\sim\cD}[f(x) y|x\in S_i]$.  

We note that modeling the utility as the expected true positive count is a design choice we have made in this work, and different choices yield different fairness definitions. While  Claim \ref{claim:loss-equiv} shows that minimizing any standard binary loss is equivalent to maximizing the expected true positive count, this equivalence doesn't hold when we compare the expected loss of two groups with different base rates or different capacities. Therefore, our definitions are not equivalent to modeling the utility as loss in the fairness requirements.
We choose the expected true positive count as our utility because it better represents the gain we have in most limited-resources applications. That is, we care about how many patients we successfully diagnose, or how many talented children are admitted to the special program. 

The resource allocation fairness definitions are \emph{optimization definitions}. Rather than defining minimal requirements for fairness, satisfying the resource allocation fairness requirement is maximizing some functions of the utilities. As we are in the prediction setting, this mean that in order to satisfy our resource allocation we must pick the optimal allocation $f$ from a class of functions $\cF$, in a similar manner to loss minimization. As such, the resource allocation fairness definitions could be viewed as both satisfying a fairness requirement and minimizing a loss.

We state below the adaptation of max-min fairness, where we replace the utility with the true positive count of each group. 
\begin{definition}
    For a family of functions $\cF$, a function $f:\cX\rightarrow[0,1]$ satisfies max-min fairness with respect to a distribution $\cD$ and groups $\cS = \{S_1,\ldots,S_m\}$ if
    \begin{align*}
        f = \arg\max_{h\in\cF}\min_{i\in[m]}\left\{\E_{(x,y)\sim\cD}[h(x) y | x\in S_i]\right\}
    \end{align*}
\end{definition}
We also use an adaptation of proportional fairness. 
The standard proportional fairness definition handles a setting of different agents and their utility. In our setting, the agents are groups of individuals, which results in a normalization issue with the sizes of the groups. When we discuss fairness with respect to groups, if we take a group $S_i$ and split it into two random subsets, we do not want the fairness requirement to change. Therefore, we consider our proportional fairness definition as \emph{normalized proportional fairness}, adding a normalization factor $\Pr_{(x,y)\sim\cD}[x\in S_i]$ for each group $S_i$. We prove in Appendix \ref{app:def-proofs} that our normalized proportional fairness is invariant under group splitting.
\begin{definition}\label{def:norm-prop}
    For a family of functions $\cF$, a function $f:\cX\rightarrow[0,1]$ satisfy normalized proportional fairness with respect to a distribution $\cD$ and groups $\cS = \{S_1,\ldots,S_m\}$ if
    \begin{align*}
        f =&\arg\max_{h\in\cF}\mathbf{Prop}_{\cD,\cS}(h)\\=&\arg\max_{h\in\cF}\sum_{i\in[m]}\Pr_{(x,y)\sim\cD}[x\in S_i]\log\E_{(x,y)\sim\cD}[h(x) y | x\in S_i].
    \end{align*}
\end{definition}

\section{Connections between fairness definitions}\label{sec:equiv}
In this part we show the connections between the group fairness definitions from algorithmic fairness and the adapted resource allocation fairness definitions. 

\subsection{Equal opportunity and max-min fairness}
We characterize when equal opportunity and max-min fairness are the same. We start by characterizing sufficient conditions for max-min fairness to imply equal true positive count among all groups (i.e. equal utility in the original definition). There are specific settings for which it is known that max-min fairness implies equal utility, such as in \cite{donini2018empirical}
and in this work we introduce another such setting.

The first condition is that we are in a non-degenerate setting, for which any additional allocation to a group improves its expected true positive count. This is common when the resources are limited, as we cannot recognize all of the $y=1$ in our population with capacity $c$. We remark that non-degenerate condition also means that $c$ is smaller than any of the groups.
\begin{definition}\label{def:non-deg}
    A distribution $\cD\subset \cX\times\{0,1\}$ is non degenerate with respect to groups $S_1,\ldots,S_m$, capacity $c$ and class of functions $\cF$ if for every $f\in\cF$ with $\Pr_{(x,y)\sim\cD}[f(x)=1]\geq 1-c$ and every $i\in[m]$, we have $\E_{(x,y)\sim\cD}[y|f(x)=1, x\in S_i] > 0$.
\end{definition}

The second condition is that the class of functions in rich enough. A sufficient condition is that $\cF$ is closed under group post-processing operations. We remark that the set of post-processing of a predictor is closed under post-processing.
\begin{definition}\label{def:closed-post}
    A family of functions $\cF$ is closed under group post-processing for groups $S_1,\ldots,S_m$ if for any $f\in\cF$ and any function $\tau:[0,1]\times[m]\rightarrow[0,1]$, the function $h$ defined by $h(x)=\tau(f(x),g(x))$ is in $\cF$.
\end{definition}

\begin{claim}\label{claim:max-min-equal}
    For $c<0.5$, let $\cD$ be a non degenerate distribution with respect to groups $S_1,\ldots,S_m$, capacity $2c$ and function family $\cF$ that is closed under group post-processing. Then the function $f\in\cF$ satisfying max-min fairness with respect to the groups $S_1,\ldots S_m$ and having capacity $c$ satisfies for all $i,j\in[m]$: 
    \begin{align*}
        \E_{(x,y)\sim\cD}[f(x)  y | x\in S_i]=\E_{(x,y)\sim\cD}[f(x) y | x\in S_j].
    \end{align*}
\end{claim}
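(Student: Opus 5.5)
We argue by contradiction via a local exchange. Let $f\in\cF$ have capacity $c$ and satisfy max-min fairness, and write $u_i(f)=\E_{(x,y)\sim\cD}[f(x)y\mid x\in S_i]$. Suppose the utilities are not all equal. Let $I\subseteq[m]$ be the set of groups attaining the minimum $\mu=\min_i u_i(f)$, and let $J=[m]\setminus I$, which is nonempty. The plan is to move a small amount of capacity from the groups in $J$ to the groups in $I$. This is done by group post-processing, so the new function stays in $\cF$ by Definition~\ref{def:closed-post}. The move should keep total capacity $c$, raise every $u_i$ for $i\in I$, and keep every $u_j$ for $j\in J$ strictly above the new minimum. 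This yields $h\in\cF$ with $\min_i u_i(h)>\mu$, contradicting max-min optimality.

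\textbf{Removing capacity from $J$.} For each $j\in J$, use the post-processing $\tau(r,j)=(1-\epsilon)r$ and leave the other groups unchanged for now. This frees capacity $\epsilon\E[f(x)\one(x\in S_j)]$ and lowers $u_j$ by $\epsilon u_j(f)$. For small enough $\epsilon$, this keeps $u_j$ above $\mu+\delta$ for some fixed gap $\delta>0$.

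\textbf{Adding capacity to $I$.} For each $i\in I$, we need an increase in allocation on $S_i$ to strictly increase $u_i$; this is where non-degeneracy is used. Define the mixture $h(x)=f(x)+\eta(1-f(x))\one(x\in S_i)$. This is a group post-processing $\tau(r,i)=r+\eta(1-r)$, so $h\in\cF$. Its gain in utility is $\eta\,\E[(1-f(x))y\mid x\in S_i]$, and we must show this is positive. Here the capacity-$2c$ hypothesis and $c<1/2$ enter. Consider the auxiliary function $f'=\tau'(f(x),g(x))$ with $\tau'(r,\cdot)=\one(r<1)$ restricted appropriately, or more simply $1-f$ thresholded. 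We want a function in $\cF$ whose set $\{f'=1\}$ contains the region where $f<1$, has mass at least $1-2c$ (up to the relevant complement bookkeeping), and is handled by Definition~\ref{def:non-deg}. That gives $\E[y\mid f'(x)=1,x\in S_i]>0$, and hence $\E[(1-f)y\mid x\in S_i]>0$. I expect the main obstacle to be matching the parameters exactly: Definition~\ref{def:non-deg} is phrased in terms of $\Pr[f(x)=1]\ge 1-c$ with the capacity set to $2c$. I would first apply the definition to the indicator $\one(f(x)<1)$, or to $\one(f(x)<1/2)$ if $f$ is fractional. By Markov, $\Pr[f\ge 1/2]\le 2c$, so this set has mass at least $1-2c$. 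On it $1-f\ge 1/2$, so the positive conditional mean of $y$ transfers to $\E[(1-f)y\mid x\in S_i]>0$. This Markov step is presumably why $2c$ appears.

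\textbf{Balancing capacity.} Split the freed capacity among the groups in $I$, choosing each $\eta_i>0$ so that total capacity returns exactly to $c$; this is possible by continuity, with $\eta_i$ proportional to $\epsilon$. Then $u_i$ rises strictly for every $i\in I$, and for small $\epsilon$ each $u_j$ with $j\in J$ remains above $\mu$ plus the small increase. Hence the minimum strictly increases, which is the desired contradiction, so all $u_i(f)$ must be equal.
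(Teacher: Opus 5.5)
Your proof is correct and follows essentially the same route as the paper: you argue by contradiction, build a group post-processing of $f$ that moves capacity from the non-minimal groups to the minimal ones, and invoke non-degeneracy at capacity $2c$ through a Markov bound applied to an indicator post-processing of $f$. The differences are only in the bookkeeping. The paper subtracts a constant $\delta$ on $\{f\ge\delta\}$ within a single group $S_{i_{\max}}$, adds a constant $\epsilon$ on $\{f\le 1-\mu_{\min}\}$, and uses $\mu_{\min}<c<1/2$ for the Markov step. Your multiplicative scaling and mixture $r+\eta(1-r)$ keep values in $[0,1]$ automatically, so you could even apply non-degeneracy directly to $\one(f(x)<1)$, since $\Pr[f(x)=1]\le c$.
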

The proof appears in Appendix \ref{app:equiv-proofs}. In the proof, we show that if $f$ does not satisfy the condition, we can construct a function $h\in\cF$ that is a group post-processing of $f$ and has a larger max-min value. 

When we have equal base rates, i.e. $\E[y|x\in S_i]$ is the same in all groups, equal opportunity also requires equal expected true positive counts. Therefore, in this setting maximizing the true positive count while satisfying equal opportunity is equivalent to satisfying max-min fairness. 

\begin{lemma}\label{lem:eo-maxmin}
    For $c<0.5$, let $\cD$ be a not degenerate distribution with respect to groups $S_1,\ldots,S_m$, capacity $2c$ and function family $\cF$ that is closed under group post-processing. Let $\cF_{EO}\subseteq\cF$ be the set of functions in $\cF$ satisfying equal opportunity.
    If the groups have equal base rates, $\E_{(x,y)\sim\cD}[y|x\in S_i] = \E_{(x,y)\sim\cD}[y]$ for all $i\in[m]$, then
    \begin{align*}
    \arg\max_{h\in\cF \text{ with capacity  }c}\min_{i\in[m]}\left\{\E_{(x,y)\sim\cD}[h(x) y | x\in S_i]\right\} \\= \arg\max_{h\in\cF_{EO}\text{ with capacity  }c}\E_{(x,y)\sim\cD}[h(x) y].
    \end{align*}
\end{lemma}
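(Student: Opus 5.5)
The plan is to rewrite equal opportunity in terms of group true positive counts under equal base rates, and then use Claim \ref{claim:max-min-equal} to compare the two optimization problems. Write $b=\E_{(x,y)\sim\cD}[y]$ for the common base rate and $u_i(h)=\E_{(x,y)\sim\cD}[h(x)y\mid x\in S_i]$. Since
\[\E[h(x)\mid x\in S_i,y=1]=\frac{u_i(h)}{\Pr[y=1\mid x\in S_i]}=\frac{u_i(h)}{b},\]
a function $h$ satisfies equal opportunity if and only if $u_i(h)=u_j(h)$ for all $i,j$. Because the groups partition $\cX$, for every $h$ we have $\E[h(x)y]=\sum_i \Pr[x\in S_i]\,u_i(h)$. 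Hence for $h\in\cF_{EO}$, with common value $u(h)$, we get $\E[h(x)y]=u(h)=\min_i u_i(h)$.

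For the comparison, let $f$ be a max-min optimizer among capacity-$c$ functions in $\cF$. By Claim \ref{claim:max-min-equal}, whose hypotheses are exactly those of the lemma, $f$ equalizes all $u_i$, so $f\in\cF_{EO}$. Now take any $h\in\cF_{EO}$ with capacity $c$. Then
\[\E[h(x)y]=\min_i u_i(h)\le \min_i u_i(f)=\E[f(x)y],\]
so $f$ maximizes the true positive count over $\cF_{EO}$. This shows the left-hand $\arg\max$ is contained in the right-hand one.

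For the converse, let $h^*$ be any maximizer of the true positive count over capacity-$c$ functions in $\cF_{EO}$. Its objective value equals that of $f$, so $\min_i u_i(h^*)=\E[h^*(x)y]=\E[f(x)y]=\min_i u_i(f)$. This is the max-min optimum, so $h^*$ is also a max-min maximizer. The two $\arg\max$ sets therefore coincide. This is also the reading under which ``$f=\arg\max$'' is used in the max-min definition. If the sets are treated as singletons, the same argument shows that the unique elements agree.

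The main obstacle is making sure Claim \ref{claim:max-min-equal} applies to \emph{every} max-min maximizer, not just some of them. If it only gave one equalizing maximizer, a non-equalizing max-min maximizer $f'$ might lie outside $\cF_{EO}$, and the inclusion of the left set in the right would fail. I would check that the claim is stated for an arbitrary maximizer, as its wording ``the function $f$ satisfying max-min fairness'' and its proof suggest: any non-equalizing $f'$ could be post-processed to strictly improve the min, contradicting optimality. I would also confirm that the optima are attained, for instance through existence of randomized thresholds as in Claim \ref{claim:existence-threshold} applied to group post-processings. Finally, $b>0$ is needed for the conditioning on $y=1$ to make sense, and this follows from non-degeneracy.
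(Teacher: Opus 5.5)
Your proposal is correct and follows essentially the same route as the paper. Both arguments use Claim~\ref{claim:max-min-equal} together with equal base rates to place the max-min optimizer in $\cF_{EO}$, note that for equal-opportunity functions the total true positive count equals the common group value (and hence the minimum), and then sandwich the two optimal values so that each optimizer satisfies both criteria. Your treatment of the $\arg\max$ as sets, your remark that the contradiction argument of Claim~\ref{claim:max-min-equal} applies to every maximizer, and your check that the base rate is positive are more careful than the paper's write-up, but they do not change the argument.
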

The proof appears in Appendix \ref{app:equiv-proofs}, where we show that both functions satisfy both definitions.
  
\paragraph{Non-equal base rates}
When the probability of $y=1$ differs between groups, equal opportunity is not equivalent to max-min fairness. We remark that if we would have defined the utility of a group $i$ as $\E_{(x,y)\sim\cD}[h(x) y|y=1]$, then these definitions would be identical even for non-equal base rates (assuming the non-degenerate condition). This difference is significant if there is a large gap in the base rates between groups.
When some group $S_i$ has higher base rate than others, $\E_{(x,y)\sim\cD}[y|x\in S_i]>\E_{(x,y)\sim\cD}[y|x\in S_j]$, equal opportunity allocates more resources to $S_i$ compared to max-min fairness. In equal opportunity we will have $\E_{(x,y)\sim\cD}[f(x) y|x\in S_i]>\E_{(x,y)\sim\cD}[f(x) y|x\in S_j]$ while in max-min fairness in a non-degenerate setting we have equality. 

\subsection{Proportional fairness and fairness regularization}
We show that the normalized proportional fairness expression, defined in Definition \ref{def:norm-prop}, can be written as a sum two terms. The first is the Kullback–Leibler divergence between two distributions, that is maximized for $f$ satisfying max-min fairness. The second term is maximized when $f$ maximizes the true positive count. Therefore, proportional fairness in the resource-limited setting can be viewed as a variation of fairness regularization.  
We remark that this is not a standard fairness regularization, as the $D_{KL}$ and log of the expected true positive count are  both logarithmic. This mean that this regularization is in fact multiplicative, rather than the standard additive regularization. 

We start by defining the two following distributions over $[m]$, the number of groups. One is of the group weights (independent of any allocation) and one that is the distribution of the true positive count of the groups for a specific function $f$.
\begin{definition}
    For a distribution $\cD$ on $\cX\times\{0,1\}$ and group partition $S_1,\ldots,S_m\subset \cX$, let $Q_\cS$ be the distribution over $[m]$ defined by, for all $i\in [m]$: \begin{align*}
        \Pr_{j\sim Q_\cS}[j=i]=\Pr_{(x,y)\sim\cD}[x\in S_i].
    \end{align*}
    Let $f:\cX\rightarrow[0,1]$ be any function, then the distribution $Q_f$ over $[m]$ is defined by, for all $i\in[m]$:
    \begin{align*}
        \Pr_{j\sim Q_f}[j=i]=\frac{\E_{(x,y)\sim\cD}[f(x)y\one(x\in S_i)]}{\E_{(x,y)\sim\cD}[f(x)y]}. 
    \end{align*} 
\end{definition}

We now can formalize the claim on proportional fairness.
\begin{lemma}\label{lemma:proportional-reg}
    For any distribution $\cD$ on $\cX\times\{0,1\}$, any partition of $\cX$ into groups $\cS = S_1,\ldots,S_m$ and any function $f:\cX\rightarrow[0,1]$, we have
    \begin{align*}
    \mathbf{Prop}_{\cD,\cS}(f) =
    -D_{KL}(Q_\cS,Q_f)+ \log \E_{(x,y)\sim\cD}[f(x) y].
\end{align*}
    
\end{lemma}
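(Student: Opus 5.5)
The identity follows from expanding the definitions and rewriting each conditional expectation in terms of the probabilities of $Q_\cS$ and $Q_f$. For brevity, write $w_i=\Pr_{(x,y)\sim\cD}[x\in S_i]$ for the weights of $Q_\cS$, $T=\E_{(x,y)\sim\cD}[f(x)y]$ for the total true positive count, and $q_i=\E_{(x,y)\sim\cD}[f(x)y\one(x\in S_i)]/T$ for the weights of $Q_f$.

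The first step is to express each group's utility through these quantities. By the definition of conditional expectation, $\E[f(x)y\mid x\in S_i]=\E[f(x)y\one(x\in S_i)]/w_i = q_i T/w_i$. Substituting this into Definition \ref{def:norm-prop} gives
\begin{align*}
\mathbf{Prop}_{\cD,\cS}(f)=\sum_{i\in[m]} w_i\log\frac{q_i T}{w_i}=\sum_{i\in[m]} w_i\log\frac{q_i}{w_i}+\Big(\sum_{i\in[m]} w_i\Big)\log T .
\end{align*}

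The second step uses $\sum_i w_i=1$, which holds because $S_1,\ldots,S_m$ partition $\cX$. The second term is then exactly $\log T$. The first term equals $-\sum_i w_i\log(w_i/q_i)=-D_{KL}(Q_\cS,Q_f)$, which is the claimed identity.

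The only real obstacle is the degenerate cases, where the logarithms are not finite. If $T=0$, then $Q_f$ is undefined; in that case both sides should be read as $-\infty$. If some $q_i=0$ while $w_i>0$, then the left side contains $w_i\log 0=-\infty$, and the divergence is $+\infty$ with $T$ finite, so the two sides agree. Groups with $w_i=0$ contribute nothing to either side under the convention $0\log 0=0$. I would state these conventions explicitly and then carry out the computation above for the nondegenerate case.
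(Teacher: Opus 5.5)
Your proof is correct and follows the same route as the paper. You write each $\E[f(x)y\mid x\in S_i]$ as $\E[f(x)y\one(x\in S_i)]/\Pr[x\in S_i]$, factor out $\E[f(x)y]$, split the logarithm, and use $\sum_i \Pr[x\in S_i]=1$ to obtain $-D_{KL}(Q_\cS,Q_f)+\log\E[f(x)y]$; your explicit treatment of the degenerate cases ($T=0$, or $q_i=0$ with $w_i>0$) is a small addition that the paper leaves implicit.
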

The proof appears on Appendix \ref{app:equiv-proofs}.
We prove that the $D_{KL}$ in our expression is indeed a fairness term, by showing that the function $f$ that minimizes it is the function satisfying max-min fairness.
\begin{claim}\label{claim:max-min-dkl}
Let $\cD$ be a non degenerate distribution with respect to groups $S_1,\ldots,S_m$, function family $\cF$ that is closed under group post-processing, and capacity $2c$ for $c<0.5$. The function $f\in\cF$ that has capacity $c$ and satisfies max-min fairness with respect to groups $S_1,\ldots,S_m$ also satisfies $D_{KL}(Q_\cS,Q_f)=0$.
\end{claim}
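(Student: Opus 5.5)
The plan is to invoke Claim \ref{claim:max-min-equal} and then compute $Q_f$ explicitly. Let $f\in\cF$ have capacity $c$ and satisfy max-min fairness. The hypotheses of the present claim are exactly those of Claim \ref{claim:max-min-equal}: $\cD$ is non-degenerate for capacity $2c$, $\cF$ is closed under group post-processing, and $c<0.5$. So that claim gives a common value
\[
u=\E_{(x,y)\sim\cD}[f(x)y\mid x\in S_i]
\]
for all $i\in[m]$.

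Next we compute $Q_f$. For each $i$ we have $\E[f(x)y\one(x\in S_i)]=\Pr[x\in S_i]\cdot u$. Summing over the disjoint groups, which partition $\cX$, gives $\E[f(x)y]=u\sum_i\Pr[x\in S_i]=u$. Therefore
\[
\Pr_{j\sim Q_f}[j=i]=\frac{\Pr[x\in S_i]\,u}{u}=\Pr[x\in S_i]=\Pr_{j\sim Q_\cS}[j=i].
\]
Hence $Q_f=Q_\cS$, and the KL divergence between identical distributions is $0$. Groups with $\Pr[x\in S_i]=0$ contribute nothing and can be dropped, or handled with the convention $0\log(0/0)=0$.

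The only real obstacle is making sure $u>0$, so that $Q_f$ is well defined and we are not dividing by zero. I would obtain this from non-degeneracy. A capacity-$c$ allocation can be post-processed, since $\cF$ is closed under group post-processing, so that each group has positive true positive count. Non-degeneracy at capacity $2c$ guarantees that allocated mass in every group yields $y=1$ with positive probability. The max-min value is therefore at least that of this allocation, which is positive, so $u>0$.

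If Claim \ref{claim:max-min-equal} is stated only for existence or uniqueness of the maximizer, I would apply it to the given $f$ in the same way. The argument uses only the equal-utility conclusion.
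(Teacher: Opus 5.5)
Your proof is correct and matches the paper's argument: you use Claim \ref{claim:max-min-equal} to get a common per-group true positive count, note that it equals $\E[f(x)y]$, and conclude $Q_f=Q_\cS$. Your extra check that this common value is positive, so that $Q_f$ is well defined, is something the paper omits; to make it concrete, post-process to the constant function $c$ and apply non-degeneracy to the constant-$1$ function, which gives $\E[y\mid x\in S_i]>0$ for every $i$ (this needs $c>0$).
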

The proof appears in Appendix \ref{app:equiv-proofs}.
The claim implies that the term $D_{KL}(Q_\cS,Q_f)$ in the optimization is a form of distance from a function satisfying max-min fairness, as for any function $h$ we can write $D_{KL}(Q_\cS,Q_h)=D_{KL}(Q_f,Q_h)$, where $f$ is a function satisfying max-min fairness. When we have equal base rates, then this term can also be viewed as distance from equal opportunity.
\section{The price of fairness}\label{sec:price-of-fairness}
In this part we quantify the price of fairness in our setting. We prove the statements in this section only for two groups, and the lower bounds also apply in settings with more groups that have an information gap. This is because we can always unite different groups to two groups, and max-min fairness with respect to the union of groups is a weaker requirement than on the original groups. 

In this section we assume that our decision process is a two-step process, where in the first we find a predictor and in the second we post-process it. The information gap is modeled as the \emph{tail dominance} of $p$ on one group comparing to the other, as defined in Definition \ref{def:dominance}.

 When we post-process a predictor $p$ into a randomized allocation satisfying some requirement such as max-min fairness, equal opportunity or proportional fairness, we can have two different approaches to satisfy the requirements. If the predictor is calibrated with a small calibration error, one option is to assume that the output distribution of the predictor $p(x),x\sim\cD$, correctly characterized the distribution of $y$. In this case, we can enforce the fairness requirement on the \emph{simulated distribution}, as defined on Definition \ref{def:simulated}. 
We show that enforcing the fairness requirement on the simulated distribution implies that it is also satisfied, up to the an error, on the real distribution $\cD$. An advantage of this approach is that we can post-process a predictor $p$ to satisfy some fairness definitions without new labeled samples.

\subsection{Equal opportunity and max-min fairness}

For both max-min fairness and equal opportunity, we characterize the settings for which enforcing these definitions results in allocates most resources to the group we know less of. In the theorem we assume that we enforce the fairness definition on $\cD_p$, the simulated distribution (see Definition \ref{def:simulated}). In Appendix \ref{app:price} we also state prove a variant without this assumption, that has an additional error term.

\begin{theorem}\label{thm:eo-diff}
    Let $\cD$ be a distribution over $\cX\times\{0,1\}$ where $
    \cX$ is divided into $S_1,S_2$, that is $2c$ non-degenerate.
    Let $p:\cX\rightarrow[0,1]$ be an $\alpha$-calibrated predictor on $S_1,S_2$, such that group $S_1$ has $t_0$-tail dominance over $S_2$. Then for any $c\in[0,1]$ such that such that $\forall i\in[2], \E_{(x,y)\sim\cD}[t\geq t_0\one(x\in S_i)]\geq c$ and
     any randomized group threshold function $\tau:[0,1]\times[2]\rightarrow[0,1]$ on $p$ such that $\tau(p(x),g(x))$ satisfies max-min fairness and has capacity $c$ on the simulated distribution, also satisfies
    \begin{align*}
        \E_{(x,y)\sim\cD}[\tau(p(x),1)|x\in S_1]\leq \E_{(x,y)\sim\cD}[\tau(p(x),2)|x\in S_2].
    \end{align*}

\end{theorem}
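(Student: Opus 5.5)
We work entirely on the simulated distribution $\cD_p$. Under $\cD_p$ the true positive count of an allocation on group $S_i$ is $\E[f(x)p(x)\mid x\in S_i]$. Write $c_i=\E[\tau(p(x),i)\mid x\in S_i]$ for the group capacities and $U_i(c_i)$ for the utility of the top-$c_i$ randomized threshold on $S_i$. By Claim \ref{claim:existence-threshold}, every $c_i$ is attainable by some randomized threshold.

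In quantile language, with $q_i$ the quantile function of $p$ restricted to $S_i$ (Appendix \ref{app:quant}), we have
\[
U_i(c_i)=\int_{1-c_i}^{1} q_i(u)\,du .
\]
Tail dominance of $S_1$ over $S_2$ above $t_0$ gives $q_1(u)\ge q_2(u)$ whenever $q_2(u)\ge t_0$. The hypothesis that each group has mass at least $c$ above $t_0$ means the relevant top tails, of width up to $c$, lie in the region where this comparison holds. Consequently
\[
U_1(s)\ge U_2(s)\quad\text{for all } s\le c .
\]

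The argument is by contradiction. Suppose $c_1>c_2$. Since the total capacity is $c=\pi_1c_1+\pi_2c_2$ with $\pi_1+\pi_2=1$, we get $c_2<c<c_1$, so both $c_2$ and the relevant tails stay within the range covered by the hypothesis. Then
\[
U_1(c_1)>U_1(c_2)\ge U_2(c_2),
\]
and the first inequality is strict by non-degeneracy: the extra mass of $S_1$ contributes positive $y$-mass.

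Next we invoke Claim \ref{claim:max-min-equal}. Thresholds are closed under group post-processing within our class, and $\cD_p$ inherits $2c$ non-degeneracy. So the max-min fair allocation equalizes the two utilities, $U_1(c_1)=U_2(c_2)$, which contradicts the strict inequality above. Hence $c_1\le c_2$, which is the claim.

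If one prefers to avoid that claim, a direct exchange argument works as well: move a small amount $\epsilon$ of capacity from $S_1$ to $S_2$. This strictly raises $U_2$ by non-degeneracy and keeps $U_1>U_2$ for small $\epsilon$, so the minimum increases, contradicting max-min optimality.

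The main obstacle is the careful translation of tail dominance into $U_1(s)\ge U_2(s)$ for $s\le c$. Two points need care. First, ties at thresholds must be handled with randomized $\gamma_i$. Second, we must confirm that the capacity hypothesis guarantees the top-$s$ quantiles of both groups lie above $t_0$; this is why the assumption is on mass above $t_0$. Non-degeneracy is needed only to make the inequality strict, and must be checked on $\cD_p$, where it follows from $p>0$ on the allocated tails since they lie above $t_0$. The case $t_0=0$ needs a brief separate remark.
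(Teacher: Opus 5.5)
Your proof is correct. Its skeleton matches the paper's: equal utilities on $\cD_p$ via Claim~\ref{claim:max-min-equal}, then $q_1\ge q_2$ in the top tail from tail dominance. The final step, however, is organized differently.

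The paper, through Claim~\ref{claim:max-min-dp}, works with the thresholds $t_1,t_2$. It writes each utility as an atom term $\gamma_i t_i\Pr[p(x)=t_i\mid x\in S_i]$ plus $\int_{F_i(t_i)}^1 q_i$, splits into cases by the order of $t_1,t_2$ and of $F_1(t_1),F_2(t_2)$, and uses Claim~\ref{claim:int-bound} in each case. Your parametrization by conditional capacity, $U_i(s)=\int_{1-s}^1 q_i$, absorbs the randomized tie-breaking automatically. It collapses all of this into $U_1(c_1)>U_1(c_2)\ge U_2(c_2)$ once you note that $c_1>c_2$ forces $c_2<c<c_1$. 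It also gives the paper's slack version essentially for free. Since $q_1\ge t_1$ on $(1-c_1,1]$, you get $U_1(c_1)-U_1(c_2)\ge (c_1-c_2)t_1$, hence $c_1\le c_2+\alpha/t_1$ whenever $U_1(c_1)\le U_2(c_2)+\alpha$. That is what Theorem~\ref{thm:eo-diff-err} needs.

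Your exchange variant is actually the better main line. It stays inside randomized thresholds (by Claim~\ref{claim:existence-threshold}) and needs only $t_0>0$ together with $c_2<c$ to make both strict inequalities hold. It therefore avoids Claim~\ref{claim:max-min-equal} altogether, and with it two hidden assumptions:
\begin{itemize}
\item The closure hypothesis. Randomized thresholds are not closed under group post-processing, so that route needs the max-min to be taken over all post-processings of $p$.
\item The transfer of $2c$ non-degeneracy from $\cD$ to $\cD_p$. You and the paper both assume this without proof.
\end{itemize}

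One intermediate statement should be fixed. Tail dominance does not give $q_1(u)\ge q_2(u)$ whenever $q_2(u)\ge t_0$. For a counterexample, take $p\equiv t_0$ on $S_2$ and $p$ uniform on $[0,2t_0]$ on $S_1$. This satisfies $t_0$-tail dominance, yet $q_1(u)=2t_0u<t_0=q_2(u)$ for $u<1/2$. What tail dominance does give is the comparison whenever $q_1(u)\ge t_0$ (the paper's range $u>F_1(t_0)$) or $q_2(u)>t_0$. The error is harmless for you: the mass hypothesis makes both quantile functions at least $t_0$ on $(1-c,1]$, which is the only region where you use the comparison.
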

The proof appears in Appendix \ref{app:price}.

We remark that using Lemma \ref{lem:eo-maxmin},
the same holds for equal opportunity with equal base rates, and when the true positive rate of $S_2$ is larger then of $S_1$. This theorem proves that when we have tail dominance, equal opportunity (with equal base rates) and max-min fairness are stricter than demographic parity. That is, they allocate the majority of the resources to the groups we have less information on.
For max-min fairness it is intuitive to see, as if we want the same utility for both groups and one can use the resources less efficiently, it needs to receive more resources. 

We prove next a gap version of the theorem, showing that gap tail dominance implies a gap on the allocated resources.  We first define a gap version of tail dominance. We remark that we only require a gap for a given range $[t_0,t_M]$, as it is a too strict definition to require a gap for all $r > t_0$. This is because in many of the risk-estimation settings we discuss there are no very high-risk individuals, for example there is no $x$ with $p(x)=0.9$. Therefore, it is unreasonable to require a gap for this range.
\begin{definition}\label{def:gap-dominance}
    Let $p:\cX\rightarrow[0,1]$ be a predictor, and let $S_1,S_2$ be two disjoint groups in $\cX$.
    Group $S_1$ has $t_0,t_M$-tail dominance over group $S_2$ with gap $\eta$ under distribution $\cD$ and predictor $p$ if for all $r\in [t_0,t_M]$ and $r'\geq t_M$
      \begin{align*}
        &\E_{(x,y)\sim\cD}[p(x)>r|x\in S_1] - \E_{(x,y)\sim\cD}[p(x)>r|x\in S_2] \geq \eta \\
        &\E_{(x,y)\sim\cD}[p(x)>r'|x\in S_1] \geq \E_{(x,y)\sim\cD}[p(x)>r'|x\in S_2]
    \end{align*}
\end{definition}
We prove a gap version of Theorem \ref{thm:eo-diff}, assuming tail dominance and a weak Lipchitz-type condition on the distribution of $p(x)$.
\begin{theorem}\label{thm:eo-diff-gap}
     Let $\cD$ be a distribution over $\cX\times\{0,1\}$ where $
    \cX$ is divided into $S_1,S_2$, and $\eta,\beta,t_0,t_m\in[0,1]$, and assume that it is $2c$ non-degenerate.
    Let $p:\cX\rightarrow[0,1]$ be an $\alpha$-calibrated predictor on $S_1,S_2$, such that group $S_1$ has $t_0,t_M$-tail dominance over $S_2$ with gap $\eta$ and for all $r\geq r_0$,
    $\E_{(x,y)\sim\cD}[p(x)\in[r,r+\beta]|x\in S_i]< \eta$, and $\Pr_{(x,y)\sim\cD}[p(x)\geq t_M]< \eta$.
    Let $c\in[6\eta,1]$ be a such that for $i\in[2]$, $\Pr_{(x,y)\sim\cD}[p(x)\geq t_0\wedge x\in S_i)]\geq c$. Then any randomized threshold function $\tau$ with capacity $c_0$, if $\tau(p(x),g(x))$ satisfies max-min fairness on $\cD_p$ then, 
    \begin{align*}
        \E_{(x,y)\sim\cD}[\tau(p(x),1)|x\in S_1]\leq \E_{(x,y)\sim\cD}[\tau(p(x),2)|x\in S_2] - \frac{\eta\beta}{t_2}.
    \end{align*}
\end{theorem}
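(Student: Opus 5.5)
We work throughout on the simulated distribution $\cD_p$, where $y\sim\textbf{Ber}(p(x))$. On $\cD_p$ the true positive count of group $i$ under a randomized threshold $(t_i,\gamma_i)$ is
\[
U_i(t_i,\gamma_i)=\E[p(x)\tau(p(x),i)\mid x\in S_i],
\]
and the capacity on $S_i$ is $c_i=\E[\tau(p(x),i)\mid x\in S_i]$. By Claim~\ref{claim:max-min-equal}, which applies because of $2c$ non-degeneracy and because post-processings of $p$ are closed under group post-processing, max-min fairness forces $U_1=U_2$. We therefore need to show that equal utility combined with the gap in tail dominance forces $c_2\ge c_1+\eta\beta/t_2$, where we read $t_2$ as the threshold used on $S_2$.

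\textbf{Step 1: quantile representation.} Write $q_i(u)$ for the upper quantile function of $p(x)$ on $S_i$, so that $U_i=\int_0^{c_i} q_i(u)\,du$ (Appendix~\ref{app:quant}). The hypothesis $\Pr[p(x)\ge t_0\wedge x\in S_i]\ge c$ guarantees that the relevant thresholds satisfy $t_i\ge t_0$. Theorem~\ref{thm:eo-diff} already gives $c_1\le c_2$. Tail dominance says $q_1(u)\ge q_2(u)$ for all $u$ in the range under consideration. Hence
\[
0=U_1-U_2=\int_0^{c_1}\bigl(q_1(u)-q_2(u)\bigr)\,du-\int_{c_1}^{c_2}q_2(u)\,du .
\]
Since $q_2\le t_2$ on $[c_1,c_2]$, this yields
\[
c_2-c_1\ \ge\ \frac{1}{t_2}\int_0^{c_1}\bigl(q_1-q_2\bigr)\,du .
\]

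\textbf{Step 2: lower bound on the integral.} The goal is to show $\int_0^{c_1}(q_1-q_2)\,du\ge\eta\beta$. The gap condition says that at each level $r\in[t_0,t_M]$ the survival functions differ by at least $\eta$. Equivalently, $\int_0^{c_1}(q_1-q_2)\,du$ equals the area between the two survival curves over the value range, which is
\[
\int_{r}\Bigl(\Pr[p>r\mid S_1]-\Pr[p>r\mid S_2]\Bigr)\,dr
\]
restricted to levels $r$ above $t_1$. Above $t_M$ the integrand is nonnegative, so that part can be discarded. It then suffices to find a $\beta$-length interval of levels inside $[t_0,t_M]$ lying above $t_1$; on that interval the integrand is at least $\eta$, which gives area at least $\eta\beta$.

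Finding this interval is where the side conditions enter. Because $\Pr[p\ge t_M]<\eta$ and $c_1\ge$ a constant multiple of $\eta$ (this is where $c\ge 6\eta$ is used), group $S_1$ must be allocated mass below $t_M$. The anti-concentration condition, that any window of width $\beta$ has mass less than $\eta$, then forces $t_1\le t_M-\beta$: otherwise the mass of $S_1$ allocated in $[t_1,\infty)$ would be below $\eta+\eta$, contradicting $c_1\ge 2\eta$. A bound of the form $c_1\ge 2\eta$ should follow from $c\ge6\eta$ together with the fact that $c_2$ cannot absorb almost all of the capacity, because $S_2$'s utility is also bounded.

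\textbf{Main obstacle.} The hardest part is this final bookkeeping. We must convert the sum-capacity constraint $c=\Pr[S_1]c_1+\Pr[S_2]c_2$ into a lower bound on $c_1$ using the window and tail conditions, and we must handle the randomized boundary values $\gamma_i$ via quantiles. If $c_1$ turns out to be tiny, we would argue separately: in that case $c_2-c_1$ is already large because $c_2\approx c/\Pr[S_2]\ge 6\eta$, and this exceeds $\eta\beta/t_2$ since $\beta\le 1$ and $t_2\ge t_0$.
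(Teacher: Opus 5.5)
\paragraph{Gap 1: the Step 1 inequality runs the wrong way.} This is where your argument breaks. In your upper-quantile convention $q_2$ is non-increasing, and $q_2(c_2)=t_2$ is the \emph{lowest} value that $S_2$ receives. So on $[c_1,c_2]$ you have $q_2\ge t_2$, not $q_2\le t_2$.

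Write $A=\int_0^{c_1}(q_1-q_2)\,du$. The identity $A=\int_{c_1}^{c_2}q_2\,du$ then gives $c_2-c_1\le A/t_2$, which is an upper bound. The lower bound it actually supports comes from $q_2\le q_2(c_1)\le q_1(c_1)=t_1$ on that interval. That yields $c_2-c_1\ge A/t_1$, for example $c_2-c_1\ge\eta(t_M-t_1)/t_1$ when $t_0\le t_1\le t_M$. This is weaker than the target, since $t_1\ge t_2$.

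The paper gets the $1/t_2$ denominator only through its case split on $F_1(t_1)$ versus $F_2(t_2)$.
\begin{itemize}
\item When $F_1(t_1)\le F_2(t_2)$, all of $S_2$'s extra capacity is the randomized atom sitting exactly at $t_2$. Together with $t_1\ge t_2$ on the $S_1$ side, dividing by $t_2$ is legitimate.
\item When $F_1(t_1)>F_2(t_2)$, the paper too can only bound the extra values by $t_1$; it writes $\int_{F_2(t_2)}^{F_1(t_1)}q_2\le t_1\Delta$. Its subsequent division by $t_2$ is not justified once $\Delta>\gamma_1\Pr[p(x)=t_1\mid x\in S_1]$. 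This is the typical situation, e.g.\ whenever $p$ has no atoms and $c_1<c_2$.
\end{itemize}

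\paragraph{Gap 2: the lower bound on $c_1$.} Step 2 needs $t_1\le t_M-\beta$, hence roughly $c_1\ge 2\eta$, which you leave as ``should follow''. Your fallback for small $c_1$ is invalid. A value of $c_2-c_1$ of order a few $\eta$ exceeds $\eta\beta/t_2$ only if $t_2\gtrsim\beta/4$, and $t_2\ge t_0$, $\beta\le 1$ give no such control. (In its first case the paper avoids needing a large $c_1$ by taking the quantile window's length from $S_2$'s side, via $c_2\ge c\ge 6\eta$. Its second case would need $F_1(t_1)\le 1-3\eta$, which it never establishes.)

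In fact this regime, with small $c_1$ and $t_2$ just above $t_0$, defeats the stated bound itself, reading $r_0=t_0$ and $c_0=c$. Take $\eta=0.01$, $\beta=0.02$, $t_0=0.001$, $t_M=0.9$, $\Pr[x\in S_1]=1/3$, and $\cD=\cD_p$.
\begin{itemize}
\item On $S_1$: $p$ has mass $0.02$ uniform on $[0.9,1]$, density $0.45$ on $[0.001,0.2]\cup[0.25,0.45]$, and the rest at $0.0005$.
\item On $S_2$: $p$ has an atom $0.0049$ at $0.0011$, density $0.25$ on $(0.0011,0.021]$, density $0.45$ on $(0.021,0.2066]$, and the rest at $0.0005$.
\end{itemize}
Every hypothesis then holds:
\begin{itemize}
\item every $\beta$-window above $t_0$ carries at most $0.00992$;
\item the tail gap on $[t_0,t_M]$ is at least $0.0166$;
\item $\Pr[p(x)\ge t_M]\approx 0.0067$;
\item all values are positive, so non-degeneracy holds.
\end{itemize}
The max-min allocation on $\cD_p$ equalizes utilities at about $0.00956$. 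It has $t_1=0.951$, $t_2=0.0012$, $c_1=0.0098$, $c_2\approx 0.0885$, and total capacity $c\approx 0.0622\ge 6\eta$. Moreover $\Pr[p(x)\ge t_0\wedge x\in S_i]\ge c$ for both groups. Yet $c_2-c_1\approx 0.079<\eta\beta/t_2\approx 0.167$.

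So no repair of Step 1 can deliver the $1/t_2$ form. What your quantile identity can honestly give is a bound with $t_1$ (or $q_2(c_1)$) in the denominator.

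Your horizontal computation of $A$, as the area between survival curves over a $\beta$-window of levels, is a fine alternative to the paper's pointwise quantile gap. Both need the window to lie inside $S_1$'s allocation, which is exactly what fails here.
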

The proof appears in Appendix \ref{app:price}.
This theorem shows that if the gap is large and the prediction on $S_2$ is very low (i.e. we barely have any information on individuals from $S_2$ and the probability for $y=1$ is low), then we end up allocating a strict majority of the resources to $S_2$. Such allocation leads to the true positive count going towards zero, as nearly all resources are allocated to individuals with $y=0$. We prove the following claim on the price of fairness, requiring a slightly stronger Lipchitz condition than in Theorem \ref{thm:eo-diff-gap}. In addition to the claim below, 
we demonstrate the price of fairness in Figure \ref{fig:price-large-gap}. The proof is in Appendix \ref{app:price}.
\begin{claim}\label{claim:additive-price-of}
    In the setting of Theorem \ref{thm:eo-diff-gap}, for equal-sized groups where $p(x)$ also satisfies $\E_{(x,y)\sim\cD}[p(x)\in[r,r+\beta]|x\in S_i]< \eta/4$ and $p(x)$ is a continuous distribution, the additive price of fairness is at least $\frac{\eta\beta}{8}+ \frac{\eta\beta^2}{4t_2}$.
\end{claim}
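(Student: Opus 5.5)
We will compare two concrete allocations, both obtained as randomized group threshold functions on $p$ and evaluated on the simulated distribution $\cD_p$. The first is the unconstrained optimum $h$, which applies a single threshold $t^*$ to $p$ on both groups. The second is the max-min fair allocation $f=\tau(p(x),g(x))$ of Theorem \ref{thm:eo-diff-gap}. Since the groups have equal size, the capacity constraint reads $\frac12(c_1+c_2)=c$, where $c_i=\E[f(x)|x\in S_i]$. Theorem \ref{thm:eo-diff-gap} gives $c_2-c_1\geq \eta\beta/t_2$.

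Because $p$ is continuous, the quantile functions from Appendix \ref{app:quant} are well defined. The true positive count of any allocation that takes the top $a_i$ quantile of group $i$ is
\[
\tfrac12\Big(\int_0^{c_1} q_1(u)\,du+\int_0^{c_2} q_2(u)\,du\Big),
\]
where $q_i$ is the quantile function of $p$ on $S_i$. The additive price is then the difference between this quantity at the optimal split, where $q_1(a_1)=q_2(a_2)=t^*$, and at the fair split $(c_1,c_2)$.

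The key step is to lower bound this difference, i.e.\ the loss from moving mass between groups. Start from the fair split and move mass $\delta=(c_2-c_1)/2$ (or a suitable part of it) from $S_2$ to $S_1$. This should be compared with the equal split $c_1=c_2=c$, which is demographic parity and is itself suboptimal. At the equal split, tail dominance with gap $\eta$ implies that the marginal individuals satisfy $q_1(c)\geq q_2(c)$. The strengthened Lipschitz condition, which puts mass less than $\eta/4$ in any window of width $\beta$, then upgrades the gap $\eta$ in the tail functions into a gap in the quantiles. Concretely, shifting a tail probability by $\eta$ requires moving the threshold by more than $4\beta$, so $q_1(u)-q_2(u')\gtrsim\beta$ over a range of $u$, $u'$ of length about $\eta/4$.

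Integrating this quantile gap gives two contributions:
\begin{enumerate}
\item Over the band of width $\sim\eta/4$ around the equal split, each unit of reallocated mass gains at least $\sim\beta/2$. Multiplying by the normalization $1/2$ yields the term $\eta\beta/8$.
\item The additional displacement $c_2-c_1\geq \eta\beta/t_2$ forced by Theorem \ref{thm:eo-diff-gap} pushes $S_2$ further into its lower quantiles, which lie at least $\beta/4$ below those of $S_1$. This yields the second term, $\frac{\eta\beta}{t_2}\cdot\frac{\beta}{4}$.
\end{enumerate}
Summing the two, using monotonicity of the $q_i$ and the fact that the optimal $h$ dominates every other split, gives the claimed bound $\frac{\eta\beta}{8}+\frac{\eta\beta^2}{4t_2}$.

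The main obstacle is the conversion from the tail-probability gap to the quantile gap, and checking that the relevant quantiles lie in the range $[t_0,t_M]$ where the gap applies. We plan to use the hypotheses $c\geq 6\eta$, $\Pr[p\geq t_0\wedge x\in S_i]\geq c$ and $\Pr[p\geq t_M]<\eta$ to guarantee that the windows of width $\eta/4$ used above stay inside the region where the gap holds. Bookkeeping of constants, including whether $t_2$ denotes the fair threshold on $S_2$, comes last.
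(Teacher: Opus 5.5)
Your architecture is the right one: single-threshold optimum versus the fair split, quantile integrals, Theorem \ref{thm:eo-diff-gap} for the extra displacement, and the strengthened Lipschitz condition to turn mass into value. However, the quantitative core does not give the stated bound. In item 2, $c_2-c_1\geq \eta\beta/t_2$ is a difference of \emph{conditional} capacities. Going from the fair split to the equal split shifts only $(c_2-c_1)/2$ of conditional mass per group, i.e.\ population mass $(c_2-c_1)/4\geq \eta\beta/(4t_2)$. With your per-unit gain $\beta/4$ this gives $\eta\beta^2/(16t_2)$; you reach $\eta\beta^2/(4t_2)$ only by dropping both factors of $1/2$. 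In item 1, a conditional band of width $\eta/4$ with gain $\beta/2$ and group weight $1/2$ gives $\eta\beta/16$, not $\eta\beta/8$.

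Behind these numbers, the quantile-gap step (``$q_1(u)-q_2(u')\gtrsim\beta$ over a range of length about $\eta/4$'') is not tied to the pairs that matter. Along the reallocation path you exchange the $S_1$ individual at level $c+s$ for the $S_2$ individual at level $c-s$. That gain vanishes at the optimum, so you must say on which part of the path it is at least an explicit amount. You also need that the optimum lies at least $\eta/2$ beyond the equal split. This follows from gap dominance at $t^*$ once you check $t^*\in[t_0,t_M]$, but you never state it.

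What is missing is a per-unit gain of a full $\beta$ on all of the exchanged mass except an $\eta/4$ sliver. The paper gets this directly, with no detour through demographic parity, by anchoring at $t^*$:
\begin{itemize}
\item every $S_1$ individual served by the optimum but not by $\tau$ has $p\in(t^*,t_1]$, and by the strengthened Lipschitz condition at most $\eta/4$ of that conditional mass lies in $(t^*,t^*+\beta]$;
\item every $S_2$ individual served by $\tau$ but not by the optimum has $p\le t^*$.
\end{itemize}
The exchanged conditional mass is $F_1(t_1)-F_1(t^*)=F_2(t^*)-F_2(t_2)\geq\frac12(\eta+\eta\beta/t_2)$. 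This combines $F_1(t^*)\le F_2(t^*)-\eta$, $F_1(t_1)\ge F_2(t_2)+\eta\beta/t_2$, and the two capacity equations. So the price is at least $\frac12\beta\bigl(\frac{\eta}{2}+\frac{\eta\beta}{2t_2}-\frac{\eta}{4}\bigr)=\frac{\eta\beta}{8}+\frac{\eta\beta^2}{4t_2}$.

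Your matched-quantile idea can also be made rigorous. In your decreasing convention, gap dominance at $r=q_2(c-s)\in[t_2,t^*]$ puts $S_1$-mass at least $\eta-2s$ in $(r,q_1(c+s)]$. So the gain exceeds $\beta$ whenever $s\le 3\eta/8$, including all of the fair-to-equal segment. But this must be proved rather than replaced by the weaker $\beta/2$ and $\beta/4$.

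Finally, with the paper's increasing quantile functions the per-group true positive count is $\int_{1-c_i}^1 q_i$, not $\int_0^{c_i} q_i$.
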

We remark that when $t_2$, the threshold for the second group, approaches zero then the price of fairness grows significantly. 

\begin{figure}[t]
    \centering
    \includegraphics[width=\columnwidth]{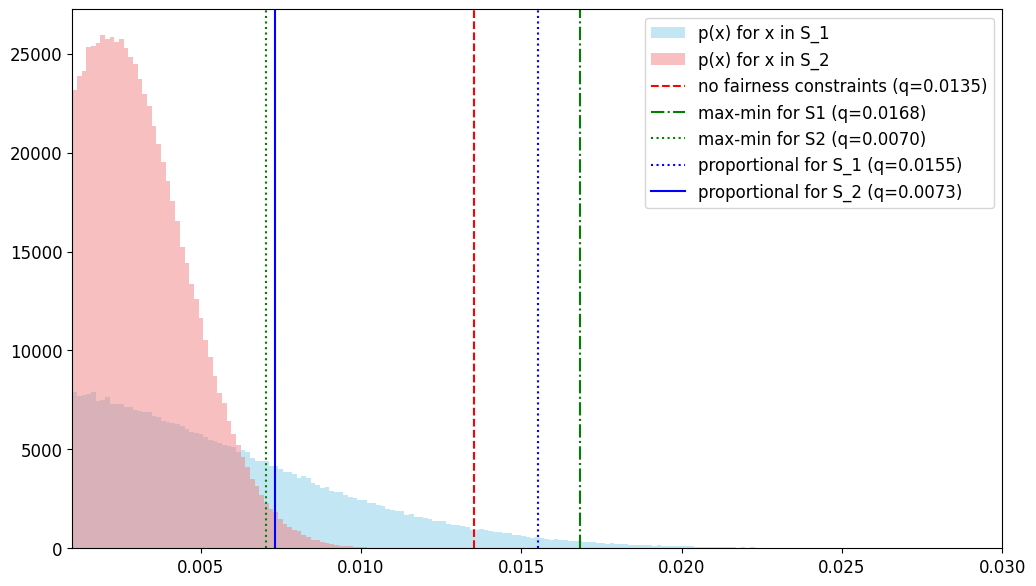}
    \caption{Simulating the output distribution of a predictor $p$ with a large gap between the distribution over $S_1$ and $S_2$. In such cases the true positive count with fairness requirements is significantly larger than without. See Appendix \ref{app:simulations} for more information.
    }
    \label{fig:price-large-gap}
\end{figure}

\subsection{Proportional fairness}
For proportional fairness we see next that the price of fairness is bounded. The harm to the true positive count is always at least half of the true positive count of the optimal solution without fairness requirements. Proportional fair solution might allocate most resources to the group we know less of, yet we are guaranteed a minimal usage of the resource. 

\begin{theorem}\label{thm:prop-bound}
      Let $\cD$ be a distribution over $\cX\times\{0,1\}$ where $\cX$ is divided into two equal-sized groups $S_1,S_2$. Let $p:\cX\rightarrow[0,1]$ be an $\alpha$-calibrated predictor and $\tau:[0,1]\times[2]\rightarrow[0,1]$ to be a randomized group threshold function such that $\tau(p(x),g(x))$ satisfies proportional fairness on the simulated distribution and has capacity $c$. Let $\tau'$ be the post-processing function such that $\tau'(p(x),g(x))$ maximizes the true positive count with capacity $c$.
    Then \begin{align*}
        \E_{(x,y)\sim\cD}[\tau(p(x),g(x))y]\geq \frac{1}{2}\E_{(x,y)\sim\cD}[\tau'(p(x),g(x))y]-\alpha
    \end{align*}
\end{theorem}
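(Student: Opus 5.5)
We first prove the inequality on the simulated distribution $\cD_p$ with no error term, and then move to $\cD$ using calibration. Write $w_1=w_2=1/2$ for the group weights. For any allocation $f$ let $a_i(f)=\E_{\cD_p}[f(x)y\mid x\in S_i]=\E[f(x)p(x)\mid x\in S_i]$ be the per-group utilities. Then the total true positive count on $\cD_p$ is $T(f)=\tfrac12(a_1+a_2)$, and the objective is $\mathbf{Prop}(f)=\tfrac12\log a_1+\tfrac12\log a_2$.

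\textbf{Step 1 (comparison allocation).} Let $f^*=\tau'(p(x),g(x))$ be the utility-optimal allocation with capacity $c$, with utilities $(a_1^*,a_2^*)$ and group capacities $c_1^*,c_2^*$, so that $\tfrac12(c_1^*+c_2^*)=c$. Suppose without loss of generality that $a_1^*\ge a_2^*$. Then $a_1^*\ge T(f^*)$. We compare the proportional fair allocation $f=\tau(p(x),g(x))$ with a \emph{balanced} allocation $h$ that gives group capacity $c$ to each group via randomized thresholds; it exists by Claim~\ref{claim:existence-threshold}.

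The plan is to show that $h$, or some interpolation between $h$ and $f^*$, already has product of utilities at least a constant times $T(f^*)^2$. Optimality of $f$ then transfers this to $f$. The key observation concerns a threshold allocation with capacity $c_i$ on a group: its utility $a_i$ is concave in $c_i$, because the top-quantile average is non-increasing. In particular $a_i(c_i)/c_i$ is non-increasing, so halving the capacity retains at least half the utility. Hence with $c_1=c_1^*/2$ and $c_2=c_2^*/2+c$ we keep capacity $c$ exactly. This gives $a_1\ge a_1^*/2$ and $a_2\ge a_2^*$.

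\textbf{Step 2 (using optimality and AM--GM).} Proportional fairness on $\cD_p$ gives $a_1(f)a_2(f)\ge a_1(h)a_2(h)$ for every feasible threshold allocation $h$. We also use $a_1(f)+a_2(f)\ge 2\sqrt{a_1(f)a_2(f)}$. It is not enough to choose $h$ so that the product is large: a product lower bound does not by itself lower bound the sum by $T(f^*)$, since AM--GM only gives $\sqrt{\cdot}$. So instead we use the first-order optimality condition. Since $f$ maximizes $\log a_1+\log a_2$ over a convex set of capacity splits, the directional derivative toward $f^*$ is nonpositive:
\[
\frac{a_1^*-a_1}{a_1}+\frac{a_2^*-a_2}{a_2}\le 0 .
\]
Here we replace $f^*$ by a suitable concave-combination allocation; concavity of $a_i$ in $c_i$ makes the feasible utility region convex. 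This yields $\frac{a_1^*}{a_1}+\frac{a_2^*}{a_2}\le 2$. Therefore $a_1^*\le 2a_1$ and $a_2^*\le 2a_2$, so $a_1+a_2\ge\tfrac12(a_1^*+a_2^*)$, which is $T_{\cD_p}(f)\ge\tfrac12T_{\cD_p}(f^*)$. This first-order step is the main point. The delicate parts are justifying that the set of achievable $(a_1,a_2)$ with total capacity $c$ is convex, which follows from concavity of the quantile-integral utility in $c_i$, and that the optimum of a concave objective over a convex set satisfies the variational inequality. Boundary cases where some $a_i=0$ must be excluded; $\mathbf{Prop}$ is $-\infty$ there unless all allocations give zero.

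\textbf{Step 3 (transfer to $\cD$).} Calibration gives $\abs{\E_{\cD}[\tau(p(x),g(x))y]-\E_{\cD_p}[\tau(p(x),g(x))y]}\le\alpha$ for threshold post-processings of $p$: the difference is an integral over $r$ of $\tau(r,i)\,\E[(y-r)\one(p(x)=r)\mid x\in S_i]$ with $\tau\in[0,1]$, which we average over the equal-weight groups. We also need $f^*$ to be the optimizer on $\cD_p$. Optimality of $\tau'$ for the true count among post-processings has to be reconciled with this, so we define $\tau'$ via $\cD_p$, as the theorem's setting suggests. The $\alpha$ errors from $f$ and $f^*$ then combine, the $f^*$ term being scaled by $\tfrac12$, to give the stated bound up to absorbing constants. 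Tracking this so that exactly one $\alpha$ appears is bookkeeping.
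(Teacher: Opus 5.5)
\textbf{Assessment: correct, by a different route.} Your argument is correct, and its core (Step 2) differs genuinely from the paper's proof.

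\textbf{The paper's route.} The paper parametrizes threshold rules by the unconditional capacity $k$ given to $S_1$. It computes $\frac{d}{dk}H_i(k)=(-1)^{i-1}t_i(k)/\Pr[x\in S_i]$, treating atoms and non-atoms separately, and derives the stationarity condition $t_1/H_1=t_2/H_2$. It then runs a marginal-exchange argument:
\begin{itemize}
\item assuming without loss of generality that $\tau'$ gives more capacity to $S_1$, it gets $t_1\ge t_2$;
\item it bounds the optimum by $\E[\tau(p(x),g(x))p(x)]+(c_1'-c_1)t_1$;
\item it uses $H_2\ge t_2(c-c_1)/\Pr[x\in S_2]$ together with stationarity to show the extra term is at most $\E[\tau(p(x),g(x))p(x)]$.
\end{itemize}
The case where the maximum sits at an endpoint of the feasible capacity range is handled separately.

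\textbf{Your route.} You use the Kelly-type variational characterization of proportional fairness. Concavity of $a_i(c_i)=\int_{1-c_i}^1 q_i(\kappa)d\kappa$ makes the set of utility vectors dominated by some capacity split convex. Hence $\sum_i (b_i-a_i)/a_i\le 0$ for every feasible utility vector $b$, in particular $b=a^*$.

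\textbf{What each route buys.} Your route has several advantages:
\begin{itemize}
\item no derivative computations at atoms;
\item no separate endpoint case, since the variational inequality also holds at boundary maximizers;
\item no need to know which group $\tau'$ favors, or even that $\tau'$ is optimal;
\item a stronger per-group conclusion, $a_i(\tau')\le 2a_i(\tau)$;
\item an immediate extension to $m$ equal-size groups with factor $1/m$.
\end{itemize}
In exchange, the paper's route gives an explicit description of the proportionally fair thresholds: the ratio $t_i/H_i$ is equalized across groups, which is informative in its own right.

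\textbf{Two small cleanups.}
\begin{itemize}
\item Step 1 is never used and can be deleted. The split $c_2=c_2^*/2+c$ need not even be feasible, since it can exceed $1$.
\item In Step 3 there is nothing to reconcile. Step 2 holds against any capacity-$c$ post-processing, so you get $\E_{\cD_p}[\tau(p(x),g(x))y]\ge\tfrac12\E_{\cD_p}[\tau'(p(x),g(x))y]$ whichever distribution $\tau'$ is optimal for.
\end{itemize}
To end with exactly one $\alpha$ rather than $\tfrac32\alpha$, apply the calibration bound once to the combined weight $\tau(r,i)-\tfrac12\tau'(r,i)$, which lies in $[-\tfrac12,1]$. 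Do not apply it separately to $\tau$ and $\tau'$. Note that the paper itself only converts the $\tau$ side and leaves $\tau'$ evaluated against $p(x)$.
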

The proof appears in Appendix \ref{app:price}. In the proof we find what conditions the proportionally fair solution should satisfy, and using this we bound the price of fairness.

We remark that when $S_1,S_2$ are not equal-sized, then the factor $1/2$ is in fact the size of the dominant group. We further remark that this theory applies only for our setting. Proportional fairness for general utility function can have unbounded price of fairness.

\begin{figure}[t]
    \centering
    \includegraphics[width=\columnwidth]{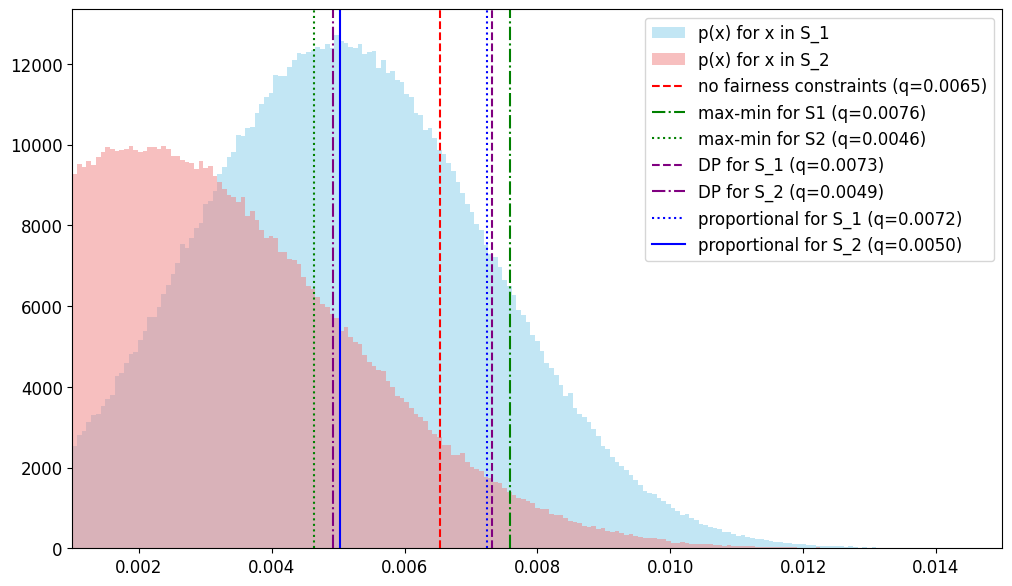}
    \caption{Simulating the output distribution for higher capacity under different type of tail dominance. See Appendix \ref{app:simulations}.
    }
    \label{fig:price-large-capacity}
\end{figure}

\section{New equal opportunity definition}\label{sec:new-def}
In this section we suggest a limited-capacity version of equal opportunity, that has a bounded price of fairness. Instead of equalizing the true positive rate, we only consider the individuals that both have positive outcome ($y=1$) and that  we could have reached if we would have allocated all resources to this group. We denote this group the \emph{achievable set}, and note that the achievable set depends both on the capacity and the hypothesis class.
\begin{definition}
    Let $\cD$ be a distribution over $\cX\times\{0,1\}$, $S_1,\ldots,S_m\subset\cX$ be a partition of $\cX$ and $\cF$ a set of functions.
    For every $c\in[0,1]$ let $\cF_c\subseteq\cF$ the set of functions in $\cF$ with capacity $c$.
    The achievable true positive count of $S_i$ with capacity $c$ is given by
    \begin{align*}
        \cA(\cF,c,S_i) = \max_{h\in\cF_c}\E_{(x,y)\sim\cD}[h(x)y|x\in S_i],
    \end{align*}
\end{definition}
Using the achievable set we define a new generalization of equal opportunity, that considers the achievable set instead of all of the $y=1$ in the group.
\begin{definition}[Achievable equal opportunity]
    Let $\cD$ be a distribution over $\cX\times\{0,1\}$, $S_1,\ldots,S_m\subset\cX$ be a partition of $\cX$ and $\cF$  a family of functions. A function $f:\cX\rightarrow[0,1]$ satisfies achievable equal opportunity if for all $i,j\in[m]$
    \begin{align*}
        \frac{\E[f(x)y|x\in S_i]}{\cA(\cF,c,S_i)} = \frac{\E[f(x)y|x\in S_j]}{\cA(\cF,c,S_j)}
    \end{align*}
\end{definition}
For $c=1$, if $\cF$ contains the constant $1$ function, then $\cA(\cF,c,S_i)=\E_{(x,y)\sim\cD}[y|x\in S_i]$ and the achievable equal opportunity requirement is exactly equal opportunity.

We prove next that the price of fairness is bounded for a rich enough class of functions $\cF$. This is a strengthening of Definition \ref{def:closed-post}, where we can choose a different function from $\cF$ for each group.
\begin{definition}
    A set of functions $\cF$ is closed under group choice post-processing if for every $f_1,\ldots,f_m\in\cF$ and post-processing function $\tau$, the function $f^*$ defined by $f^*(x)=\tau(f_{g(x)},g(x))$ is in $\cF$.
\end{definition}

\begin{claim}\label{claim:new-def}
Let $\cD$ be a distribution over $\cX\times\{0,1\}$, $S_1,\ldots,S_m\subset\cX$ be a partition of $\cX$ and $\cF$ be a function family closed under group choice post-processing. Then for every $c\in[0,1]$, if $f:\cX\rightarrow[0,1]$ has capacity $c$ and maximizes $\E_{(x,y)\sim\cD}[f(x)y]$ under achievable equal opportunity constraint, and let $h\in\cF$ be the function  maximizing $\E_{(x,y)\sim\cD}[h(x)y]$ without any fairness restrictions.
Then we have
\begin{align*}
    \E_{(x,y)\sim\cD}[f(x)y] \geq \frac{1}{m}\E_{(x,y)\sim\cD}[h(x)y].
\end{align*}
\end{claim}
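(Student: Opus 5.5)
The plan is to exhibit an explicit function $f^*\in\cF$ of capacity $c$ that satisfies achievable equal opportunity and has true positive count at least $\frac1m\E[h(x)y]$. Since $f$ is the maximizer under the constraint, $\E[f(x)y]\ge\E[f^*(x)y]$, which gives the claim.

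Write $w_i=\Pr[x\in S_i]$ and $A_i=\cA(\cF,c,S_i)$. First I upper bound the unconstrained optimum. Fix $h$ with capacity $c$ and let $c_i=\E[h(x)|x\in S_i]$, so $\sum_i w_ic_i=c$. Then
\[
\E[h(x)y]=\sum_i w_i\E[h(x)y|x\in S_i]\le \sum_i w_i\,\mathrm{TP}_i(c_i),
\]
where $\mathrm{TP}_i(c')$ is the best within-group true positive count achievable with within-group capacity $c'$. Since $w_ic_i\le c$, the group capacity satisfies $c_i\le c/w_i$.

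The key step is a concavity-type bound: allocating mass $w_ic_i\le c$ inside $S_i$ yields no more than a $\frac{w_ic_i}{c}$ fraction of what global capacity $c$ could achieve. I will interpret $\cA(\cF,c,S_i)$ as the within-group count when all capacity $c$ is spent on the group, so the in-group capacity is $\min(1,c/w_i)$. Under that reading, $w_i\,\mathrm{TP}_i(c_i)\le w_iA_i$, and hence $\E[h(x)y]\le\sum_i w_iA_i\le m\max_i w_iA_i$. This coarse bound already suffices for the factor $m$. The main obstacle is pinning down the normalization of $\cA$: the definition conditions on $S_i$, but the capacity $c$ is global. I will use closure under group choice post-processing to justify that there is a function in $\cF$ that behaves like the group-$i$ maximizer on $S_i$ and allocates nothing elsewhere, so that it has capacity $c$ and realizes $A_i$.

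For the construction, let $h_i\in\cF_c$ attain $A_i$, and set $f^*(x)=\lambda_{g(x)}h_{g(x)}(x)$ with $\lambda_i\ge0$. This is a group choice post-processing (scaling by $\lambda_i$ on group $i$), so $f^*\in\cF$. Scaling is linear, so the within-group count is $\lambda_iA_i$, and the ratios $\E[f^*y|S_i]/A_i=\lambda_i$. Achievable equal opportunity therefore forces $\lambda_i=\lambda$ for all $i$. The capacity of $f^*$ is $\lambda\sum_i w_i\E[h_i|S_i]$. Each term satisfies $w_i\E[h_i|S_i]\le c$ (the group-$i$ function uses at most all of the capacity), so the sum is at most $mc$. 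Choosing $\lambda=1/m$ therefore gives capacity at most $c$, and $\lambda\le1$ keeps values in $[0,1]$.

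It follows that $\E[f^*y]=\frac1m\sum_i w_iA_i\ge\frac1m\E[h(x)y]$. If exact capacity $c$ is required rather than at most $c$, I would add leftover capacity uniformly by increasing all $\lambda_i$ together, which preserves the equal ratios until capacity is reached, or argue that the constrained maximizer $f$ can only do better. The delicate points are this final capacity adjustment and the normalization of $\cA$ discussed above.
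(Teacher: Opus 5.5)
Your argument is correct in substance, but it reaches the key inequality by a different route than the paper.

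\textbf{Comparison of routes.} Both proofs glue per-group maximizers into $f^*(x)=h_{g(x)}(x)$ and use $\E[f^*(x)y]=\sum_i w_iA_i\ge \E[h(x)y]$. The paper then uses the achievable equal opportunity constraint to write $\E[f(x)y]=\rho\sum_i w_iA_i$, where $\rho$ is the common ratio $\E[f(x)y\mid x\in S_i]/A_i$ of the constrained optimum. It argues $\rho\ge 1/m$ by choosing a group $i^*$ with $\E[f(x)\mid x\in S_{i^*}]\ge c$ and comparing with $\frac1m h_{i^*}$ on that group.

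You instead exhibit $\lambda f^*$ as a feasible competitor whose common ratio is $\lambda\ge 1/m$. This gives $\rho\ge 1/m$ directly, and it is exactly what that step needs. The paper's comparison function ($\frac1m h_{i^*}$ on $S_{i^*}$, $0$ elsewhere) satisfies neither the constraint nor the capacity. What the paper's identity buys in return is an exact description of the constrained optimum as $\rho\sum_i w_iA_i$, so the entire loss is carried by the single number $\rho$.

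\textbf{The upper bound on the unconstrained optimum.} Your first paragraph is a detour. Since $h$ has capacity $c$, i.e.\ $h\in\cF_c$, the bound $\E[h(x)y\mid x\in S_i]\le A_i$ is immediate from the definition of $\cA$. You need neither $\mathrm{TP}_i$, nor concavity, nor a reinterpretation of $\cA$. Also, zeroing $h_i$ outside $S_i$ does not leave capacity $c$ when $w_i<c$.

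\textbf{The capacity adjustment.} This is the one real loose end. Your fallback (``the constrained maximizer can only do better'') is not valid, because a function of capacity below $c$ is not in the feasible set. Raising $\lambda$ reaches capacity $c$ with $\lambda\le 1$ only if $\sum_i w_i\E[h_i(x)\mid x\in S_i]\ge c$. That can fail for an unlucky choice of maximizers, for instance an $h_i$ that covers all positives of $S_i$ with little mass and parks the rest of its capacity on other groups.

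To close the gap, take saturating maximizers. If $w_i\E[h_i\mid S_i]<\min(c,w_i)$, then:
\begin{itemize}
\item on $S_i$, replace $h_i$ by $h_i+\mu(1-h_i)$;
\item on the other groups, scale $h_i$ by some $\nu\in[0,1)$;
\item choose $\mu,\nu$ so the mass on $S_i$ becomes $\min(c,w_i)$ and the total stays $c$.
\end{itemize}
This is a group post-processing, so the result stays in $\cF_c$. It does not decrease the true positive count on $S_i$, so it is still a maximizer. Then $\sum_i w_i\E[h_i\mid S_i]=\sum_i\min(c,w_i)\in[c,mc]$. Setting $\lambda^*=c/\sum_i\min(c,w_i)\in[1/m,1]$ gives a feasible function of capacity exactly $c$ with true positive count $\lambda^*\sum_i w_iA_i\ge \frac1m\E[h(x)y]$.
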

We prove the claim in Appendix \ref{app:new-def}, by defining a new function with capacity $mc$ allocating capacity $c$ to every group. In the appendix we also show that the bound is tight by providing an example. 

\bibliography{refs}

@article{bertsimas2011price,
  title={The price of fairness},
  author={Bertsimas, Dimitris and Farias, Vivek F and Trichakis, Nikolaos},
  journal={Operations research},
  volume={59},
  number={1},
  pages={17--31},
  year={2011},
  publisher={INFORMS}
}

@book{dudley2018real,
  title={Real analysis and probability},
  author={Dudley, Richard M},
  year={2018},
  publisher={Chapman and Hall/CRC}
}

@misc{tsirelson_probability_notes,
  author       = {Tsirelson, Boris},
  title        = {Probability Theory},
  howpublished = {Lecture notes, Lecture 4},
  institution  = {Department of Mathematics},
  note         = {Tel Aviv University / Hebrew University of Jerusalem},
  url          = {https://www.ma.huji.ac.il/~ohadfeld/Tsirelson/Courses/Prob/lect4.pdf}
}

@article{hardt2016equality,
  title={Equality of opportunity in supervised learning},
  author={Hardt, Moritz and Price, Eric and Srebro, Nati},
  journal={Advances in neural information processing systems},
  volume={29},
  year={2016}
}

@inproceedings{hebert2018multicalibration,
  title={Multicalibration: Calibration for the (computationally-identifiable) masses},
  author={H{\'e}bert-Johnson, Ursula and Kim, Michael and Reingold, Omer and Rothblum, Guy},
  booktitle={International Conference on Machine Learning},
  pages={1939--1948},
  year={2018},
  organization={PMLR}
}

@inproceedings{hu2023omnipredictors,
  title={Omnipredictors for constrained optimization},
  author={Hu, Lunjia and Navon, Inbal Rachel Livni and Reingold, Omer and Yang, Chutong},
  booktitle={International Conference on Machine Learning},
  pages={13497--13527},
  year={2023},
  organization={PMLR}
}

@inproceedings{bin2018fairness,
  title={Fairness in resource allocation: Foundation and applications},
  author={Bin-Obaid, Hamoud S and Trafalis, Theodore B},
  booktitle={International Conference on Network Analysis},
  pages={3--18},
  year={2018},
  organization={Springer}
}

@article{zink2024race,
  title={Race adjustments in clinical algorithms can help correct for racial disparities in data quality},
  author={Zink, Anna and Obermeyer, Ziad and Pierson, Emma},
  journal={Proceedings of the National Academy of Sciences},
  volume={121},
  number={34},
  pages={e2402267121},
  year={2024},
  publisher={National Academy of Sciences}
}

@article{bird2025algorithms,
  title={Are algorithms biased in education? Exploring racial bias in predicting community college student success},
  author={Bird, Kelli A and Castleman, Benjamin L and Song, Yifeng},
  journal={Journal of Policy Analysis and Management},
  volume={44},
  number={2},
  pages={379--402},
  year={2025},
  publisher={Wiley Online Library}
}

@article{barth2024quantifying,
  title={Quantifying the extent to which junior performance predicts senior performance in Olympic sports: a systematic review and meta-analysis},
  author={Barth, Michael and G{\"u}llich, Arne and Macnamara, Brooke N and Hambrick, David Z},
  journal={Sports medicine},
  volume={54},
  number={1},
  pages={95--104},
  year={2024},
  publisher={Springer}
}

@article{ogryczak2014fair,
  title={Fair optimization and networks: A survey},
  author={Ogryczak, Wlodzimierz and Luss, Hanan and Pi{\'o}ro, Micha{\l} and Nace, Dritan and Tomaszewski, Artur},
  journal={Journal of Applied Mathematics},
  volume={2014},
  number={1},
  pages={612018},
  year={2014},
  publisher={Wiley Online Library}
}

@article{marsh1994equity,
  title={Equity measurement in facility location analysis: A review and framework},
  author={Marsh, Michael T and Schilling, David A},
  journal={European journal of operational research},
  volume={74},
  number={1},
  pages={1--17},
  year={1994},
  publisher={Elsevier}
}

@article{amanatidis2022fair,
  title={Fair division of indivisible goods: A survey},
  author={Amanatidis, Georgios and Birmpas, Georgios and Filos-Ratsikas, Aris and Voudouris, Alexandros A},
  journal={arXiv preprint arXiv:2202.07551},
  year={2022}
}

@inproceedings{donahue2020fairness,
  title={Fairness and utilization in allocating resources with uncertain demand},
  author={Donahue, Kate and Kleinberg, Jon},
  booktitle={Proceedings of the 2020 conference on fairness, accountability, and transparency},
  pages={658--668},
  year={2020}
}

@inproceedings{elzayn2019fair,
  title={Fair algorithms for learning in allocation problems},
  author={Elzayn, Hadi and Jabbari, Shahin and Jung, Christopher and Kearns, Michael and Neel, Seth and Roth, Aaron and Schutzman, Zachary},
  booktitle={Proceedings of the Conference on Fairness, Accountability, and Transparency},
  pages={170--179},
  year={2019}
}

@article{wang2018machine,
  title={A machine learning framework for resource allocation assisted by cloud computing},
  author={Wang, Jun-Bo and Wang, Junyuan and Wu, Yongpeng and Wang, Jin-Yuan and Zhu, Huiling and Lin, Min and Wang, Jiangzhou},
  journal={IEEE Network},
  volume={32},
  number={2},
  pages={144--151},
  year={2018},
  publisher={IEEE}
}

@article{gopalan2021omnipredictors,
  title={Omnipredictors},
  author={Gopalan, Parikshit and Kalai, Adam Tauman and Reingold, Omer and Sharan, Vatsal and Wieder, Udi},
  journal={arXiv preprint arXiv:2109.05389},
  year={2021}
}

@inproceedings{menon2018cost,
  title={The cost of fairness in binary classification},
  author={Menon, Aditya Krishna and Williamson, Robert C},
  booktitle={Conference on Fairness, accountability and transparency},
  pages={107--118},
  year={2018},
  organization={PMLR}
}

@inproceedings{corbett2017algorithmic,
  title={Algorithmic decision making and the cost of fairness},
  author={Corbett-Davies, Sam and Pierson, Emma and Feller, Avi and Goel, Sharad and Huq, Aziz},
  booktitle={Proceedings of the 23rd acm sigkdd international conference on knowledge discovery and data mining},
  pages={797--806},
  year={2017}
}

@article{donini2018empirical,
  title={Empirical risk minimization under fairness constraints},
  author={Donini, Michele and Oneto, Luca and Ben-David, Shai and Shawe-Taylor, John S and Pontil, Massimiliano},
  journal={Advances in neural information processing systems},
  volume={31},
  year={2018}
}

@inproceedings{diana2021minimax,
  title={Minimax group fairness: Algorithms and experiments},
  author={Diana, Emily and Gill, Wesley and Kearns, Michael and Kenthapadi, Krishnaram and Roth, Aaron},
  booktitle={Proceedings of the 2021 AAAI/ACM Conference on AI, Ethics, and Society},
  pages={66--76},
  year={2021}
}

@article{zhangproportional,
  title={Proportional Fairness in Federated Learning},
  author={Zhang, Guojun and Malekmohammadi, Saber and Chen, Xi and Yu, Yaoliang},
  journal={Transactions on Machine Learning Research},
year={2022}
}
\bibliographystyle{alpha}

\newpage
\appendix
\onecolumn
\section{Quantile Functions}\label{app:quant}
Some of the proofs in this paper use quantile functions and their properties. We define the basic definitions and prove some standard properties of the quantile functions below, for completeness.
\begin{definition}\label{def:CDF}
    Let $Q$ be a distribution over $[0,1]$, the cumulative distribution function (CDF) of $Q$ is the function $F:[0,1]\rightarrow[0,1]$ defined by
    \begin{align*}
         F(t) =  \Pr_{u\sim Q}[u\leq t].
    \end{align*}
\end{definition}
\begin{definition}\label{def:quantile}
    Let $Q$ be a distribution over $[0,1]$, the quantile function of $Q$ is the function $q:[0,1]\rightarrow[0,1]$  defined by 
    \begin{align*}
         q(\kappa) = \inf\left\{t\left| F(t)\geq\kappa \right.\right\}.
    \end{align*}
\end{definition}
This is the standard definition of a quantile function, although we remark that there are some cases in the literature that other definitions are sometimes used (where instead of the infimum, the supremum or other expression is used). 

We note that for every distribution $Q$, both the CDF $F$ and the quantile function $q$ are monotonically increasing. In addition, for every $t\in[0,1]$ we have that $q(F(t))\leq t$ (as we pick $q$ to be the infimum) and $F(q(k))\geq k$. We further note that for $K>F(t)$ we have that $q(K) > t$, because $q(K)=\inf\{r| F(r)\geq K\}$, since $F(t) < K$, $q(K)$ must be larger than $t$. 

\begin{claim}\label{claim:quantile-exp}
    Let $Q$ be a distribution over $[0,1],$ $q$ be the quantile function of it and $F$ be the commulative distribution function of $Q$. Then, for every $t<t_M,\in[0,1]$,
    \begin{align}
        &\E_{u\sim Q}[\one(u > t)u] = \int_{F(t)}^1 q(r)d r \label{eq:int-quant-unbounded}\\
        &\E_{u\sim Q}[\one(t<u\leq t_M)u] = \int_{F(t)}^{F(t_M)} q(r)dr\label{eq:int-quant-bounded}
    \end{align}
\end{claim}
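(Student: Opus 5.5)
We use the standard quantile transform: if $V$ is uniform on $[0,1]$, then $q(V)$ has distribution $Q$. This gives $\E_{u\sim Q}[\one(u>t)u]=\E_V[\one(q(V)>t)\,q(V)]$, and the task reduces to identifying the event $\{q(V)>t\}$ with $\{V>F(t)\}$ up to a null set. Equation \eqref{eq:int-quant-bounded} then follows by subtracting two instances of \eqref{eq:int-quant-unbounded}, or by the same argument applied to the event $\{t<q(V)\leq t_M\}$.

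\textbf{Step 1 (Galois inequality).} We first show $q(\kappa)\leq t \iff \kappa\leq F(t)$ for $\kappa\in(0,1]$. If $\kappa\leq F(t)$, then $t$ belongs to the set $\{s: F(s)\geq\kappa\}$, so $q(\kappa)\leq t$. Conversely, suppose $q(\kappa)\leq t$. Since $F$ is right-continuous, the infimum in Definition \ref{def:quantile} is attained, so $F(q(\kappa))\geq\kappa$, as already noted in this appendix. Monotonicity then gives $F(t)\geq F(q(\kappa))\geq\kappa$. Taking complements, $q(\kappa)>t\iff\kappa>F(t)$. This matches the remark above that $K>F(t)$ implies $q(K)>t$, and it also supplies the reverse direction.

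\textbf{Step 2 (the law of $q(V)$).} By Step 1, $\Pr_V[q(V)\leq t]=\Pr[V\leq F(t)]=F(t)$ for every $t$. So $q(V)$ has CDF $F$, hence distribution $Q$ (a distribution on $[0,1]$ is determined by its CDF). The endpoint $\kappa=0$, where $q(0)$ may be degenerate, has measure zero and does not matter.

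\textbf{Step 3 (the integrals).} For a bounded measurable $\phi$ we have $\E_{u\sim Q}[\phi(u)]=\int_0^1\phi(q(r))\,dr$. Take $\phi(u)=\one(u>t)u$. By Step 1, $\one(q(r)>t)=\one(r>F(t))$, so the expectation equals $\int_{F(t)}^1 q(r)\,dr$, which is \eqref{eq:int-quant-unbounded}. For \eqref{eq:int-quant-bounded}, Step 1 gives $\one(t<q(r)\leq t_M)=\one(F(t)<r\leq F(t_M))$, which yields $\int_{F(t)}^{F(t_M)}q(r)\,dr$. Since $t<t_M$, monotonicity of $F$ gives $F(t)\leq F(t_M)$, so the limits are in the correct order.

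\textbf{Main obstacle.} The delicate point is Step 1, specifically the direction $q(\kappa)\leq t\Rightarrow\kappa\leq F(t)$. It depends on right-continuity of $F$, which makes the infimum in the definition of $q$ attained, and that is essential when $Q$ has atoms. With atoms, strict versus non-strict inequalities at $t$ must be matched exactly: the event $u>t$ corresponds to $r>F(t)$ and not to $r\geq F(t)$. Any mismatch here is harmless only because it occurs on a single point $r=F(t)$, which has Lebesgue measure zero.
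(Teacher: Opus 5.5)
Your proof is correct, and it takes a different route from the paper's.

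The paper never works with $q(V)$ directly. It forms the truncated variable $v=u\one(u>t)$ and computes its quantile function by hand: $q_v=0$ below $F(t)$ and $q_v=q$ above it. It then invokes the cited identity $\E[X]=\int_0^1 q_X(r)\,dr$. For the two-sided truncation it builds $w=u\one(t<u\le t_M)$, whose atom at $0$ also absorbs the mass above $t_M$. It therefore has to identify $q_w$ with the shifted copy $q(k-(1-F(t_M)))$ and change variables.

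You instead prove the quantile transform $q(V)\sim Q$ yourself from the Galois inequality, and then push an arbitrary bounded test function through it. Both identities then reduce to reading off $\one(q(r)>t)=\one(r>F(t))$ and $\one(q(r)\le t_M)=\one(r\le F(t_M))$ for $r\in(0,1]$.

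Your route has three advantages. It is self-contained: the right-continuity step giving $F(q(\kappa))\ge\kappa$ is exactly what the paper's cited mean formula hides. It handles any bounded measurable integrand uniformly. And the bounded case is immediate, either by the same indicator computation or by subtracting two instances of the first identity, so you avoid the shifted-quantile bookkeeping.

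The paper's route needs only the mean formula as a black box plus elementary CDF manipulations of derived variables. It never has to discuss the measurability of $q$ or the law of $q(V)$.
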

\begin{proof}
    By definition, the quantile function is monotonically increasing and bounded, and therefore the integrals are well-defined.

    We start by proving (\ref{eq:int-quant-unbounded}). Let $v$ be the random variable defined by picking $u\sim Q$, then setting  
    $v=0$ if $u\leq t$ and $v=u$ otherwise. Let $Q_v$ be the distribution of $v$ and $q_v$ be the quantile function of $v$. 

    For $r\leq t$, we have that $\Pr_{v\sim Q_v}[v\leq r]=\Pr_{u\sim Q}[u\leq t]=F(t)$.
    This implies that for all $k< F(t)$ we have that $q_v(k)=\inf\left\{r\left| \Pr_{v\sim Q_v}[v\leq r]\geq k \right.\right\}=0$. 

    For every $r>t$ we have that $\Pr_{v\sim Q_v}[v\leq r]=\Pr_{u\sim Q}[u\leq r]=F(r)$. Therefore, for $k>F(t)$ we have that $q_v(k)=\inf\left\{r\left| \Pr_{v\sim Q_v}[v\leq r]\geq k \right.\right\}=\inf\left\{r\left| \Pr_{u\sim Q}[u\leq r]\geq k \right.\right\}=q(k)$.
    
    Combining it together we have: 
    \begin{align*}
        \E_{u\sim Q}[\one(u > t)u] =\E_{v\sim Q_v}[v]=\int_{0}^1 q_v(r)dr = \int_{F(t)}^1 q(r)dr,
    \end{align*}
    where we use the fact that we can express the expected value of any random variable as the integral over its quantiles, see \cite{dudley2018real,tsirelson_probability_notes}.

    The proof of (\ref{eq:int-quant-bounded}) follows the same approach. We define a new random variable $w$ by sampling $u\sim Q$ and then setting $w=0$ if $u\leq t$ or $u>t_M$ and $w=u$ otherwise. Let $Q_w$ be the distribution of this new random variable, and $q_w$ be it's quantile function.
    For $r\leq t$:
    \begin{align*}
        \Pr_{w\sim Q_w}[w\leq r]=\Pr_{u\sim Q}[u\leq t] + \Pr_{u\sim Q}[u> t_M]=F(t) + (1-F(t_M)).
    \end{align*}
    This implies that for all $k<F(t) + (1-F(t_M))$ we have $q_w(k)=0$. 

    For $t< r\leq t_M$ we have that $\Pr_{w\sim Q_w}[w\leq r]=\Pr_{u\sim Q}[u\leq r] + \Pr_{u\sim Q}[u> t_M] =F(r) + (1-F(t_M))$, which implies that for $F(t)+ (1-F(t_M))<k\leq F(t_M)+ (1-F(t_M))$ we have $q_w(k)=q(k-(1-F(t_M)))$.
    We note that $F(t_M)+ (1-F(t_M))=1$, so we have characterized $q_w$ on its full range.
    \begin{align*}
        \E_{u\sim Q}[\one(t<u\leq t_M)u]=\E_{w\sim Q_w}[w]=\int_0^1q_w(k)dk=
        \int_{F(t) + (1-F(t_M))}^{F(t_M) + (1-F(t_M))} q_w(k) dk =
        \int_{F(t)}^{F(t_M)} q(r) dr.
    \end{align*}
\end{proof}

\begin{claim}\label{claim:int-bound}
    Let $Q$ be a distribution over $[0,1]$, $F$ its CDF and $q$ its quantile function. Then for every $t\in[0,1],\phi\in[F(t),1]$,
    \begin{align*}
        \int_{F(t)}^\phi q(r)dr\geq (\phi-F(t))t.
    \end{align*}
\end{claim}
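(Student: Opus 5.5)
The plan is to show that the integrand is pointwise at least $t$ on the open interval $(F(t),\phi]$, and then integrate. This uses only the elementary quantile facts recorded just after Definition \ref{def:quantile}.

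\textbf{Step 1 (pointwise bound).} Recall that for any $K>F(t)$ we have $q(K)>t$. The reason is that every $r$ with $F(r)\geq K$ satisfies $F(r)>F(t)$. By monotonicity of $F$ this forces $r>t$, and taking the infimum gives $q(K)\geq t$. This non-strict inequality is all we need. Hence $q(r)\geq t$ for every $r\in(F(t),\phi]$.

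\textbf{Step 2 (integrate).} The quantile function $q$ is monotone and bounded, so it is Riemann/Lebesgue integrable on $[F(t),\phi]$. The single endpoint $r=F(t)$ has measure zero and does not affect the integral. Integrating the bound from Step 1 therefore gives
\begin{align*}
\int_{F(t)}^{\phi} q(r)\,dr \;\geq\; \int_{F(t)}^{\phi} t\,dr \;=\; (\phi-F(t))\,t .
\end{align*}
If $\phi=F(t)$, both sides are zero and the claim is trivial.

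\textbf{Main obstacle.} The only delicate point is the boundary value $r=F(t)$ itself, where $q(F(t))\leq t$ could be strictly smaller than $t$. This is resolved by the measure-zero observation in Step 2. Everything else follows from monotonicity of $F$ and the infimum definition of $q$.
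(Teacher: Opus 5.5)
Your proof is correct and rests on the same key fact as the paper's, namely that $q(r)\geq t$ for every $r\in(F(t),\phi]$; the paper uses the strict version $q(F(t)+\epsilon)>t$. The paper then takes an unnecessary detour, splitting the interval in half and arguing by contradiction, whereas you integrate the pointwise bound directly, which is cleaner and needs only the non-strict inequality you justified yourself.
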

\begin{proof}
    If $\phi=F(t)$ then both sides equal zero and we are done. Therefore, we assume in the rest of the proof that $\phi > F(t)$, and let $\Delta=\phi-F(t)>0$.

    We first show that for all $\epsilon > 0$ we have that $q(F(t)+\epsilon) > t$. This is true since $q(F(t)+\epsilon)=\inf\{r| F(r)\geq F(t)+\epsilon\}$, since $F(t) < F(t)+\epsilon$ and $F$ is monotone, it must be that $q(F(t)+\epsilon)>t$. 

    Next, we bound the second half of the integral. Denote $\tilde{t}=q(F(t)+0.5\Delta)$, and from above we know that $\tilde{t} > t$. Then we have:
    \begin{align}\label{eq:top-half}
        \int_{F(t)+0.5\Delta}^{\phi}q(r)dr\geq 0.5\Delta \tilde{t} > 0.5\Delta t.
    \end{align}

    Assume towards a contradiction that the claim does not hold, that is $\int_{F(t)}^\phi q(r)dr<(\phi-F(t))t=\Delta t$. Using (\ref{eq:top-half}), this implies that
    \begin{align*}
        \int_{F(t)}^{F(t)+0.5\Delta}q(r)dr <  0.5\Delta t.
    \end{align*}
    Since $q$ is bounded, there must be $\epsilon$ such that $q(F(t)+\epsilon) < t$ and we reach a contradiction.

\end{proof}

\section{Proof of the preliminaries and definitions section}\label{app:def-proofs}

\begin{proof}[Proof of \ref{claim:existence-threshold}]
    Fix any $i\in[m]$, and we prove the claim on $S_i$. Let $Q_i$ be the distribution of $p(x)$ for $x\sim\cD$ such that $x\in S_i$.
    We denote by $F_i$ the CDF of $Q_i$ and by $q_i$ its quantile function (see definitions \ref{def:CDF} and \ref{def:quantile}).
    We define $t_i=q_i(1-c)$ and
\begin{align*}
    \gamma_i = \begin{cases}
        0   \quad &\Pr_{u\sim Q_i}[u=t_i]=0,\\
        \frac{c-1+F_i(t_i)}{\Pr_{u\sim Q_i}[u=t_i]} \quad &\text{o.w.}
    \end{cases}
\end{align*}
We show that $\gamma_i\in [0,1]$: since $t_i=q_i(1-c)$, we have that $F(t_i)\geq 1-c$, implying that $\gamma_i\geq 0$.
For the second bound, $F_i(t_i)=\Pr_{u\sim Q_i}[u\leq t_i] = \Pr_{u\sim Q_i}[u=t_i] + \Pr_{u\sim Q_i}[u < t_i]\leq \Pr_{u\sim Q_i}[u=t_i] + 1-c$, implying that $
\gamma_i\leq 1$.
The randomized threshold function $\tau$ on $S_i$ is defined by:
\begin{align*} 
    \tau(x,i) = \begin{cases}
        1 \quad &p(x)>t_i,\\
        0 \quad &p(x)<t_i\\
        \gamma_i \quad &p(x)=t_i.
    \end{cases}
\end{align*}
We show that $\tau$ satisfies the capacity requirements:
\begin{align*}
    \E_{x\sim\cD}[\tau(p(x),i)|x\in S_i]=&\E_{u\sim Q_i}[\tau(u,i)]\\=&
    1-F_i(q_i(1-c))+\Pr_{u\sim Q_i}[u=q_i(1-c)]\gamma_i\\=&
    1-F_i(q_i(1-c)) + (c-1+F_i(q_i(1-c))) = c.
\end{align*}
\end{proof}

\begin{proof}[Proof of Claim \ref{claim:loss-equiv}]

    Let $\ell:\{0,1\}\times\{0,1\}\rightarrow[0,1]$ be any bounded binary loss function with $\ell(0,0)=\ell(1,1)=0$. Let $h\in \cF$ be any function with capacity $c$.
    We express the expected loss of $h$, denoted by $\mathcal{L}_{\cD}(h)$: 
    \begin{align*}
        \mathcal{L}_{\cD}(h) =&\E_{(x,y)\sim\cD}[h(x)\ell(1,y) + (1-h(x))\ell(0,y)]\\
        =&\Pr_{(x,y)\sim\cD}[y=1]\E_{(x,y)\sim\cD}[(1-h(x))\ell(0,1)|y=1] + \Pr_{(x,y)\sim\cD}[y=0]\E_{(x,y)\sim\cD}[h(x)\ell(1,0)|y=0]
        \\=& \ell(0,1)\Pr_{(x,y)\sim\cD}[y=1]\left(1-\E_{(x,y)\sim\cD}[h(x)|y=1]\right) + \ell(1,0)\Pr_{(x,y)\sim\cD}[y=0]\E_{(x,y)\sim\cD}[h(x)|y=0]
    \end{align*}
    The function $h$ has capacity exactly $c$:
    \begin{align*}
        \Pr_{(x,y)\sim\cD}[y=0]\E_{(x,y)\sim\cD}[h(x)|y=0] = c-\Pr_{(x,y)\sim\cD}[y=1]\E_{(x,y)\sim\cD}[h(x)|y=1].
    \end{align*}
    We use this expression in the expected loss:
    \begin{align*}
        \mathcal{L}_{\cD}(h)=&\ell(0,1)\Pr_{(x,y)\sim\cD}[y=1]\left(1-\E_{(x,y)\sim\cD}[h(x)|y=1]\right) + \ell(1,0)\left(c-\Pr_{(x,y)\sim\cD}[y=1]\E_{(x,y)\sim\cD}[h(x)|y=1]\right)
        \\=&
        \ell(0,1)\Pr_{(x,y)\sim\cD}[y=1]+\ell(1,0)c-(\ell(1,0)+\ell(1,0))\E_{(x,y)\sim\cD}[h(x)y].
    \end{align*}
    The only term in $\mathcal{L}_{\cD}(h)$ that depends on $h$ is $-(\ell(1,0)+\ell(1,0))\E_{(x,y)\sim\cD}[h(x)y]$. 
    Therefore, any function $f\in\cF$ that maximizes $\E_{(x,y)\sim\cD}[f(x)y]$ is also the function that minimizes $\mathcal{L}_{\cD}(h)$.
\end{proof}

In the next claim we formalize the normalization requirement from Definition \ref{def:norm-prop}, that a uniform splitting of a group would not change the fairness requirement. This claim proves that the normalization we chose is the appropriate one for the fairness definition.
\begin{claim}
    Let $S_1,\ldots,S_m$ be a partition of $\cX$.
    For any $\eta\in(0,1)$, let $S^{(1)}_1,S^{(2)}_1$ be the groups created by splitting $S_1$ into two random subgroups of sizes $\eta,(1-\eta)$ respectively. Then for any function $f:\cX\rightarrow[0,1]$,
    \begin{align*}
        \mathbf{Prop}_{\cD,S_1,S_2,\ldots ,S_m}(f)=\mathbf{Prop}_{\cD,S_1^{(1)},S_1^{(2)},S_2,\ldots ,S_m}(f)
    \end{align*}
\end{claim}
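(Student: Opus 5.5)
The plan is to show that the two subgroups have exactly the same conditional true positive count as $S_1$. The only terms of $\mathbf{Prop}$ that change under the split are those coming from $S_1$. Once the conditional utilities agree, the new weights $\eta\Pr[x\in S_1]$ and $(1-\eta)\Pr[x\in S_1]$ add up to the old weight $\Pr[x\in S_1]$.

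\emph{Formalizing the random split.} Model the split by an independent coin. For each $x\in S_1$, independently of $(x,y)$, draw $b\sim\textbf{Ber}(\eta)$; put $x$ in $S_1^{(1)}$ if $b=1$ and in $S_1^{(2)}$ otherwise. Equivalently, extend $\cX$ by the extra coordinate $b$. The function $f$ depends only on $x$, and $y$ depends only on $x$, so $b$ is independent of $(f(x),y)$ conditionally on $x\in S_1$. This gives
\[
\Pr[x\in S_1^{(1)}]=\eta\Pr[x\in S_1], \qquad \Pr[x\in S_1^{(2)}]=(1-\eta)\Pr[x\in S_1].
\]
By the same independence,
\[
\E[f(x)y\mid x\in S_1^{(j)}]=\E[f(x)y\,\one(b=\cdot)\mid x\in S_1]/\Pr[b=\cdot]=\E[f(x)y\mid x\in S_1]
\]
for $j=1,2$.

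\emph{Substituting into Definition~\ref{def:norm-prop}.} Write $u_1=\E[f(x)y\mid x\in S_1]$. The new proportional fairness expression has, in place of the old $S_1$ term $\Pr[x\in S_1]\log u_1$, the two terms
\[
\eta\Pr[x\in S_1]\log u_1+(1-\eta)\Pr[x\in S_1]\log u_1=\Pr[x\in S_1]\log u_1 .
\]
The terms for $S_2,\ldots,S_m$ are unchanged, so the two expressions are equal. If $u_1=0$, both sides equal $-\infty$ (using the convention $\log 0=-\infty$), so the equality still holds.

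\emph{Main obstacle.} The only delicate step is making ``random subgroups'' precise. The equality of conditional utilities holds in expectation over the splitting randomness, which is exact when the split is modeled as an independent coin, as above. For a fixed deterministic split it holds only approximately. I would therefore state the independent-coin model explicitly as the meaning of a random split, and then carry out the computation above.
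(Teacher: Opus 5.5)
Your proposal is correct and follows the paper's proof. Both arguments show that each random subgroup has the same conditional true positive count $\E[f(x)y\mid x\in S_1]$, so the two new terms recombine into the old $S_1$ term because their weights sum to $\Pr[x\in S_1]$; your independent-coin model makes precise what the paper only asserts (``random subsets''), and your explicit $\log u_1$ bookkeeping avoids the paper's slip of dropping the logarithm in the combined term.
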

\begin{proof}
The groups $S^{(1)}_1,S^{(2)}_1$ are random subsets of  $S_1$, so for every function $f:\cX\rightarrow[0,1]$
\begin{align*}
\E_{(x,y)\sim\cD,S_1^{(1)},S_1^{(2)}}\left[f(x) y | x\in S_1^{(1)}  \right] = \E_{(x,y)\sim\cD,S_1^{(1)},S_1^{(2)}}\left[f(x) y | x\in S_1^{(2)}  \right] = \E_{(x,y)\sim\cD}\left[f(x) y | x\in S_1  \right].
\end{align*}
By definition,
\begin{align*}
\mathbf{Prop}_{\cD,S_1^{(1)},S_1^{(2)},S_2,\ldots ,S_m}(f)=&
    \Pr_{(x,y)\sim\cD}\left[x\in S_1^{(1)}\right]\log\E_{(x,y)\sim\cD}\left[f(x) y | x\in S_1^{(1)}  \right] \\&+ \Pr_{(x,y)\sim\cD}\left[x\in S_1^{(2)}\right]\log\E_{(x,y)\sim\cD}\left[f(x) y | x\in S_1^{(2)}  \right]\\&+\sum_{i=2}^m\Pr_{(x,y)\sim\cD}[x\in S_i]\log\E_{(x,y)\sim\cD}\left[f(x) y | x\in S_{i}  \right]\\=&
    \left(\Pr_{(x,y)\sim\cD}\left[x\in S_1^{(1)}\right]+\Pr_{(x,y)\sim\cD}\left[x\in S_1^{(2)}\right]\right)\E_{(x,y)\sim\cD}\left[f(x) y | x\in S_1  \right]\\&+\sum_{i=2}^m\Pr_{(x,y)\sim\cD}[x\in S_i]\log\E_{(x,y)\sim\cD}\left[f(x) y | x\in S_{i}  \right]\\=&\mathbf{Prop}_{\cD,S_1,S_2,\ldots ,S_m}(f)
\end{align*}
\end{proof}

\section{Proof of Section \ref{sec:equiv}}\label{app:equiv-proofs}
\begin{proof}[Proof of Claim \ref{claim:max-min-equal}]
Assume towards a contradiction that we are in a $c$ non-degenerated distribution $\cD$, and that the function $f\in\cF$ satisfying max-min fairness does not have equal true positive count between groups. We find a new function $h\in\cF$ that is a post-processing of $f$ and has a higher max-min value, contradicting the optimality of $f$.

Let $i_1,\ldots,i_k\in [m]$ be all of the group in $[m]$ with a minimal true positive count, i.e. minimal $\mu_{\min}=\E_{(x,y)\sim\cD}[f(x)y|x\in S_{i_j}]$ (it is possible that $k=1$ and there is a single group with a minimal value). According to our assumption of unequal true positive count between groups, there exists at least one $i_{\max}\in [m]$ with $\mu_{\max}=\E_{(x,y)\sim\cD}[f(x)y|x\in S_{i_{\max}}] > \mu_{\min}$. 
Set
\begin{align*}
    &\delta = \min\left\{\frac{1}{2}\left(\mu_{\max}-\mu_{\min}\right),\Pr_{(x,y)\sim\cD}\left[x\in \cup_{j\in[k]}S_{i_j},f(x)\leq1-\mu_{\min}\right]\mu_{\min}\right\}\\
    &\epsilon = \delta \frac{\Pr_{(x,y)\sim\cD}[x\in S_{i_{\max}},f(x)\geq\delta]}{\Pr_{(x,y)\sim\cD}\left[x\in \cup_{j\in[k]}S_{i_j},f(x)\leq 1-\mu_{\min}\right]}
\end{align*}
We note that $\mu_{\min}< c< 0.5$ and so $1-\mu_{\min}>\mu_{\min}$. Combining with the fact that $f(x)\geq 0$ we get that $\Pr_{(x,y)\sim\cD}\left[x\in \cup_{j\in[k]}S_{i_j},f(x)<1-\mu_{min}\right] >0$ and $\epsilon$ is well-defined.

We define a new function $h$ that is a post-processing of $f$ by $h(x)=\tau(f(x),g(x))$ for: 
\begin{align*}
    \tau(r,i) = \begin{cases}
        r-\delta \quad &\text{if }i=i_{\max} \text{ and }r\geq\delta\\
        r+\epsilon \quad &\text{if }i\in \{i_1,\ldots,i_k\} \text{ and } r\leq 1-\mu_{\min}\\
        r \quad &\text{o.w.}.
    \end{cases}
\end{align*}
That is, we increase $f$ on $S_{i_j},\ldots,S_{i_j}$ and decrease its values on $S_{i_{\max}}$. The condition of $r\geq\delta$ is to ensure that the output of $h$ is in $[0,1]$ (as $\delta\in(0,1]$). For $\epsilon$, we need to show that $\epsilon\in[0,\mu_{\min}]$:

\begin{align*}
    \epsilon\leq\frac{\delta}{\Pr_{(x,y)\sim\cD}\left[x\in \cup_{j\in[k]}S_{i_j},f(x)<\mu_{max}\right]}\leq \mu_{\min} 
\end{align*}
by the definition of $\delta$, and therefore the output of $h$ is in $[0,1]$. 

We show next that $h$ has capacity exactly $c$:
\begin{align*}
    \E_{(x,y)\sim\cD}[h(x)]=&\sum_{i\in[m]}\Pr_{(x,y)\sim\cD}[x\in S_i]\E_{(x,y)\sim\cD}[h(x)|x\in S_i]\\
    =&\E_{(x,y)\sim\cD}[f(x)] - \delta \Pr_{(x,y)\sim\cD}[x\in S_{i_{\max}},f(x)\geq\delta] + \epsilon \Pr_{(x,y)\sim\cD}\left[x\in \cup_{j\in[k]}S_{i_j},f(x)\leq1-\mu_{\min}\right]\\
    &=\E_{(x,y)\sim\cD}[f(x)].
\end{align*}

To finish the proof, we show that $h$ has higher max-min value than the original $f$, contradicting it's optimality. We show it by going over all groups $i\in[m]$:
\begin{itemize}
    \item For $i_{\max}$: 
    $\E_{(x,y)\sim\cD}[h(x)y|x\in S_{i_{\max}}] \geq \E_{(x,y)\sim\cD}[f(x)y|x\in S_{i_{\max}}] - \delta > \mu_{\min}$, as $\delta <  \mu_{\max}-\mu_{\min}$. 
    \item For $i\notin \{i_1,\ldots,i_k\}\cup\{i_{\min}\}$: by definition $\E_{(x,y)\sim\cD}[h(x)y|x\in S_{i}] = \E_{(x,y)\sim\cD}[f(x)y|x\in S_{i}] > \mu_{\min}$.
    \item For $i\in \{i_1,\ldots,i_k\}$: Let $f_{out}$ be defined by $f_{out}(x)=\one(f(x)\leq 1-\mu_{\min})$, where $\one$ is the indicator function. 
    Then $f_{out}$ is a post-processing of $f$ and therefore in $\cF$. In addition,$c=\E_{(x,y)\sim\cD}[f(x)]\geq (1-\Pr_{(x,y)\sim\cD}[f(x)\leq 1-\mu_{\min}])(1-\mu_{\min})$ implying that $\Pr_{(x,y)\sim\cD}[f_{out}(x)=1] =\Pr_{(x,y)\sim\cD}[f(x)\leq 1-\mu_{\min}]\geq 1-\frac{c}{1-\mu_{\min}}\geq 1-2c$.
    From the non-degeneracy condition, $\E_{(x,y)\sim\cD}[y | x\in S_{i_j}, f_{out}(x)=1]>0$. From this we can bound:
    \begin{align*}
       \E_{(x,y)\sim\cD}[h(x)y|x\in S_{i_j}]=& \E_{(x,y)\sim\cD}[f(x)y|x\in S_{i_j}] + \epsilon \E_{(x,y)\sim\cD}[y\one(f_{out}(x)=1)|x\in S_{i_j}] \\>&
       \E_{(x,y)\sim\cD}[f(x)y|x\in S_{i_j}]
    \end{align*}
    where the inequality is because of the non degenerate condition, and the probability of $f(x)\leq 1-\mu_{\min}$ for $x\in S_{i_j}$ is strictly positive.
\end{itemize}

\end{proof}

  \begin{proof}[Proof of Lemma \ref{lem:eo-maxmin}]
    Let $f\in\cF$ be the function that satisfies max-min fairness with respect to $S_1,\ldots,S_m$, and let $h\in\cF$ be the function satisfying equal opportunity with respect to these groups and maximizing $\E_{(x,y)\sim\cD}[h(x)y]$, as in the Lemma statement. We prove the claim by showing the both $f,h$ satisfy both conditions, i.e. that $f$ satisfies equal opportunity and maximizes $\E_{(x,y)\sim\cD}[f(x)y]$, and that $h$ satisfies max-min fairness.

    First, we show that $f$ also satisfies the equal opportunity constraints. From Claim \ref{claim:max-min-equal}, we have that for all $i,j\in[m]$: $\E_{(x,y)\sim\cD}[f(x) y | x\in S_i]=\E_{(x,y)\sim\cD}[f(x) y | x\in S_j]$, and we denote this value by $\mu_f$. 
    Since we have equal base rates, for all $i,j\in[m]$
    \begin{align*}
        \E_{(x,y)\sim\cD}[f(x) |y=1, x\in S_i]=\frac{\mu_f}{\E_{(x,y)\sim\cD}[y=1| x\in S_i]}=\E_{(x,y)\sim\cD}[f(x) |y=1, x\in S_j].
    \end{align*}
    Which means that $f$ satisfies equal opportunity.

    For $h$, the equal opportunity constraints together with the equal base rates implies that for all $i,j\in[m]$:
    \begin{align*}
        \E_{(x,y)\sim\cD}[h(x) y|x\in S_i]=\E_{(x,y)\sim\cD}[y|x\in S_i]\E_{(x,y)\sim\cD}[h(x)|,y=1x\in S_i]=\E_{(x,y)\sim\cD}[h(x) y|x\in S_j].
    \end{align*}
    We denote this value by $\mu_h=\E_{(x,y)\sim\cD}[h(x) y|x\in S_i]$.
    We note that by the max-min definition, we have that $\min_{i\in[m]}\E_{(x,y)\sim\cD}[h(x)  y|x\in S_i]=\mu_h$.

    Using this notation
    \begin{align}\label{eq:max-min-util}
        \E_{(x,y)\sim\cD}[h(x)  y ]=\sum_{i\in[m]}\Pr_{(x,y)\sim\cD}[x\in S_i]\E_{(x,y)\sim\cD}[h(x) y |x\in S_i]=\sum_{i\in[m]}\Pr_{(x,y)\sim\cD}[x\in S_i]\mu_h=\mu_h.
    \end{align}
    Using an identical argument on $f$ we have that, $\E_{(x,y)\sim\cD}[f(x) y ]=\mu_f$.

    Since $h\in\cF$ is the function in $\cF$ that satisfies equal opportunity and maximizes $\E_{(x,y)\sim\cD}[f(x)  y ]$ and $f$ satisfies equal opportunity, we have that
    \begin{align*}
        \E_{(x,y)\sim\cD}[h(x) y ] = \mu_h \geq \E_{(x,y)\sim\cD}[f(x) y ] = \mu_f.
    \end{align*}

    The function $f$ is satisfies max-min fairness, therefore
    \begin{align*}
        \min_{i\in[m]}\E_{(x,y)\sim\cD}[h(x) y|x\in S_i]=\mu_h\leq \min_{i\in[m]}\E_{(x,y)\sim\cD}[f(x) y|x\in S_i]=\mu_f.
    \end{align*}
    We conclude that $\mu_f=\mu_h$, and that both functions satisfy both requirements.
\end{proof}

\begin{proof}[Proof of Lemma \ref{lemma:proportional-reg}]
We write the proportional fairness optimization expression explicitly:
\begin{align*}
    \mathbf{Prop}_{\cD,\cS}(f) =&
    \sum_{i\in[m]}\Pr_{(x,y)\sim\cD}[x\in S_i]\log \E_{(x,y)\sim\cD}[f(x) y | x\in S_i]\\=&
    \sum_{i\in[m]}\Pr_{(x,y)\sim\cD}[x\in S_i]\log \frac{\E_{(x,y)\sim\cD}[f(x) y \one(x\in S_i)]}{\Pr_{(x,y)\sim\cD}[x\in S_i]}\\=&
    \sum_{i\in[m]}\Pr_{(x,y)\sim\cD}[x\in S_i]\log \frac{\E_{(x,y)\sim\cD}[f(x) y\one(x\in S_i)]\E_{(x,y)\sim\cD}[f(x) y] }{\Pr_{(x,y)\sim\cD}[x\in S_i]\E_{(x,y)\sim\cD}[f(x) y]}\\=&
    \sum_{i\in[m]}\Pr_{(x,y)\sim\cD}[x\in S_i]\left(\log \frac{\E_{(x,y)\sim\cD}[f(x) y \one(x\in S_i)]}{\Pr_{(x,y)\sim\cD}[x\in S_i]\E_{(x,y)\sim\cD}[f(x) y]} + \log \E_{(x,y)\sim\cD}[f(x)y]\right) \\=&
    -D_{KL}(Q_\cS,Q_f)+ \log \E_{(x,y)\sim\cD}[f(x)  y] 
\end{align*}
Where the minus sign comes from flipping the denominator and numerator to get the divergence.
\end{proof}
\begin{proof}[Proof of Claim \ref{claim:max-min-dkl}]
    From Claim \ref{claim:max-min-equal}, a function $f$ satisfying max-min fairness requirement for a non-degenerate distribution satisfies that for all $i\neq j$:
    \begin{align*}
        \E_{(x,y)\sim\cD}[f(x)y|x\in S_i]=\E_{(x,y)\sim\cD}[f(x)y|x\in S_j] = \E_{(x,y)\sim\cD}[f(x)y].
    \end{align*}
    Such $f$ satisfies for every $i\in[m]$:
    \begin{align*}
        \Pr_{j\sim Q_f}[j=i]=\frac{\E_{(x,y)\sim\cD}[f(x)y\one(x\in S_i)]}{\E_{(x,y)\sim\cD}[f(x)y]} = \E_{(x,y)\sim\cD}[\one(x\in S_i)] = \Pr_{j\sim Q_\cS}[j=i].
    \end{align*}
    Which implies that $D_{KL}(Q_\cS,Q_f)=0$.
\end{proof}

\section{Proof of Section \ref{sec:price-of-fairness}}\label{app:price}

In all of the claims, we have assumed the predictor $p$ is perfectly calibrated. Using the following claim, we show that the claims still holds where $p$ has small calibration error, where we pay a factor of our calibration error.
\begin{claim}\label{claim:calibration}
    Let $\cD$ be a distribution over $\cX\times\{0,1\}$ where $
    \cX$ is divided into groups $S_1,\ldots,S_m$. 
    Let $p:\cX\rightarrow[0,1]$ be a predictor that is $\alpha$-calibrated on $S_1,S_2$, satisfying for all $i\in[m]$
    \begin{align}
        \int_0^1 \left|\E_{(x,y)\sim\cD}[(y-r)\wedge \one(p(x)=r)|x\in S_i]\right|dr \leq\alpha
    \end{align}
    Then, for every randomized group threshold function $\tau:[0,1]\times[m]\rightarrow[0,1]$ and every group $i$, if $f:\cX\rightarrow[0,1]$ is defined by $f(x)=\tau(p(x),g(x))$ for all $x$, then
    \begin{align}
        \left| \E_{(x,y)\sim\cD}[\tau(p(x),i)y|x\in S_i] -  \E_{(x,y)\sim\cD}[\tau(p(x),i) p(x)|x\in S_i] \right| \leq \alpha.
    \end{align}
\end{claim}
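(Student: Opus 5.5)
The plan is to condition on the value of the predictor. Once $p(x)=r$ is fixed, $\tau(p(x),i)$ is the deterministic number $\tau(r,i)$. So the difference between $y$ and $p(x)$ is weighted by a bounded constant on each level set, and it is controlled by the calibration error at that level.

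Fix a group $i$ and write $Q_i$ for the distribution of $p(x)$ conditioned on $x\in S_i$. First I will rewrite the quantity of interest as a single expectation:
\begin{align*}
\E_{(x,y)\sim\cD}[\tau(p(x),i)y|x\in S_i]-\E_{(x,y)\sim\cD}[\tau(p(x),i)p(x)|x\in S_i]=\E_{(x,y)\sim\cD}[\tau(p(x),i)(y-p(x))|x\in S_i].
\end{align*}

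Second, I will decompose over the level sets $\{p(x)=r\}$. I read the paper's calibration expression $\E[(y-r)\wedge\one(p(x)=r)|x\in S_i]$ as $\E[(y-r)\one(p(x)=r)|x\in S_i]$, and the integral over $r$ as integration against the law of $p(x)$ (a sum over atoms plus a density part). With that reading the expectation above becomes
\begin{align*}
\int_0^1 \tau(r,i)\,\E_{(x,y)\sim\cD}[(y-r)\one(p(x)=r)|x\in S_i]\,dr ,
\end{align*}
since on the event $p(x)=r$ we have $y-p(x)=y-r$ and $\tau(p(x),i)=\tau(r,i)$.

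Third, I will use that $\tau(r,i)\in[0,1]$ for every $r$. This holds because $\tau$ only takes the values $0$, $1$ and $\gamma_i\in[0,1]$. Pulling the absolute value inside the integral and bounding $|\tau(r,i)|\le 1$ then gives
\begin{align*}
\left|\int_0^1 \tau(r,i)\,\E[(y-r)\one(p(x)=r)|x\in S_i]\,dr\right|\le\int_0^1\left|\E[(y-r)\one(p(x)=r)|x\in S_i]\right|dr\le\alpha,
\end{align*}
which is exactly the $\alpha$-calibration assumption on $S_i$. I note that the threshold structure of $\tau$ plays no role beyond the fact that $\tau(\cdot,i)$ takes values in $[0,1]$. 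The same argument therefore works for any group post-processing.

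The main obstacle is the second step, namely making the level-set decomposition rigorous. If the integral is taken literally with respect to Lebesgue measure $dr$, then for a continuous $Q_i$ every level set has probability zero and the calibration integral is trivially $0$. The intended meaning must therefore be integration with respect to the law of $p(x)$. Under that interpretation the disintegration is standard: apply the tower property, conditioning on $p(x)$, and set $\E[y|p(x),x\in S_i]-p(x)$ as the calibration residual. The bound then reads $\E[\tau(p(x),i)\,(\E[y|p(x),x\in S_i]-p(x))]$, and its absolute value is at most $\E[|\E[y|p(x),x\in S_i]-p(x)|]\le\alpha$. I would state this interpretation of the calibration condition explicitly at the start of the proof and then carry out the three steps above.
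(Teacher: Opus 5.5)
Your proof is correct and is essentially the paper's argument: the paper writes $\E[\tau(p(x),i)y\mid x\in S_i]$ as $\gamma_i$ times the contribution of the atom $\{p(x)=t_i\}$ plus the contribution of $\{p(x)>t_i\}$, applies the triangle inequality, and bounds both pieces by $\int_{t_i}^1 |\E[(y-r)\one(p(x)=r)\mid x\in S_i]|\,dr\le\alpha$, which is your level-set decomposition with $|\tau(r,i)|\le 1$ specialized to threshold functions. Your explicit reading of the calibration integral as $\E[\,|\E[y\mid p(x),x\in S_i]-p(x)|\mid x\in S_i]$ is the interpretation the paper's proof uses tacitly when it charges the single atom at $t_i$ to the integral (a sum over atoms in the discrete case), and your argument extends to any $[0,1]$-valued group post-processing; both are gains in rigor and generality rather than a different route.
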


\begin{proof}
    Fix some $i\in[m]$ and let $t_i,\gamma_i$ be the threshold and value at the threshold of $\tau$. That is, $\tau(t_i,i)=\gamma_i$ and for all $r>t_i$, $\tau(r,i)=1$.
    By definition,
    \begin{align*}
        \E_{(x,y)\sim\cD}[\tau(p(x),i)y|x\in S_i]
        =& \gamma_i\E_{(x,y)\sim\cD}[\one(p(x)=t_i) y|x\in S_i] + \E_{(x,y)\sim\cD}[\one(p(x)>t_i) y|x\in S_i].
    \end{align*}
    Then we can bound the error in approximation, using the triangle inequality
    \begin{align*}
        \left| \E_{(x,y)\sim\cD}[\tau(p(x),i)y|x\in S_i]-\E_{(x,y)\sim\cD}[\tau(p(x),i) p(x)|x\in S_i] \right| \leq& \gamma_i  \left|\E_{(x,y)\sim\cD}[\one(p(x)=t_i) (y-p(x))|x\in S_i] \right| \\&+  \left|\E_{(x,y)\sim\cD}[\one(p(x)>t_i) (y-p(x))|x\in S_i] \right|
        \\\leq & \int_{t_i}^1 \left|\E_{(x,y)\sim\cD}[(y-r)\wedge \one(p(x)=r)|x\in S_i] \right| \leq \alpha.
    \end{align*}
\end{proof}

\begin{proof}[Proof of Theorem \ref{thm:eo-diff}]
    Let $\cD,p$ be as in the theorem statement. Since $\tau$ satisfies max-min fairness with respect to groups $S_1,S_2$ on the simulated distribution and we are in a non degenerate setting, it satisfies $\E_{(x,y)\sim\cD_p}[\tau(p(x),1)p(x)|x\in S_1]= \E_{(x,y)\sim\cD_p}[\tau(p(x),2)p(x)|x\in S_2]$.
    Since $\tau$ has capacity $c$ and $t_0$ is such that $\E_{(x,y)\sim\cD}[t\geq t_0 \one(x\in S_i)]\geq c$, it must be that both thresholds in $\tau$ are at least $t_0$ (else we are over our capacity).
    We apply Claim \ref{claim:max-min-dp} with $\alpha=0$, and get that
    \begin{align*}
        \E_{(x,y)\sim\cD}[\tau(p(x),1)|x\in S_1]\leq \E_{(x,y)\sim\cD}[\tau(p(x),2)|x\in S_2].
    \end{align*}
    as required.

\begin{claim}\label{claim:max-min-dp}
    Let $\cD$ be a distribution over $\cX\times\{0,1\}$ where $
    \cX$ is divided into two groups $S_1,S_2$. 
    Let $p:\cX\rightarrow[0,1]$ be a predictor on $S_1,S_2$, such that $S_1$ is $t_0$-tail dominant over $S_2$.
   
    For any randomized group threshold function $\tau:[0,1]\times[2]\rightarrow[0,1]$ with thresholds $t_1,t_2\geq t_0$, if $\tau$ on $p$ satisfies 
    \begin{align}
        \Pr_{(x,y)\sim\cD}[\tau(p(x),1)p(x)|x\in S_1] \leq\Pr_{(x,y)\sim\cD}[\tau(p(x),2)p(x)|x\in S_2] + \alpha \label{eq:equal-p-exp}
    \end{align}
    then it also satisfies
    \begin{align*}
    \Pr_{(x,y)\sim\cD}[\tau(p(x),1)|x\in S_1] \leq\Pr_{(x,y)\sim\cD}[\tau(p(x),2)|x\in S_2] + \frac{\alpha}{t_{\min}},
    \end{align*}
    where $t_{\min}=\min\{t_1,t_2\}$.
\end{claim}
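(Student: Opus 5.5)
We will argue by contradiction. Suppose that $\E[\tau(p(x),1)\mid x\in S_1] > \E[\tau(p(x),2)\mid x\in S_2] + \alpha/t_{\min}$. We will show that this forces the left side of (\ref{eq:equal-p-exp}) to exceed its right side. (In the statement, $\Pr[\cdot\mid\cdot]$ of a $[0,1]$-valued quantity should be read as the expectation.)

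\textbf{Setup.} For $i\in[2]$, let $Q_i$ be the law of $p(x)$ given $x\in S_i$, with CDF $F_i$ and quantile function $q_i$ (Appendix~\ref{app:quant}). Write $c_i=\E[\tau(p(x),i)\mid x\in S_i]$ for the capacity on group $i$. A randomized threshold $(t_i,\gamma_i)$ keeps exactly the top $c_i$ mass of $Q_i$, with the atom at $t_i$ split by $\gamma_i$. The first step is the identity
\[
\E[\tau(p(x),i)p(x)\mid x\in S_i]=\int_{1-c_i}^{1} q_i(r)\,dr .
\]
This follows from Claim~\ref{claim:quantile-exp}, which handles the part $p(x)>t_i$ as $\int_{F_i(t_i)}^1 q_i$. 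The atom contributes $\gamma_i\Pr[p=t_i]\,t_i$. That atom fills the quantile interval $[1-c_i,F_i(t_i)]$, on which $q_i\equiv t_i$, exactly as in the proof of Claim~\ref{claim:existence-threshold}. The same picture also gives $q_i(r)\ge t_i$ for all $r\in(1-c_i,1]$.

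\textbf{Key comparison.} Next we show that $q_1(\kappa)\ge q_2(\kappa)$ for every $\kappa$ with $q_2(\kappa)\ge t_0$; in particular this holds for all $\kappa\in(1-c_2,1]$, since there $q_2(\kappa)\ge t_2\ge t_0$. Tail dominance says $F_1(r)\le F_2(r)$ for all $r\ge t_0$. Suppose $q_1(\kappa)<s:=q_2(\kappa)$. For $r\in(q_1(\kappa),s)$ we have $F_1(r)\ge\kappa>F_2(r)$, the strict inequality because $r<q_2(\kappa)$. Choosing such an $r\ge t_0$ contradicts dominance.

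The only problematic case is $s=t_0$ with the interval $(q_1(\kappa),s)$ lying entirely below $t_0$. We expect this edge case, caused by atoms exactly at $t_0$ or at the thresholds, to be the main technical obstacle. We plan to handle it either by applying dominance at $r=t_0$ itself (right-continuity of $F$), or by noting that it only affects quantile levels where $q_2=t_0\le t_{\min}$. In that situation the bound $q_1\ge t_1\ge t_0=q_2$ on the relevant range already suffices.

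\textbf{Conclusion.} Under the contradiction hypothesis we have $c_1>c_2$, so we split
\[
\int_{1-c_1}^1 q_1 \;=\; \int_{1-c_2}^1 q_1+\int_{1-c_1}^{1-c_2} q_1 .
\]
By the comparison step, the first term is at least $\int_{1-c_2}^1 q_2=\E[\tau(p(x),2)p(x)\mid x\in S_2]$. By $q_1\ge t_1$ on $(1-c_1,1]$ (or by Claim~\ref{claim:int-bound}), the second term is at least $(c_1-c_2)t_1\ge (c_1-c_2)t_{\min}$. Therefore
\[
\E[\tau(p(x),1)p(x)\mid x\in S_1]\ge \E[\tau(p(x),2)p(x)\mid x\in S_2]+(c_1-c_2)t_{\min}>\E[\tau(p(x),2)p(x)\mid x\in S_2]+\alpha ,
\]
which contradicts (\ref{eq:equal-p-exp}). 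Hence $c_1\le c_2+\alpha/t_{\min}$.

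If $t_{\min}=0$ the bound $\alpha/t_{\min}$ is vacuous, unless $\alpha=0$. In that case we read the statement as $c_1\le c_2$ and argue the same way using the strict positivity of $q_1$ on $(1-c_1,1]$, as in Claim~\ref{claim:int-bound}.
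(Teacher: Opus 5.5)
Your argument is correct for $t_{\min}>0$, and it is organized differently from the paper's. The paper writes $\E[\tau(p(x),i)p(x)\mid x\in S_i]$ as an explicit atom term $\gamma_i\Pr[p(x)=t_i\mid x\in S_i]\,t_i$ plus $\int_{F_i(t_i)}^1 q_i$. It proves $q_1\ge q_2$ only on $(F_1(t_0),1]$, where $q_1(\kappa)>t_0$, so the separating point automatically lies above $t_0$. It then splits into four cases (atoms at the thresholds, the order of $t_1,t_2$, the order of $F_1(t_1),F_2(t_2)$) and bounds the atom contributions by hand in each.

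You instead absorb the atom into the quantile integral via $q_i\equiv t_i$ on $[1-c_i,F_i(t_i)]$. The claim then reduces to comparing $\int_{1-c_1}^1 q_1$ with $\int_{1-c_2}^1 q_2$. This is the paper's atomless Case 1 argument carried out uniformly, and it removes the case analysis entirely. The price is that your range $(1-c_2,1]$ can dip below $F_1(t_0)$ when $t_2=t_0$ and $Q_2$ has an atom at $t_0$. That is exactly the edge case you flagged. The paper's endpoints satisfy $F_i(t_i)\ge F_1(t_0)$, so it never meets this case.

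For that edge case, commit to your second option. The first does not work, and the general statement ``$q_1(\kappa)\ge q_2(\kappa)$ whenever $q_2(\kappa)\ge t_0$'' is false. Take $t_0=1/2$, let $Q_1$ put mass $1/2$ on each of $0.4$ and $0.9$, and let $Q_2$ put mass $1/2$ on each of $0.5$ and $0.9$. Dominance holds on $[t_0,1]$ (at $r=t_0$ both CDFs equal $1/2$), yet $q_1(1/4)=0.4<0.5=q_2(1/4)$.

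What you need, and what is true, is the comparison on $(1-c_2,1]$ under the contradiction hypothesis $c_1>c_2$ (this uses $\alpha\ge 0$). On that range, either $q_2(\kappa)>t_0$ and your separation argument applies, or $q_2(\kappa)=t_0$ and $q_1(\kappa)\ge t_1\ge t_0$ because $\kappa\in(1-c_1,1]$.

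Also drop the closing remark on $t_{\min}=0$, $\alpha=0$. If $t_1=0$ and $Q_1$ has an atom at $0$, then $q_1$ vanishes on part of $(1-c_1,1]$. In that situation the statement read as $c_1\le c_2$ is actually false: take both groups a point mass at $0$, $t_0=t_1=t_2=0$, $\gamma_1=1$, $\gamma_2=0$. That degenerate case has to be excluded, as the paper tacitly does.
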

\begin{proof}
    Let $\tau:[0,1]\times[2]\rightarrow[0,1]$ be a post-processing function satisfying the conditions of the claim. Let $t_1,t_2$ and $\gamma_1,\gamma_2$ be the thresholds and values of $\tau$ on the threshold, for groups $S_1,S_2$ respectively.

    We start by defining the distribution $Q_i$, which is be the distribution of $p(x)$ when picking $x\sim\cD$ conditioning on $x\in S_i$. That is, $\Pr_{u\sim Q_i}[u=r]=\Pr_{(x,y)\sim\cD}[p(x)=r|x\in S_i]$.
    Let $F_i$ be the CDF of this distribution and $q_i$ be the quantile function, see definitions in Appendix \ref{app:quant}. Using these notations and Claim \ref{claim:quantile-exp}:
    \begin{align*}
        \E_{(x,y)\sim\cD}[\tau(p(x),i)p(x)|x\in S_i]=&\gamma_i\E[\one(p(x)=t_i)p(x)|x\in S_i] + \E_{(x,y)\sim\cD}[\one(p(x)>t_i)p(x)|x\in S_i]\\=&
        \gamma_i\E[\one(p(x)=t_i)p(x)|x\in S_i] + \int_{F_i(t)}^1 q_i(\kappa)d\kappa.
    \end{align*}

    Next, we show that for all $\kappa\in(F_1(t_0),1]$, $q_1(\kappa)\geq q_2(\kappa)$.
    
    Assume towards a contradiction that there exists $\kappa\in(F_1(t_0),1]$ such that $q_1(\kappa) < q_2(\kappa)$, and let $t = 0.5(q_1(\kappa) + q_2(\kappa))$. We have that $t_0 < q_1(\kappa)< t < q_2(\kappa)$ (as $\kappa>F_1(t_0)$, see Definition \ref{def:quantile}).
    For this $t$, by the definition of the quantile function
    \begin{align*}
        &\Pr_{(x,y)\sim\cD}[p(x)> t|x\in S_1] \leq \Pr_{(x,y)\sim\cD}[p(x) > q_1(\kappa)|x\in S_1] =1-F(q_1(\kappa)) \leq 1-\kappa,\\
        &\Pr_{(x,y)\sim\cD}[p(x) > t|x\in S_2] = 1-F_2(t) > 1-\kappa, 
    \end{align*}
    as if $F_2(t)\geq \kappa$, we would have $q_2(\kappa)\leq t$ contradicting our definition above. The two statements together imply that $\Pr_{(x,y)\sim\cD}[p(x)> t|x\in S_1]<\Pr_{(x,y)\sim\cD}[p(x)> t|x\in S_2]$, contradicting the tail dominance of $S_1$.

    We use the inequality on the quantile functions to prove the claim, by dividing into a few different cases. The first case is in fact not necessary (covered by the other cases) but as it is the simplest we keep for clarity. The first case is where $p(x)$ is continuous and the high-level idea of the proof appears in it.
    \begin{enumerate}
        \item If $\Pr_{(x,y)\sim\cD}[p(x)=t_1|x\in S_1]=\Pr_{(x,y)\sim\cD}[p(x)=t_2|x\in S_2]=0$: This is the simplest case, where we have no weight on the thresholds. In this case, $\E_{(x,y)\sim\cD}[\tau(p(x),i)p(x)|x\in S_i]= \int_{F_i(t_i)}^1 q_i(\kappa)d\kappa$ and $\E_{(x,y)\sim\cD}[\tau(p(x),i)|x\in S_i]=1-F_i(t_i)$.
        Then, (\ref{eq:equal-p-exp}) implies
        \begin{align*}
            \int_{F_1(t_1)}^1 q_1(\kappa)d\kappa \leq \int_{F_2(t_2)}^1 q_2(\kappa)d\kappa +\alpha
        \end{align*}
        If $F_1(t_1) \geq F_2(t_2)$, we are done. Else, we have
        \begin{align*}
            \int_{F_1(t_1)}^1 q_1(\kappa)d\kappa=&\int_{F_1(t_1)}^{F_2(t_2)} q_1(\kappa)d\kappa+\int_{F_2(t_2)}^1 q_1(\kappa)d\kappa \\\geq&
            \int_{F_1(t_1)}^{F_2(t_2)} q_1(\kappa)d\kappa+\int_{F_2(t_2)}^1 q_2(\kappa)d\kappa,
        \end{align*}
        where we use the fact that $q_1(\kappa)\geq q_2(\kappa)$ for all $\kappa$ in the range.
        From both inequalities we conclude that $\int_{F_1(t_1)}^{F_2(t_2)} q_1(\kappa)d\kappa\leq\alpha$, implying that $(F_2(t_2)-F_1(t_1))t_1\leq\alpha$, where we use Claim \ref{claim:int-bound} to lower bound the integral.
        Therefore, we have that $F_1(t_1)\geq F_2(t_2) - \alpha/t_1$.

    \item{If $t_1<t_2$}: set $t=0.5(t_1+t_2)$, then $t_0\leq t_1 < t < t_2$. We express the true positive count using $t$:
        \begin{align*}
            \E_{(x,y)\sim\cD}[\tau(p(x),1)p(x)|x\in S_1]=&\E_{(x,y)\sim\cD}[\one(p(x)=t_1)\gamma_1 t_1|x\in S_1] +\E_{(x,y)\sim\cD}[\one(p(x)>t_1)p(x)|x\in S_1]\\=
            &\E_{(x,y)\sim\cD}[\one(p(x)=t_1)\gamma_1 t_1|x\in S_1] +\E_{(x,y)\sim\cD}[\one(t_1 < p(x)\leq t)p(x)|x\in S_1] \\&+ \E_{(x,y)\sim\cD}[\one(p(x)> t)p(x)|x\in S_1]\\
           \E_{(x,y)\sim\cD}[\tau(p(x),2)p(x)|x\in S_2]=&\E_{(x,y)\sim\cD}[\one(p(x)=t_2)\gamma_2 t_2|x\in S_2] +\E_{(x,y)\sim\cD}[\one(p(x)>t_2)p(x)|x\in S_2]\\=&
           \E_{(x,y)\sim\cD}[\one(p(x)>t)p(x)|x\in S_2] - \E_{(x,y)\sim\cD}[\one(   t_2<p(x)<t)p(x)|x\in S_2] \\&-
           \E_{(x,y)\sim\cD}[\one(p(x)=t_2)t_2(1-\gamma_2)|x\in S_2]
        \end{align*}
        Since $q_1(\kappa)\geq q_2(\kappa)$ for all $\kappa$ in range, we have that
        \begin{align*}
           \E_{(x,y)\sim\cD}[\one(p(x)>t)p(x)|x\in S_2]= \int_{F_2(t)}^1 q_2(\kappa)d\kappa\leq \int_{F_1(t)}^1 q_1(\kappa)d\kappa=\E_{(x,y)\sim\cD}[\one(p(x)>t)p(x)|x\in S_1].
        \end{align*}
        Therefore, in order to satisfy (\ref{eq:equal-p-exp}) we must have:
        \begin{align*}
            \alpha\geq& \E_{(x,y)\sim\cD}[\one(t_1 < p(x)\leq t)p(x)|x\in S_1] + \gamma_1\E_{(x,y)\sim\cD}[\one(p(x)= t_1)t_1|x\in S_1]\\&+
            \E_{(x,y)\sim\cD}[\one(   t_2<p(x)<t)p(x)|x\in S_2] +
           \E_{(x,y)\sim\cD}[\one(p(x)=t_2)t_2(1-\gamma_2)|x\in S_2] \\\geq&
           \int_{F_1(t_1)}^{F_1(t)} q_1(\kappa)d\kappa + \int_{F_2(t)}^{F_2(t_2)} q_2(\kappa)d\kappa + t_1\E_{(x,y)\sim\cD}[\one(p(x)=t_1)\gamma_1|x\in S_1]\\&-\E_{(x,y)\sim\cD}[\one(p(x)=t_2)\gamma_2|x\in S_2]t_1
        \end{align*}
        We lower bound the integrals by $(F_1(t)-F_1(t_1))t_1 + (F_2(t_2)-F_2(t))t_2$ using Claim \ref{claim:int-bound}. Since we have that $t_1<t_2$ and $F_2(t) \geq F_1(t)$, we can lower bound the integrals by $(F_2(t_2)-F_1(t_1))t_1$.

        Therefore, we conclude that
        \begin{align*}
            \alpha\geq& (F_2(t_2)-F_1(t_1))t_1+t_1\gamma_1\Pr_{(x,y)\sim\cD}[p(x)=t_1|x\in S_1]-\gamma_2\Pr_{(x,y)\sim\cD}[p(x)=t_2|x\in S_2]t_1\\\geq& \E_{(x,y)\sim\cD}[\tau(p(x),1)|x\in S_1]t_1-\E_{(x,y)\sim\cD}[\tau(p(x),2)|x\in S_2]t_1,
        \end{align*}
        finishing the claim for this case.
    
        \item If $t_1\geq t_2, F_1(t_1)\leq F_2(t_2)$: Denote $\delta = F_2(t_2)-F_1(t_1)$. 
        \begin{align*}
        \E_{(x,y)\sim\cD}[\tau(p(x),1)p(x)|x\in S_1]= &\E_{(x,y)\sim\cD}[\one(p(x)=t_1)\gamma_1 t_1|x\in S_1] + \int_{F_1(t_1)}^1 q_1(\kappa)d\kappa \\=&
        \E_{(x,y)\sim\cD}[\one(p(x)=t_1)\gamma_1 t_1|x\in S_1] + \int_{F_1(t_1)}^{F_2(t_2)} q_1(\kappa)d\kappa+ \int_{F_2(t_2)}^{1} q_1(\kappa)d\kappa\\
        \E_{(x,y)\sim\cD}[\tau(p(x),1)p(x)|x\in S_2]= &\E_{(x,y)\sim\cD}[\one(p(x)=t_2)\gamma_2 y|x\in S_2] + \int_{F_2(t_2)}^1 q_2(\kappa)d\kappa
    \end{align*}
    We know that $q_1(\kappa)\geq q_2(\kappa)$ for all $\kappa$ in our range. In order to satisfy (\ref{eq:equal-p-exp}):
    \begin{align*}
        \alpha + \E_{(x,y)\sim\cD}[\one(p(x)=t_2)\gamma_2|x\in S_2]t_2\geq& \E_{(x,y)\sim\cD}[\one(p(x)=t_1)\gamma_1 |x\in S_1]t_1 +\int_{F_1(t_1)}^{F_2(t_2)} q_1(\kappa)d\kappa\\\geq&
        \E_{(x,y)\sim\cD}[\one(p(x)=t_1)\gamma_1 |x\in S_1]t_1 +\delta t_1,
    \end{align*}
    where the last inequality is from Claim \ref{claim:int-bound}.
    Since $t_1\geq t_2$,
    which means 
    \begin{align*}
        \E_{(x,y)\sim\cD}[\one(p(x)=t_2)\gamma_2|x\in S_2]\geq \E_{(x,y)\sim\cD}[\one(p(x)=t_1)\gamma_1|x\in S_1]+ \delta - \frac{\alpha}{t_2}.
    \end{align*}
    which is enough to finish the claim in this case.
    \item If $t_1\geq t_2, F_1(t_1)> F_2(t_2)$: Denote $\Delta = F_1(t_1)-F_2(t_2)$.
        \begin{align*}
        \E_{(x,y)\sim\cD}[\tau(p(x),1)p(x)|x\in S_1]= &\E_{(x,y)\sim\cD}[\one(p(x)=t_1)\gamma_1 y|x\in S_1] + \int_{F_1(t_1)}^1 q_1(\kappa)d\kappa\\
        \E_{(x,y)\sim\cD}[\tau(p(x),1)p(x)|x\in S_2]=&\E_{(x,y)\sim\cD}[\one(p(x)=t_2)\gamma_2 y|x\in S_2] + \int_{F_2(t_2)}^1 q_2(\kappa)d\kappa \\=&
        \E_{(x,y)\sim\cD}[\one(p(x)=t_2)\gamma_2 y|x\in S_2] + \int_{F_2(t_2)}^{F_1(t_1)} q_2(\kappa)d\kappa + \int_{F_1(t_1)}^1 q_2(\kappa)d\kappa.
    \end{align*}
    \end{enumerate}
    To satisfy (\ref{eq:equal-p-exp}) we must have:
    \begin{align*}
        \E_{(x,y)\sim\cD}[\one(p(x)=t_1)\gamma_1 |x\in S_1]t_1\leq &\E_{(x,y)\sim\cD}[\one(p(x)=t_2)\gamma_2 y|x\in S_2]t_2 + \int_{F_2(t_2)}^{F_1(t_1)} q_2(\kappa)d\kappa+\alpha\\\leq&
        \E_{(x,y)\sim\cD}[\one(p(x)=t_2)\gamma_2 y|x\in S_2]t_2 +\Delta t_2 + \alpha
    \end{align*}
    Since $t_1\geq t_2$ we have that $\E_{(x,y)\sim\cD}[\one(p(x)=t_1)\gamma_1 |x\in S_1]\leq \E_{(x,y)\sim\cD}[\one(p(x)=t_2)\gamma_2 y|x\in S_2]+\Delta+\alpha/t_1$, which finishes the proof.
\end{proof}

\end{proof}

\begin{proof}[Proof of Theorem \ref{thm:eo-diff-gap}]
    Let $p,\tau,\cD$ as in the theorem statement, and denote by $t_1,t_2$ the thresholds of $\tau$ on groups $S_1,S_2$ respectively. 
    Group $S_1$ has $t_0,t_M$ tail dominance over $S_1$, and therefore it satisfies conditions (\ref{eq:dominance}),(\ref{eq:weak-dominance}) of Claim \ref{claim:y-exp-gap}. Since $\tau(p(x),g(x))$ has capacity $c$, from the conditions of the claim we have that $t_1,t_2\geq t_0$. From the theorem statement, $p(x)$ also satisfies (\ref{eq:small-tail}) and the Lipchitz-type condition (\ref{eq:lipschitz}).

    Therefore, $\tau$ satisfies the conditions of \ref{claim:y-exp-gap}, and we finish the proof.
\end{proof}

\begin{claim}\label{claim:y-exp-gap}
    Let $\cD$ be a distribution over $\cX\times\{0,1\}$ where $
    \cX$ is divided into two groups $S_1,S_2$. 
    Let $p:\cX\rightarrow[0,1]$ be a predictor on $S_1,S_2$.
    Let $t_0,t_M,\eta,\beta\in(0,1)$ be constants such that:
    \begin{align}
        &\E_{(x,y)\sim\cD}[p(x)\geq t_M|x\in S_2] \leq \eta \label{eq:small-tail}\\
        \forall r\in [t_0,t_M]\quad&\E_{(x,y)\sim\cD}[p(x)>r|x\in S_1] - \E_{(x,y)\sim\cD}[p(x)>r|x\in S_2] \geq \eta \label{eq:dominance}\\
        \forall r\geq t_M\quad&\E_{(x,y)\sim\cD}[p(x)>r|x\in S_1] \geq \E_{(x,y)\sim\cD}[p(x)>r|x\in S_2]\label{eq:weak-dominance}\\
        \forall r\geq t_0\quad&\E_{(x,y)\sim\cD}[p(x)\in [r,r+\beta]|x\in S_i] < \eta \label{eq:lipschitz}
    \end{align}
    
    For every randomized threshold function $\tau:\cX\times[2]\rightarrow[0,1]$ with thresholds $t_1,t_2>t_0$ and capacity $c\geq 6\eta$
    satisfies 
    \begin{align}
        \Pr_{(x,y)\sim\cD}[\tau(p(x),1)p(x)|x\in S_1] \leq\Pr_{(x,y)\sim\cD}[\tau(p(x),2)p(x)|x\in S_2]\label{eq:equal-p-exp-2}
    \end{align}
    Then it also satisfies 
    \begin{align}
        \E_{(x,y)\sim\cD}[\tau(p(x),2)|x\in S_2] - \E_{(x,y)\sim\cD}[\tau(p(x),1)|x\in S_1] \geq \frac{\beta\eta}{t_2}.\label{y-exp-gap}
    \end{align}
\end{claim}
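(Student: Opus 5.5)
We argue as in the proof of Claim \ref{claim:max-min-dp}, but now quantify the slack that the gap condition (\ref{eq:dominance}) produces. Write $Q_i,F_i,q_i$ for the distribution of $p(x)$ given $x\in S_i$, its CDF and its quantile function, and let $a_i=\E[\tau(p(x),i)|x\in S_i]$ be the allocated mass in group $i$, so $a_i=1-F_i(t_i)+\gamma_i\Pr[p(x)=t_i|x\in S_i]$. Condition (\ref{eq:weak-dominance}) together with (\ref{eq:dominance}) gives tail dominance above $t_0$. The quantile comparison argument of Claim \ref{claim:max-min-dp} then shows $q_1\geq q_2$ on the relevant range.

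The new ingredient is a \emph{strict} quantile gap. For $r\in[t_0,t_M]$ we have $1-F_1(r)\geq 1-F_2(r)+\eta$. I would convert this into a horizontal shift: for $\kappa$ in the range where $q_2(\kappa)\in[t_0,t_M]$, I claim $q_1(\kappa)\geq q_2(\kappa)+\beta$. Suppose instead that $q_1(\kappa)<q_2(\kappa)+\beta$. Setting $r=q_2(\kappa)$, the mass of $Q_1$ above $r$ is at least $1-\kappa+\eta$. However, the mass of $Q_1$ above $r+\beta$ is at most $1-\kappa$. Hence $Q_1$ places mass at least $\eta$ on $[r,r+\beta]$, which contradicts the Lipschitz condition (\ref{eq:lipschitz}). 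Condition (\ref{eq:small-tail}) and $c\geq 6\eta$ guarantee that a constant fraction of the allocated quantile range has $q_2\leq t_M$, so the gap applies on a set of quantiles of measure $\Omega(\eta)$, or more simply at least $c-\eta\geq 5\eta$.

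Now suppose, toward a contradiction, that $a_2-a_1<\beta\eta/t_2$. Compare the two true positive counts as integrals over the top quantile ranges $[1-a_i,1]$, treating the atoms at the thresholds as fractional quantile intervals. Then
\[
\int_{1-a_1}^1 q_1 - \int_{1-a_2}^1 q_2 \;\geq\; \int_{1-a_1}^{1}(q_1-q_2) - (a_2-a_1)\,t_2 .
\]
Here the term $(a_2-a_1)t_2$ bounds the extra mass of group $2$, since on $[1-a_2,1-a_1]$ the values of $q_2$ are near the threshold $t_2$. We may assume $a_2\geq a_1$, since otherwise we are done as in Claim \ref{claim:max-min-dp}. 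The first integral is at least $\beta\cdot\eta$, because $q_1-q_2\geq\beta$ on a quantile set of measure at least $\eta$ inside $[1-a_1,1]$. Hence the true positive count of $S_1$ strictly exceeds that of $S_2$, which contradicts (\ref{eq:equal-p-exp-2}). This yields $a_2-a_1\geq\beta\eta/t_2$.

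The main obstacle is the bookkeeping of quantile ranges. We must show that the interval $[1-a_1,1]$, where $a_1$ could be smaller than $c$ in principle but is at least of order $c$ by the capacity constraint, contains at least $\eta$ measure of quantiles $\kappa$ with $q_2(\kappa)\in[t_0,t_M]$. To do this I would remove the top $\eta$ quantiles using (\ref{eq:small-tail}), and use $t_2>t_0$ to control the bottom end, relying on the slack $c\geq 6\eta$. A second, technical issue is bounding $q_2\leq t_2$ on the extra range $[1-a_2,1-a_1]$ despite the atoms; this should follow from Claim \ref{claim:int-bound}-style monotonicity arguments.
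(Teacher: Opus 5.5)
The proof breaks where you bound the extra true positive mass of $S_2$ by $(a_2-a_1)t_2$, arguing that $q_2$ is ``near the threshold $t_2$'' on $[1-a_2,1-a_1]$. Monotonicity gives the opposite inequality. Since $1-a_2\ge \Pr[p(x)<t_2\mid x\in S_2]$, we have $q_2(\kappa)\ge t_2$ for every $\kappa>1-a_2$. The upper bound you can actually get is $(a_2-a_1)\,q_2(1-a_1)$, and $q_2(1-a_1)\le q_1(1-a_1)\le t_1$ can be far above $t_2$. The Lipschitz condition only forbids $q_2$ from rising \emph{slowly}, so $Q_2$ may have no mass on a long interval just above $t_2$. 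The bound by $t_2$ is guaranteed only when $1-a_1\le F_2(t_2)$, i.e.\ when the extra allocation of $S_2$ sits in its atom at $t_2$; this covers the paper's first case $F_1(t_1)\le F_2(t_2)$. In its second case the paper correctly bounds the same integral by $t_1\Delta$ but then divides through by $t_2$, so the written proof has the same hole.

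It is not a technicality, because with $t_2$ in the denominator the statement is false. Take equal-sized groups, $\eta=\beta=0.01$, $t_0=0.005$, $t_M=0.68$. Let $p$ on $S_2$ put mass $0.9$ at $0$ and mass $0.1$ uniformly on $[0.5,0.7]$. Let $p$ on $S_1$ put mass $0.89$ at $0$ and mass $0.11$ uniformly on $[0.5,0.72]$. Then:
\begin{itemize}
\item $\Pr[p(x)>r\mid x\in S_1]-\Pr[p(x)>r\mid x\in S_2]=0.01$ for all $r\in[0,0.7]$, and weak dominance holds beyond $t_M$;
\item $\Pr[p(x)\ge t_M\mid x\in S_2]=0.01$;
\item no interval of length $\beta$ above $t_0$ carries more than $0.005$ mass in either group.
\end{itemize}
With $t_2=0.01$ and $t_1=0.528$ we get $a_2=0.1$, $a_1=0.096$, and capacity $0.098\ge 6\eta$. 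The true positive counts are $0.06\ge 0.059904$, yet $a_2-a_1=0.004<0.01=\beta\eta/t_2$. Moving a sliver of mass $0.001$ onto $[0.009,0.01]$ in both groups makes $0.01$ the quantile threshold of $S_2$ and changes nothing else. Raising $a_1$ to about $0.0962$ makes the two counts equal.

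Turned around correctly, your comparison gives $a_2-a_1\ge \beta m/q_2(1-a_1)\ge \beta m/t_1$. Here $m$ is the measure of quantiles $\kappa\in[1-a_1,1]$ with $q_2(\kappa)\in(t_0,t_M]$. This is the true form of the result, and your direct integral comparison reaches it without the paper's preliminary step $t_1\ge t_2$ and its case split. Two smaller repairs are still needed:
\begin{itemize}
\item $c\ge 6\eta$ lower-bounds $a_2$ (via $a_1\le a_2$), not $a_1$. Getting $m\ge\eta$ therefore needs a separate treatment of $a_1<2\eta$, where $a_2-a_1>4\eta$ is already large.
\item In the quantile-gap step, dominance at $r=q_2(\kappa)$ itself does not give $Q_1$ mass at least $1-\kappa+\eta$ above $r$, since $\Pr[p(x)>r\mid x\in S_2]$ can be below $1-\kappa$. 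Apply dominance at $r'\uparrow q_2(\kappa)$ to get $\Pr[p(x)\ge q_2(\kappa)\mid x\in S_1]\ge 1-\kappa+\eta$; your contradiction with the Lipschitz condition then goes through.
\end{itemize}
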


\begin{proof}
    The proof of this claim follows the same high-level approach of Claim \ref{claim:max-min-dp}, but preserves the gap using the additional assumption on the density of the distribution defined by $p(x),x\sim\cD$. Let $t_1,t_2,\gamma_1,\gamma_2$ be the thresholds and values of $\tau$. 

    Following the approach of the previous proof, let $F_i$ be the comulative distribution function of $p(x)$ for $x\sim\cD$ conditioning on $x\in S_i$, and let $q_i$ be the quantile function of this distribution. 
    By the definition of $\tau$ and Claim \ref{claim:int-bound},\begin{align*}
        \E_{(x,y)\sim\cD}[\tau(p(x),i)p(x)|x\in S_i]=\Pr_{(x,y)\sim\cD}[p(x)=t_i|x\in S_i]\gamma_i\cdot t_i + \int_{F_i(t_i)}^1 q_i(\kappa)d\kappa.
    \end{align*}

    Let $K_0=F_1(t_0)$. We start by showing that for all $\kappa\in (K_0,1-2\eta)$ we have that
    $q_1(\kappa)\geq q_2(\kappa) + \beta$. 
    
    Assume towards a contradiction that there is some $\kappa\in(K_0,1-2\eta)$ such that $q_1(\kappa)< q_2(\kappa) + \beta$, and set $t=0.5(q_1(\kappa)+ q_2(\kappa) + \beta)$.
    Then $t >  q_1(\kappa)\geq t_0$. We show that $t<t_M$: if $t\geq t_M$ then $F_2(t)\geq F_2(t_M)\geq 1-\eta$. But, we also have that $F_2(t)\leq F_2(t-\beta) + \Pr_{(x,y)\sim\cD}[p(x)\in [t-\beta,t]]< 1-2\eta + \eta$, where we use (\ref{eq:lipschitz}) and the fact that $t-\beta < q_2(1-2\eta)$.
    Therefore, we can use (\ref{eq:dominance}) on $t$:
    \begin{align*}
        &\Pr_{(x,y)\sim\cD}[p(x) > t|x\in S_1] =\Pr_{(x,y)\sim\cD}[p(x) > q_1(\kappa)|x\in S_1] + \Pr_{(x,y)\sim\cD}[p(x) \in (t,q_1(\kappa)]|x\in S_1]< 1-\kappa + \eta,\\
        &\Pr_{(x,y)\sim\cD}[p(x) > t|x\in S_2] =1-F_2(t) \geq 1- \kappa.
    \end{align*}
    Where in the top inequality we use (\ref{eq:lipschitz}). In the second inequality we use the fact that $t>q_2(\kappa)$. 
    The two inequalities contradict (\ref{eq:dominance}), and therefore we conclude that for all $\kappa\in(K_0,1-2\eta)$, we have that $q_1(\kappa)\geq q_2(\kappa) + \beta$.
    
    We finish the proof by using the inequality on the quantiles to prove the claim. 
    
    We start by bounding $F_2(t_2)$. The conditions of this claims are stricter than those of Claim \ref{claim:max-min-dp} with $\alpha=0$, therefor we can apply it and get $\E_{(x,y)\sim\cD}[\tau(p(x),1)|x\in S_1]\leq \E_{(x,y)\sim\cD}[\tau(p(x),2)|x\in S_2]$.  This mean that $6\eta\leq c\leq \E_{(x,y)\sim\cD}[\tau(p(x),2)|x\in S_2]$, which mean that  
    \[1-F_2(t_2) = \E_{(x,y)\sim\cD}[p(x)>t_2|x\in S_2] \geq  6\eta - \eta > 3\eta.\]
    Where we use the Lipchitz condition to bound $\E_{(x,y)\sim\cD}[p(x)=t_2|x\in S_2]\leq\eta$.

    We show next that $t_1\geq t_2$. Assume towards a contradiction that $t_1<t_2$, and let $t=0.5(t_1+t_2)$ then
    \begin{align*}
            \E_{(x,y)\sim\cD}[\tau(p(x),1)|x\in S_1]=
            &\gamma_1\Pr_{(x,y)\sim\cD}[p(x)=t_1|x\in S_1]\\&+ \Pr_{(x,y)\sim\cD}[p(x)\in(t_1,t]|x\in S_1]+\Pr_{(x,y)\sim\cD}[p(x)>t|x\in S_1]
            \\\E_{(x,y)\sim\cD}[\tau(p(x),1)|x\in S_2]=&-
            (1-\gamma_2)\Pr_{(x,y)\sim\cD}[p(x)=t_2|x\in S_1]\\&- \Pr_{(x,y)\sim\cD}[p(x)\in [t,t_2)|x\in S_2]+\Pr_{(x,y)\sim\cD}[p(x)>t|x\in S_2].
        \end{align*}
    We note that $t<t_2\leq t_M$, since we have that $F_2(t_2)<1-3\eta$ and $F_2(t_M)\geq 1-\eta$. From the assumptions, we also have that $t > t_1 \geq t_0$, else we violate the capacity requirement. Therefore, $t$ is strictly in the area of a gap tail dominance, and $\Pr_{(x,y)\sim\cD}[p(x)>t|x\in S_1] > \Pr_{(x,y)\sim\cD}[p(x)>t|x\in S_2] + (t_M-t)\eta$, contradicting the equality requirement from Claim \ref{claim:max-min-dp}.
  
    Therefore, in the rest of the proof, we have that $t_1\geq t_2$. We divide into cases:
    \begin{enumerate}
        \item If $F_1(t_1)\leq F_2(t_2)$, we denote $F_2(t_2)=F_1(t_1)+\delta$.
        \begin{align*}
            \E_{(x,y)\sim\cD}[\tau(p(x),1)p(x)|x\in S_1]=&\Pr_{(x,y)\sim\cD}[p(x)=t_1|x\in S_1]\gamma_1\cdot t_1+\int_{F_1(t_1)}^{F_2(t_2)} q_1(\kappa)d\kappa\\&+\int_{F_2(t_2)}^{1-2\eta} q_1(\kappa)d\kappa+\int_{1-2\eta}^{1} q_1(\kappa)d\kappa\\\geq&
            \Pr_{(x,y)\sim\cD}[p(x)=t_1|x\in S_1]\gamma_1\cdot t_1 + \delta t_1 +\int_{F_2(t_2)}^{1-2\eta} q_2(\kappa)d\kappa\\&+ \beta(1-2\eta-F_2(t_2)) + \int_{1-2\eta}^{1} q_2(\kappa)d\kappa. \\
            \E_{(x,y)\sim\cD}[\tau(p(x),2)p(x)|x\in S_2]=&\Pr_{(x,y)\sim\cD}[p(x)=t_2|x\in S_2]\gamma_2\cdot t_2+\int_{F_2(t_2)}^{1-2\eta} q_2(\kappa)d\kappa+\int_{1-2\eta}^{1} q_2(\kappa)d\kappa
        \end{align*}
        Requiring (\ref{eq:equal-p-exp-2}) implies that
        \begin{align*}
            \Pr_{(x,y)\sim\cD}[p(x)=t_2|x\in S_2]\gamma_2 \geq \Pr_{(x,y)\sim\cD}[p(x)=t_1|x\in S_1]\gamma_1 + \delta + \frac{\beta(1-2\eta-F_2(t_2))}{t_2},
        \end{align*}
        finishing the proof for this case.

        \item If $F_1(t_1)> F_2(t_2), t_1\geq t_2$, we denote $F_1(t_1)=F_2(t_2)+\Delta$.
        \begin{align*}
            \E_{(x,y)\sim\cD}[\tau(p(x),1)p(x)|x\in S_1]=&\Pr_{(x,y)\sim\cD}[p(x)=t_1|x\in S_1]\gamma_1 t_1+\int_{F_1(t_1)}^{1-2\eta} q_1(\kappa)d\kappa+\int_{1-2\eta}^{1} q_1(\kappa)d\kappa\\\geq&
            \Pr_{(x,y)\sim\cD}[p(x)=t_1|x\in S_1]\gamma_1\cdot t_1 +\int_{F_1(t_1)}^{1-2\eta} q_2(\kappa)d\kappa\\&+ \beta(1-2\eta-F_1(t_1)) + \int_{1-2\eta}^{1} q_2(\kappa)d\kappa. \\
            \E_{(x,y)\sim\cD}[\tau(p(x),2)p(x)|x\in S_2]=&\Pr_{(x,y)\sim\cD}[p(x)=t_2|x\in S_2]\gamma_2  t_2+\int_{F_2(t_2)}^{F_1(t_1)} q_2(\kappa)d\kappa\\&+\int_{F_1(t_1)}^{1-2\eta} q_2(\kappa)d\kappa+\int_{1-2\eta}^{1} q_2(\kappa)d\kappa \\\leq&
            \Pr_{(x,y)\sim\cD}[p(x)=t_2|x\in S_2]\gamma_2\cdot t_2+t_1\Delta\\&+\int_{F_2(t_2)}^{1-2\eta}q_2(\kappa)d\kappa +\int_{1-2\eta}^{1} q_2(\kappa)d\kappa
        \end{align*}
        Requiring (\ref{eq:equal-p-exp-2}) implies that 
        \begin{align*}
            \Pr_{(x,y)\sim\cD}[p(x)=t_2|x\in S_2]\gamma_2 
            \geq \Pr_{(x,y)\sim\cD}[p(x)=t_1|x\in S_1]\gamma_1 - \Delta + \frac{\beta(1-2\eta-F_2(t_2))}{t_2}.
        \end{align*}
        This finishes the proof for this case, as $F_2(t_2)\leq 1-3\eta$.
    \end{enumerate}
\end{proof}

In the preliminaries we said that we also prove the theorem for the case of enforcing the fairness definition directly on the distribution, this is done in the next claim.
\begin{theorem}\label{thm:eo-diff-err}
     Let $\cD$ be a distribution over $\cX\times\{0,1\}$ where $
    \cX$ is divided into two groups $S_1,S_2$.
    Let $p:\cX\rightarrow[0,1]$ be a predictor with calibration error at most $\alpha$ on $S_1,S_2$, such that group $S_1$ has $t_0$-tail dominance over $S_2$. Let $c\in[0,1]$ be such that such that $\E_{(x,y)\sim\cD}[t\geq t_0\one(x\in S_i)]\geq c$.
    
    If $\tau(p(x),g(x))$ satisfies max-min fairness up to an error at most $\epsilon$, 
    \begin{align}
        \abs{\E_{(x,y)\sim\cD}[\tau(p(x),1)y|x\in S_1]- \E_{(x,y)\sim\cD}[\tau(p(x),2)y|x\in S_2]}\leq\epsilon \label{eq:condition-gap}
    \end{align}
    then 
    \begin{align}\label{eq:stricter-than-dp-err}
        \E_{(x,y)\sim\cD}[\tau(p(x),1)|x\in S_1]\leq \E_{(x,y)\sim\cD}[\tau(p(x),2)|x\in S_2] + \frac{\alpha+\epsilon}{t_{\min}}.
    \end{align}
    where $t_{\min} = \min\{t_1,t_2\}$, the thresholds of $\tau$ on groups $S_1,S_2$ respectively.
\end{theorem}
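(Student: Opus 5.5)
Our plan is to reduce Theorem \ref{thm:eo-diff-err} to Claim \ref{claim:max-min-dp}, which was stated with an additive slack $\alpha$ on the inequality between the $p$-weighted allocations. The only new ingredient compared to Theorem \ref{thm:eo-diff} is that the fairness constraint (\ref{eq:condition-gap}) is stated in terms of $y$ on the real distribution $\cD$, rather than in terms of $p(x)$ on the simulated distribution. We will translate it using Claim \ref{claim:calibration}, which controls the discrepancy between $y$ and $p(x)$ group by group.

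First, we check the threshold hypothesis of Claim \ref{claim:max-min-dp}. As in the proof of Theorem \ref{thm:eo-diff}, the assumption that the mass of $S_i$ with $p(x)\geq t_0$ is at least $c$ forces both thresholds $t_1,t_2$ of $\tau$ to be at least $t_0$. Otherwise the allocation on that group would exceed capacity, since the group-wise capacities average to $c$. We will simply reuse that argument.

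Second, we convert the fairness error. By Claim \ref{claim:calibration} applied to each group $i\in\{1,2\}$,
\begin{align*}
\E_{(x,y)\sim\cD}[\tau(p(x),1)p(x)|x\in S_1] \leq \E_{(x,y)\sim\cD}[\tau(p(x),1)y|x\in S_1] + \alpha_1,
\end{align*}
and symmetrically $\E[\tau(p(x),2)y|x\in S_2]\leq \E[\tau(p(x),2)p(x)|x\in S_2]+\alpha_2$. Combining these with (\ref{eq:condition-gap}) gives
\begin{align*}
\E[\tau(p(x),1)p(x)|x\in S_1]\leq \E[\tau(p(x),2)p(x)|x\in S_2]+\epsilon+\alpha_1+\alpha_2 .
\end{align*}
This is hypothesis (\ref{eq:equal-p-exp}) of Claim \ref{claim:max-min-dp}, with slack $\epsilon$ plus the calibration errors. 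Applying that claim, which uses $t_0$-tail dominance together with $t_1,t_2\geq t_0$, yields the demographic-parity-type inequality with additive error (slack)$/t_{\min}$.

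The main obstacle is the bookkeeping of the calibration error, where the naive combination gives $2\alpha$ rather than $\alpha$. To obtain exactly $\alpha+\epsilon$ as in (\ref{eq:stricter-than-dp-err}), we would interpret "calibration error at most $\alpha$" as a total calibration error summed across the two groups, so that $\alpha_1+\alpha_2\leq\alpha$. Alternatively, we would note that only one direction of the inequality is needed, and split the calibration integral over the disjoint threshold ranges. If neither refinement works cleanly, the fallback is to state the bound with $2\alpha$, which changes only constants.

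A secondary check is that Claim \ref{claim:max-min-dp} divides by $t_{\min}$ in each case. In some cases it actually divides by $t_1$ or $t_2$, and each of these is at least $t_{\min}$, so the bound stated with $t_{\min}$ follows uniformly.
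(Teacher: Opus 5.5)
Your plan is exactly the paper's proof. The paper also converts (\ref{eq:condition-gap}) into hypothesis (\ref{eq:equal-p-exp}) using Claim \ref{claim:calibration}, obtains $t_1,t_2\geq t_0$ from the capacity condition, and invokes Claim \ref{claim:max-min-dp} with slack $\alpha+\epsilon$.

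The only divergence is the bookkeeping you flag, and there your caution is justified. The paper simply asserts slack $\alpha+\epsilon$. But Claim \ref{claim:calibration} is a per-group bound, so chaining it on $S_1$, then (\ref{eq:condition-gap}), then on $S_2$ gives $2\alpha+\epsilon$. Nothing in the paper's argument removes the second $\alpha$.

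Your second proposed refinement (use only one direction and split the calibration integral over threshold ranges) cannot close this. The two error terms are calibration errors of different groups, each bounded separately, and they can have opposite signs. Here is an example:
\begin{itemize}
\item Take equal-size groups with $p(x)$ uniform on $[0,1/2]$ in both.
\item Set $\Pr[y=1\mid x]=p(x)-\delta$ on $S_1$ and $\Pr[y=1\mid x]=p(x)+\delta$ on $S_2$ whenever $p(x)>a$, with $\delta\leq a$.
\item Tail dominance holds trivially, and each group has expected calibration error $\alpha=\delta(1-2a)$.
\item Equal true positive counts with thresholds $a<t_1<t_2$ force $(t_2-t_1)(t_1+t_2)=2\delta(1-t_1-t_2)$.
\item Hence the parity gap $2(t_2-t_1)$ is asymptotically $2\alpha/t_1$ as $\delta\to0$ and $a\uparrow t_1$. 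That is twice the claimed bound with $\epsilon=0$.
\item With $t_1=0.4$ and $t_0=0.3$, the capacity hypothesis holds.
\end{itemize}

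So the bound $(\alpha+\epsilon)/t_{\min}$ is valid only under your first reading, where $\alpha$ bounds the sum of the two groups' calibration errors. Under the paper's per-group definition, your fallback $(2\alpha+\epsilon)/t_{\min}$ is what the argument proves, and it is essentially the correct statement.
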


\begin{proof}
    We prove the theorem by using Claim \ref{claim:max-min-dp}. We start by showing that we satisfy the conditions of the claim.
    Since $p$ is $\alpha$-calibrated, from Claim \ref{claim:calibration} and (\ref{eq:condition-gap}) we have that
    \begin{align*}
        \E_{(x,y)\sim\cD}[\tau(p(x),1)p(x)|x\in S_1]\leq  \E_{(x,y)\sim\cD}[\tau(p(x),2)p(x)|x\in S_2] + \alpha+\epsilon,
    \end{align*}
    From the theorem statement, we have that $p$ on $x\in S_1$ is $t_0$-tail dominance over $p(x)$ on $x\in S_2$. From the capacity requirement, both thresholds $t_1,t_2\geq t_0$. Therefore, we satisfy the conditions of Claim \ref{claim:max-min-dp} with parameter $\alpha'=\alpha+\epsilon$, and finish the proof.
\end{proof}

\begin{proof}[Proof of \ref{claim:additive-price-of}]
     Let $\tau$ be the randomized group threshold function satisfying max-min fairness and $\tau_f$ the randomized group threshold function maximizing the true positive count. Then $\tau_f$ has a single threshold function for the two groups, denoted by $t_f$. Since $S_1$ is tail-dominant over $S_2$ and from our capacity $t_f\in [t_0,t_M]$, we have that $F_1(t_f)\leq F_2(t_f)-\eta$.
    From Theorem \ref{thm:eo-diff-gap}, $F_1(t_1)\geq F_2(t_2)+\eta\beta/t_2$.
    
    We know that $\tau_f$ and $\tau$ have the same capacity $c$. This implies that:
    \begin{align*}
        c=&\Pr_{x\sim\cD}[x\in S_1](1-F_1(t_1))+\Pr_{x\sim\cD}[x\in S_2](1-F_2(t_2))\\
        F_1(t_1)\geq& 1-c+\frac{\eta\beta}{t_2}\Pr_{x\sim\cD}[x\in S_2]\\
        F_2(t_2)\leq& 1-c- \frac{\eta\beta}{t_2}\Pr_{x\sim\cD}[x\in S_1]\\
        c=&\Pr_{x\sim\cD}[x\in S_1](1-F_1(t_f))+\Pr_{x\sim\cD}[x\in S_2](1-F_2(t_f))\\
        F_1(t_f)\leq&1-c-\eta\Pr_{x\sim\cD}[x\in S_2]\\
        F_2(t_f)\geq& 1-c+\eta \Pr_{x\sim\cD}[x\in S_1].
    \end{align*}
    Since our groups are equal-sized, we have that $F_1(t_1)-F_1(t_f)\geq \left(\frac{\eta\beta}{t_2}+\eta\right)\frac{1}{2}$, and
    $F_1(t_1)-F_1(t_f)=F_2(t_f)-F_2(t_2)$.
     
    Since the distribution of $p(x),x\sim\cD$ is continuous for $x\in S_1,S_2$, we have that
    \begin{align*}
        \Pr_{(x,y)\sim\cD}[p(x)\in(t_f,t_1),x\in S_1]=\Pr_{(x,y)\sim\cD}[p(x)\in(t_2,t_f),x\in S_2],
    \end{align*}
 
    Using Claim \ref{claim:quantile-exp},
    \begin{align*}
        \E_{(x,y)\sim\cD}[\tau_f(p(x),g(x))p(x)]=&\Pr_{x\sim\cD}[x\in S_1]\int_{F_1(t_f)}^1q_1(\kappa)d\kappa +\Pr_{x\sim\cD}[x\in S_2]\int_{F_2(t_f)}^1q_2(\kappa)d\kappa\\=&
        \Pr_{x\sim\cD}[x\in S_1]\int_{F_1(t_f)}^{F_1(t_1)}q_1(\kappa)d\kappa +\Pr_{x\sim\cD}[x\in S_1]\int_{F_1(t_1)}^1q_1(\kappa)d\kappa\\&+
        \Pr_{x\sim\cD}[x\in S_2]\int_{F_2(t_2)}^{1}q_2(\kappa)d\kappa -\Pr_{x\sim\cD}[x\in S_2]\int_{F_2(t_2)}^{F_2(t_f)}q_2(\kappa)d\kappa\\=&
        \E_{(x,y)\sim\cD}[\tau(p(x),g(x))p(x)]+\Pr_{x\sim\cD}[x\in S_1]\int_{F_1(t_f)}^{F_1(t_1)}q_1(\kappa)d\kappa\\&-\Pr_{x\sim\cD}[x\in S_2]\int_{F_2(t_2)}^{F_2(t_f)}q_2(\kappa)d\kappa
    \end{align*}
    We bound the integrals. 
    From the Lipchitz condition, at most $\eta/4$ of the weight can be in a range $[r,r+\beta]$. Therefore, we can lower bound the probability as follows:
     \begin{align*}
        \E_{x\sim\cD}[\one(p(x)\in (t_f,t_1))p(x)|x\in S_1] \geq& \frac{\eta}{4} t_f + \left(\Pr_{x\sim\cD}[p(x)\in (t_f,t_1)|x\in S_1]-\frac{\eta}{4}\right)\left(t_f+\beta \right)\\\geq&
        \Pr_{x\sim\cD}[p(x)\in (t_f,t_1)|x\in S_1](t_f+\beta)-\frac{\eta\beta}{4}.
    \end{align*}
    For the second integral, 
    \begin{align*}
        \E_{x\sim\cD}[\one(p(x)\in (t_2,t_f))p(x)|x\in S_2] \leq& t_f\Pr_{x\sim\cD}[p(x)\in (t_2,t_f)|x\in S_2]
    \end{align*}

    Combining it together we have that
    \begin{align*}
        \E_{(x,y)\sim\cD}[\tau_f(p(x),g(x))p(x)]\geq& \E_{(x,y)\sim\cD}[\tau(p(x),g(x))p(x)] + \Pr_{x\sim\cD}[p(x)\in (t_f,t_1),x\in S_1](t_f+\beta)\\&-\Pr_{x\sim\cD}[x\in S_1]\frac{\beta\eta}{4}-t_f\Pr_{x\sim\cD}[p(x)\in (t_2,t_f),x\in S_2]
        \\\geq&\E_{(x,y)\sim\cD}[\tau(p(x),g(x))p(x)]\\&+\Pr_{x\sim\cD}[x\in S_1]\beta\left(\Pr_{x\sim\cD}[p(x)\in (t_f,t_1)|x\in S_1]-\frac{\eta}{4}\right)\\\geq&
        \E_{(x,y)\sim\cD}[\tau(p(x),g(x))p(x)] +\frac{\eta\beta}{8} + \frac{\eta\beta^2}{4t_2}.
    \end{align*}
    Where we use the fact that the groups are equal sized.
    
\end{proof}

\begin{proof}[Proof of Theorem \ref{thm:prop-bound}]
    Let $p$ as in the theorem statement. Denote by $t_1,t_2$ the thresholds of $\tau$ for groups $S_1,S_2$ respectively.

    By definition, the post-processing $\tau$ that is a randomized threshold function satisfies propositional fairness on the simulated distribution if
    \begin{align*}
        \tau = \arg\max_{\tau''}\sum_{i\in[2]}\Pr_{(x,y)\sim\cD}[x\in S_i]\log\E_{(x,y)\sim\cD}[\tau''(p(x),i)p(x) | x\in S_i].
    \end{align*}
     Denote by $c_1\leq c$ the fraction of the capacity that $S_1$ receives, $c_1=\E_{(x,y)\sim\cD}[\tau(p(x),1)\one(x\in S_1)]$. Since we have full utilization and only two groups, $\tau$ is fully characterized by $c_1$. 

     Let $c_{\max} = \min\{c,\Pr_{x\sim\cD}[x\in S_1]\}$ and $c_{\min} = \max\{0, c-\Pr_{x\sim\cD}[x\in S_2]\}$.
     For every $k\in[c_{\min},c_{\max}]$ we can define $\tau_k$ as the randomized group threshold function such that $k=\E_{(x,y)\sim\cD}[\tau_k(p(x),1)\one(x\in S_1)]$ and $\E_{(x,y)\sim\cD}[\tau_k(p(x),2)\one(x\in S_2)]=c-k$. Then the capacity of $\tau_k$ is exactly $c$. From Claim \ref{claim:existence-threshold}, we can always find such threshold function $\tau_k$.
     
     Using this notation, $c_1$ is solution to the following optimization problem:
     \begin{align}\label{eq:max-c1}
         c_1 = \arg\max_{k}\sum_{i\in[2]}\Pr_{(x,y)\sim\cD}[x\in S_i]\log\E_{(x,y)\sim\cD}[\tau_k(p(x),i)p(x) | x\in S_i]
     \end{align}
    We have how an optimization problem with a single variable. For every $k$, denote by $H_i(k)=\E_{(x,y)\sim\cD}[\tau_k(p(x),i)p(x) | x\in S_i]$. Then 
    $\text{Prop}(k)=\sum_{i\in[2]}\Pr_{(x,y)\sim\cD}[x\in S_i]\log H_i(k)$. 

    If $c_{\min}=0,c_{\max}=c$, then we know that the function receives its maximal value in the interim of the part. Else, we need to check the edges $c_{\min},c_{\max}$, which we do at the end of the proof.
    Therefore, we can characterize the optimal value by calculating the derivative of $\text{Prop}(k)$ and comparing it to zero.
    \begin{align}\label{eq:prop-derivative}
        \frac{d\text{Prop}(k)}{dk}=\sum_{i\in[2]}\Pr_{(x,y)\sim\cD}[x\in S_i]\frac{1}{H_i(k)}\frac{d H_i(k)}{dk}.
    \end{align}
    In order to use (\ref{eq:prop-derivative}), we need to calculate the derivative of $H_i(k)$. We denote by $Q_i$ the distribution of $p(x),x\in S_i$. Then let $q_i$ be the quantile function of $Q_i$ and $F_i$ be the CDF of this distribution. Let $t_i(k)$ be the threshold value of $\tau_k$ for group $i$. 
    Then in this notation, using Claim \ref{claim:quantile-exp}:
    \begin{align*}
        H_i(k)=&\E_{(x,y)\sim\cD}[\tau_k(p(x),i)p(x) | x\in S_i]=\Pr_{x\sim\cD}[p(x)=t_i(k)|x\in S_i]t_i(k)\gamma_i(k)+\int_{F_i(t_i(k)))}^1 q_i(\kappa)d\kappa
    \end{align*}
    Where we have that $t_i(k)=q_i(1-w_i)$ where $w_1=\frac{k}{\Pr_{x\sim\cD}[x\in S_1]}$, and $w_2=\frac{c-k}{\Pr_{x\sim\cD}[x\in S_2]}$ 
    \begin{align*}
    \gamma_i(k) = \begin{cases}
        0   \quad &\Pr_{u\sim Q_i}[u=t_i]=0,\\
        \frac{w_i-1+F_i(t_i))}{\Pr_{u\sim Q_i}[u=t_i]} \quad &\text{o.w.}
    \end{cases}
\end{align*}
    To calculate the derivative we divide into cases, by the cases in $\gamma_i(k)$. If $\gamma_i(k)>0$, then $\frac{d t_i(k)}{dk}=0$, as the quantile function is constant at $k$. In this case, 
    \begin{align*}
        \frac{d H_i(k)}{dk}=&\Pr_{x\sim\cD}[p(x)=t_i(k)|x\in S_i]t_i(k)\frac{d\gamma_i(k)}{dk}\\=&\frac{\Pr_{x\sim\cD}[p(x)=t_i(k)|x\in S_i]t_i(k)}{\Pr_{x\sim\cD}[p(x)=t_i(k)|x\in S_i]\Pr_{x\sim\cD}[x\in S_i]}(-1)^{i-1}\\=&\frac{t_i(k)}{\Pr_{x\sim\cD}[x\in S_i]}(-1)^{i-1}.
    \end{align*}
    
    In the case that $\gamma_i(k)=0$, the quantile function is not constant at $k$ but the first term is zero. In addition, it mean that $F_i(t_i(k))=F_i(q_i(1-w_i))=1-w_i$.
    In this case we use the Leibnitz integral rule:
    \begin{align*}
        \frac{d H_i(k)}{dk}=-q_i(1-w_i(k))\frac{d(1-w_i(k))}{dk}=-t_i(k)\left(-\frac{1}{\Pr_{x\sim\cD}[x\in S_i]}\right)(-1)^{i-1}.
    \end{align*}
    In both cases, the derivative is $\frac{t_i(k)}{\Pr_{x\sim\cD}[x\in S_i]}(-1)^{i-1}$. We can see it intuitively by noticing that if we increase the total capacity given to $S_1$ by $\epsilon$, we increase the conditional capacity by $\epsilon/\Pr_{x\sim\cD}[x\in S_i]$, and the expected true positive count by this time the current threshold, as this is the probability of $y=1$ of the individuals we change their allocation. The minus sign for $i=2$ is because the dependency of $w_i$ in $k$ is negative for $S_2$.

    Therefore, the optimal allocation $c_1$, which has derivative $0$ satisfies:
    \begin{align*}
        \Pr_{(x,y)\sim\cD}[x\in S_1]\frac{1}{H_1(c_1)}\frac{t_1(c_1)}{\Pr_{(x,y)\sim\cD}[x\in S_1]} + \Pr_{(x,y)\sim\cD}[x\in S_2]\frac{1}{H_2(c-c_1)}\left( - \frac{t_2(c-c_1)}{\Pr_{(x,y)\sim\cD}[x\in S_2]}\right)=0\\
    \end{align*}
    That is,
    \begin{align}\label{eq:cond-zero-der}
        \frac{t_1(c_1)}{H_1(c_1)}=\frac{t_2(c-c_1)}{H_2(c-c_1)}.
    \end{align}

    We use (\ref{eq:cond-zero-der}) to bound the gap between the total true positive count of the proportional fairness comparing to the true positive count of the post-processing without any fairness requirements.

    The post-processing $\tau'$ is the post-processing that maximizes the true positive count, and denote by $c_i' = \E_{(x,y)\sim\cD}[\tau'(x,i)\one(x\in S_i)]$. Assume without loss of generality that $c_1' \geq c_1$. Since $\tau'$ is the post-processing that maximizes the true positive count, it means that $t_1(c_1) \geq t_2(c-c_1)$.
    
    For each individual $x$ that $\tau'$ allocates more resources than $\tau$, that is for $x$ such that $\tau(p(x),g(x))<\tau'(p(x),g(x))$, we have that $p(x)\leq t_1(c_1)$ (as $\tau$ already allocated resources to all individuals with $p(x)>t_1(c_1)$ in both groups). This means that 
    \begin{align}\label{eq:max-utils-bound}
    M=\E_{(x,y)\sim\cD}[\tau'(p(x),g(x))p(x)]\leq \E_{(x,y)\sim\cD}[\tau(p(x),g(x))p(x)]+(c_1'-c_1)t_1(c_1).  
    \end{align}
    
    In order to bound the gap we lower bound the true positive count of $\tau$ in terms of $t_1(c_1)$.
    For $S_2$, we have that
    $H_2(c-c_1)\geq t_2(c-c_1)\Pr_{(x,y)\sim\cD}[p(x)>t_2(c-c_1)|x\in S_2] \geq t_2(c-c_1)\frac{(c-c_1)}{\Pr_{x\sim\cD}[x\in S_2]}$, as $\tau$ allocate capacity $c-c_1$ to group $S_2$ with threshold $t_2(c-c_1)$.

    Combining this with (\ref{eq:cond-zero-der}):
    \begin{align*}
        H_1(c_1) = \frac{H_2(c-c_1)t_1(c_1)}{t_2(c-c_1)}\geq \frac{c-c_1}{\Pr_{x\sim\cD}[x\in S_2]}t_1(c_1).
    \end{align*}
    Using this bound:
    \begin{align*}
        \E_{(x,y)\sim\cD}[\tau(p(x),g(x))p(x)]\geq \Pr_{x\sim\cD}[x\in S_1]H_1(c_1) \geq (c-c_1)t_1(c_1)\frac{\Pr_{x\sim\cD}[x\in S_1]}{\Pr_{x\sim\cD}[x\in S_2]}.
    \end{align*}
    Since $S_1,S_2$ are equal size, we have that $\E_{(x,y)\sim\cD}[\tau'(p(x),g(x))p(x)]\leq 2\E_{(x,y)\sim\cD}[\tau(p(x),g(x))p(x)]$.
    Using Claim \ref{claim:calibration}, the difference between $\E_{(x,y)\sim\cD}[\tau(p(x),g(x))p(x)]$ and $\E_{(x,y)\sim\cD}[\tau(p(x),g(x))y]$ is at most $\alpha$, which finishes the proof for the standard case.

    We are left with handling the case that (\ref{eq:cond-zero-der}) does not hold for any $c_1$ in our range, i.e. that our function receives its maximal value at one of the edges. This case is only possible if the capacity $c$ is larger than one the groups size (else $c_{\min}=0,c_{\max}=c$ and the optimization in the edges approaches $-\inf$). This is not the standard setting considered in this work, but we prove it for completeness. 
    Since the setting is symmetric with respect to the groups, without loss of generality $\tau'$, the post-processing maximizing the true positive count has $c_1' \geq c_1$. As we are in the case that the maximum of $\text{Prop}$ is at one of the edges, this edge must be $c_{\min}$, i.e. $\E_{(x,y)\sim\cD}[\tau(p(x),2)|x\in S_1]=1$.  
    If $\text{Prop}$ receives the maximal value at $c_{\min}$, we have that its derivative is always negative, and instead of (\ref{eq:cond-zero-der}) we have $\frac{t_1(c_1)}{H_1(c_1)}\leq\frac{t_2(c-c_1)}{H_2(c-c_1)}$. We note that the proof only uses inequality, and therefore we can use the exactly same proof for this case as well.
    
\end{proof}

\section{Proofs of Section \ref{sec:new-def}}\label{app:new-def}
\begin{proof}[Proof of Claim \ref{claim:new-def}]
For every $i\in[m]$, let $f_i=\arg\max_{f'\in \cF_c}\E_{(x,y)\sim\cD}[f'(x)y]$. Let $f$ be the function defined by $f^*(x)=f_{g(x)}(x)$. Since $\cF$ is closed under group choice post-processing, we have that $f^*\in\cF$.
We note that since each $f_i$ can allocate all of the resources only to $S_i$, the capacity of $f^*$ can be $mc$.
By definition, $\E[f^*(x)y]=\sum_{i\in[m]}\Pr_{(x,y)\sim\cD}[x\in S_i]\cA(\cF,c,S_i)$.

Since $h$ has capacity $c$ over all groups and $f^*$ picks for every group the function in $\cF$ with capacity $c$ that maximizes the true positive count, we have that
\begin{align*}
    \E_{(x,y)\sim\cD}[f^*(x)y]\geq \E_{(x,y)\sim\cD}[h(x)y].
\end{align*}

We prove next that $\E_{(x,y)\sim\cD}[f(x)y]\geq \E_{(x,y)\sim\cD}[f^*(x)y]/m$, which finished the proof.

We start by explicitly writing the true positive count of $f$, the function satisfying achievable equal opportunity:
\begin{align*}
    \E_{(x,y)\sim\cD}[f(x)y]=\sum_j \Pr_{(x,y)\sim\cD}[x\in S_j]\E_{(x,y)\sim\cD}[f(x)y|x\in S_i].
 \end{align*}
Let $i\in[m]$ be an arbitrary index, then from the achievable equal opportunity requirement,
\begin{align*}
    \E_{(x,y)\sim\cD}[f(x)y]=&\sum_j \Pr_{(x,y)\sim\cD}[x\in S_j]\frac{\E_{(x,y)\sim\cD}[f(x)y|x\in S_i]\cA(\cF,c,S_j)}{\cA(\cF,c,S_i)}\\=&\frac{\E_{(x,y)\sim\cD}[f(x)y|x\in S_i]}{\cA(\cF,c,S_i)}\sum_j \Pr_{(x,y)\sim\cD}[x\in S_j]\cA(\cF,c,S_j)\\=&
    \frac{\E_{(x,y)\sim\cD}[f(x)y|x\in S_i]}{\cA(p,c,S_i)}\E_{(x,y)\sim\cD}[f^*(x)y].
 \end{align*}
 The above equality holds for every $i$, and specifically let $i^*\in[m]$ be such that $\E_{(x,y)\sim\cD}[f(x)|x\in S_{i^*}]\geq c$ (by averaging, we must have at least one such $i^*$ as the capacity of $f$ is $c$).
 For $i^*$, 
 \begin{align*}
     \E_{(x,y)\sim\cD}[f(x)y|x\in S_{i^*}]\geq \cA(\cF,c,S_{i^*})/m.
 \end{align*}
This is because $\cF$ is closed under group choice post-processing, and therefore we could have set $f(x) = 1/m f_{i^*}(x)$ for all $x\in S_{i^*}$ and $0$ for all other groups. It is possible, of course, that there is a better choice for $f$ on $S_{i^*}$, and in this case the inequality will be strict.

Combining everything together, we get that 
\begin{align*}
    \E_{(x,y)\sim\cD}[f(x)y] \geq \frac{1}{m}\E_{(x,y)\sim\cD}[f^*(x)y] \geq \E_{(x,y)\sim\cD}[h(x)y].
\end{align*}
\end{proof}

We show next that the above bound is tight, and we can find $m$ groups such that requiring achievable equal opportunity reduces the expected true positive count by a factor approaching $m$.
\begin{example}
    Let $\cD$ be a distribution over $\cX\times\{0,1\}$, and assume that $\cX$ is divided into $m$ equal-sized groups.
    Let $\delta< 1/m$ be a small constant, and let $k <  1/(m\delta c)$.
    Let $p:\cX\rightarrow[0,1]$, such that on all $x\in S_1$, $p(x)=k\delta$ and for all $i>1$ we have $p(x)=\delta$ for all $x\in S_i$. This mean that for all $c\leq 1/m$, $\cA(p,c,S_1)=  k\delta mc$ and for all $i>1$,$\cA(p,c,S_i)=\delta mc$.

    The function $h:\cX\rightarrow[0,1]$ that has the maximal true positive count has $h(x)=0$ to all $x\notin S_1$ and $h(x)=c m$ for all $x\in S_1$.
    It satisfies:
    \begin{align*}
        \E_{(x,y)\sim\cD}[h(x)y]=\Pr_{(x,y)\sim\cD}[x\in S_1]\E_{(x,y)\sim\cD}[h(x)y|x\in S_1]=\frac{1}{m}ck\delta m = ck\delta.
    \end{align*}

    Let $f$ be the function maximizing the true positive count while satisfying the achievable equal opportunity constraint. 
    For such $f$, by definition,
    \begin{align*}
        \frac{\E_{(x,y)\sim\cD}[f(x)y|x\in S_1]}{\delta k m c} = \frac{\E_{(x,y)\sim\cD}[f(x)y|x\in S_i]}{\delta m c}.
    \end{align*}
    That is, we must have $\E_{(x,y)\sim\cD}[f(x)y|x\in S_1]=k \E_{(x,y)\sim\cD}[f(x)y|x\in S_i]$.
    For our function $p$, this implies that $f(x)=c$ for all $x$, as then we have $\E_{(x,y)\sim\cD}[f(x)]=c$, $\E_{(x,y)\sim\cD}[f(x)y|x\in S_1]=k\delta c$ and for all $i>1$, $\E_{(x,y)\sim\cD}[f(x)y|x\in S_i]=\delta c$ as required by the equal opportunity condition.

    The true positive count of $f$ is given by:
    \begin{align*}
        \E_{(x,y)\sim\cD}[f(x)y]=\frac{1}{m}k\delta c+\frac{m-1}{m}\delta c\leq\delta c\left(\frac{k}{m} + 1 \right).
    \end{align*}
    When $k$ is large, we get that $\E_{(x,y)\sim\cD}[f(x)y]\approx \frac{1}{m}\E_{(x,y)\sim\cD}[h(x)y]$, as required.
\end{example}
\section{Simulation}\label{app:simulations}
We start by describing the simulations appearing in the main part of the paper.
\paragraph{Figure \ref{fig:intro-single-threshold}:} In this figure we have two histograms, representing $p(x)$ on $x\in S_1$ and $p(x)$ on $x\in S_2$. In this setting, we generated the distributions of $p(x)$ on each group as a Gaussian random variable, both with mean $0.05$ representing hard to predict and relatively rare outcome. For group $S_1$ the variance is chosen to be $0.02$ and for group $S_2$ the variance is chosen to be $0.01$. The values of $p(x)$ are then clipped to $[0,1]$ as it is supposed to be the output of a risk predictor. We remark that this implies that the actual mean is not $0.05$, and it differs between the groups. The cell $p(x)=0$ is omitted from the histogram, as it is huge comparing to the rest of the bins due to the clipping and does not provide any useful information.

The line drawn is a single threshold, which is chosen such that $1\%$ of the population are above the threshold (when $S_1,S_2$ have equal size). In this histogram, over $80\%$ of the individuals above the cutoff belong to group $S_1$.

\paragraph{Figure \ref{fig:price-large-gap}:} This figure demonstrate the fairness requirements where we have a very large gap between the two groups. The distribution of $p(x)$ on $x\in S_1$ has variance $0.00005$ and on $x\in S_2$ has variance $0.000005$ (i.e. on $S_1$ it is $10$ times larger). Both distributions have a mean value of $0.00235$ after clipping (i.e. about 23 positive $y=1$ out of $1000$ individuals). We assume that both groups $S_1,S_2$ are of the same size.

Given a capacity of $1\%$, the allocation maximizing the true positive count allocate all of its resources to individuals from $S_1$. The true positive count is $0.00016$, i.e. in this allocation we find about $1.6$ out of the $23$ positives for every $1000$ people.

Since we have equal base rates, max-min fairness and equal opportunity are the same. The allocation satisfying max-min fairness allocates slightly more than $25\%$ of its resources to $S_1$, and the large majority to $S_2$. The true positive count is $0.00011$, i.e. we find about $1.1$ out of the $23$ trues positives for each $1000$ individuals.

When requiring proportional fairness, the allocation allocates about half of the resources to each of the groups. Therefore, the true positive count is $0.00013$, i.e. we find about $1.3$ out of the $23$ trues positives for each $1000$ individuals. We can see that in this case, the proportional fairness is a ``middle ground'' requirement, between max-min fairness and not requiring any fairness requirements.

\paragraph{Figure \ref{fig:price-large-capacity}:} In this figure we demonstrate the behaviors of the different fairness requirements where we have a larger capacity, of $15\%$ of the population. In addition, instead of a larger variance for $S_1$ we chose a smaller variance with a different mean value, to demonstrate another type of tail dominance. The distribution of $p(x)$ on $S_1$ has mean $0.005$ (before clipping) and variance $0.000005$, while the distribution of $p(x)$ on $S_2$ has mean $0.002$ and variance $0.000008$.
The mean after clipping is $0.0036$, i.e. corresponds to $36$ individuals out of $1000$ with positive outcome $y=1$.

Given such higher capacity, the allocating optimizing the total true positive count without fairness requirements has true positive count of $0.00117$, i.e. finds nearly a third of the $y=1$. It also allocates less than $20\%$ of the resources to individuals from $S_2$.

The allocation satisfying max-min fairness allocate $43\%$ of the capacity to $S_1$, and the rest to $S_2$. The total true positive count is $0.00108$, indicating a smaller gap in the setting of large capacity.

The allocation satisfying proportional fairness allocates about $48\%$ of the capacity to $S_2$, and has true positive rate of $0.001124$. While the gaps are smaller, we can see that the proportional fairness is in between max-min fairness and a completely utilitarian approach.

\end{document}